\documentclass[11pt]{article}

\usepackage[margin=1in]{geometry}
\usepackage{graphicx}
\usepackage{amsmath, amssymb, amsfonts}
\usepackage{xr-hyper} 
\usepackage{hyperref}
\usepackage{url}
\usepackage{color}
\usepackage{cancel}
\usepackage{natbib}
\usepackage{adjustbox}
\usepackage{float}
\usepackage[dvipsnames]{xcolor}
\usepackage{comment}
\usepackage{lscape}
\usepackage{booktabs}
\usepackage{threeparttable}
\usepackage{tablefootnote}
\usepackage{xltabular}
\usepackage{multirow}
\usepackage{makecell}%
\usepackage{subcaption}
\usepackage{pgfplots}
\pgfplotsset{compat=1.14}
\usepackage{anyfontsize}
\usepackage{pgf}

\usepackage[normalem]{ulem}
\newcommand{\stkout}[1]{\ifmmode\text{\sout{\ensuremath{#1}}}\else\sout{#1}\fi}

\usepackage{tikz}
\usepackage{tikz-qtree}
\usetikzlibrary{trees}
\usetikzlibrary{shapes.multipart}
\usetikzlibrary{automata,positioning,decorations.pathmorphing}
\usetikzlibrary{arrows,shapes.arrows,shapes.geometric,shapes.multipart,decorations.pathmorphing,positioning,shapes.swigs, positioning, arrows.meta, swigs}

\allowdisplaybreaks

\newtheorem{theorem}{Theorem}

\newtheorem{lemma}[theorem]{Lemma}

\usepackage{xr}
\usepackage{mathtools}
\usepackage{algorithm}
\usepackage[noend]{algpseudocode}
\usepackage{enumerate}
\usepackage{parskip}
\newcommand{\E}{\mathbb{E}} 

\newcommand{\I}{\mathbb{I}}

\newcommand{\keywords}[1]{\par\addvspace\baselineskip \noindent\textbf{Keywords:}\enspace\ignorespaces#1}

\usepackage{authblk}

\author[1]{Shiyao Xu} 
\author[2]{Razieh Nabi} 
\author[3, 4]{Martin Underwood} 
\author[1]{Daniel Scharfstein} 
\affil[1]{Department of Population Health Sciences, University of Utah, Salt Lake City, UT, U.S.A.}
\affil[2]{Department of Biostatistics and Bioinformatics, Emory University, Atlanta, GA, U.S.A.}
\affil[3]{Warwick Clinical Trials Unit, University of Warwick, Coventry, U.K.}
\affil[4]{University Hospitals Coventry \& Warwickshire, Coventry, U.K.}

\title{Inferring Comprehensive Cohort Causal Effects in the Presence of Unmeasured Confounding and Missing Outcomes}

\date{}

\begin{document}

\maketitle

\begin{abstract}
This paper presents a methodological framework for estimating the comprehensive cohort causal effect (CCCE) in mixed-design clinical studies that combine randomized controlled trials (RCTs) and parallel observational study (OBS). Our approach is designed to evaluate robustness against unmeasured confounding in the OBS arm and to  handle outcomes that are missing at random in either the RCT or OBS arm. By employing a semiparametric theory-based sensitivity analysis framework, we derive the efficient influence function for the CCCE, parameterized by sensitivity parameters. We propose a one-step bias-corrected estimator that allows for flexible modeling and establish conditions under which our CCCE estimator is $\sqrt{n}$-consistent. To illustrate our methods, we apply them to the TOIB study, which evaluates the efficacy and safety of oral versus topical ibuprofen in managing chronic knee pain among older adults. We also evaluate the performance of the proposed methodology in a realistic simulation study.
\end{abstract}

\keywords{Influence function, Patient preference trial, Semi-parametric inference, Sensitivity analysis}

\section{Introduction}
\label{sec:intro}

Randomized Controlled Trials (RCTs) are considered the gold standard for comparing treatments; however, their external validity is often questioned due to the enrollment of individuals who are not representative of the broader treatment-eligible population. To address this critique, methodologies have been developed that aim to generalize or transport inferences from RCTs to target populations \citep{review_lesko_2017, dahabreh2019extending, dahabreh2021study, review_shi_2022, review_Degtiar_2023, review_Colnet_2024}. These methods integrate data from RCTs and observational data sources. As noted by \citep{dahabreh2021study}, the integrated data sources can take a nested form where they come from a common cohort; or a non-nested form where they come from different cohorts.

In this paper, we consider the comprehensive cohort study (CCS) design \citep{olschewski1985CCS}, a nested study design in which individuals can opt to participate in an RCT or parallel observational study (OBS). In the latter study, individuals can choose among the treatment options of the RCT. The CCS design is often referred to as a (partially randomized) patient preference trial, as it accommodates individual preferences. The benefit of the design is that it systematically collects covariates, treatment and outcome information on broader range of eligible individuals than conventional RCTs \citep{brewin1989patient, brocklehurst1997partially, torgerson1998understanding}.

Two major threads of research have proposed methods for analyzing data arising from a CCS design. One thread focuses on estimating effects separately for the RCT and OBS components of the study (\citep{king1997angioplasty,schmoor2008evidence,olschewski1992analysis, schmoor1996randomized, brooks2000predictors, rovers2001generalizability,detre1999coronary, bedi2000assessing, kerry2000routine, jensen2003hormone}).   The other thread is concerned, as we are, with estimating an overall effect (\citep{olschewski1992analysis,brooks2000predictors,lu2019causal, dahabreh2022globalsensitivityanalysisstudies, Parikh_2025}). In particular, \citep{lu2019causal}, \citep{dahabreh2019generalizing}, \citep{dahabreh2022globalsensitivityanalysisstudies} and \citep{Parikh_2025} developed methods for estimating the \textit{comprehensive cohort causal effect} (CCCE), i.e., the average causal effect for the entire cohort. While targeting the same estimand, these latter approaches differ with regards to their identification assumptions. \citep{lu2019causal} and \citep{Parikh_2025} considered two assumptions, each sufficient for identification: explainable selection into RCT versus OBS and no unmeasured confounding in the OBS. \citep{dahabreh2022globalsensitivityanalysisstudies} considered a class of untestable assumptions anchored around explainable selection into RCT versus OBS. While methods have been developed for exploring the impact of deviations from the no unmeasured confounding assumption in observational studies (see \citep{nabi2024semiparametric} and references therein), they have not been combined, as we present here, with RCT data to evaluate the CCCE.

Our methodology is motivated by the TOIB study, a CCS comparing the effect of topical versus oral non-steroidal anti-inflammatory drugs (NSAIDs) on chronic knee pain \citep{underwood2008topical}. An added complexity in studies like the TOIB, which also apply to other similar studies \citep{comprehensive_Ashok_2005, comprehensive_Brinkhaus_2008, comprehensive_Kroz_2017, comprehensive_Glazener_2017}, is that outcomes are missing from both the RCT and OBS arms. This introduces an extra layer of inferential complexity \citep{nabi2023causal}, which we address under a missing at random assumption.  

The paper is organized as follows. 
In Section~\ref{sec:estimand_id}, we introduce notation, provide a mathematical representation of the CCCE, and discuss our identification assumptions. In Section~\ref{sec:infernece}, we describe our semiparametric estimation procedure along with its large sample properties. In Section~\ref{sec:alternatives}, we present an alternative sensitivity analysis framework by extending  \citep{dahabreh2022globalsensitivityanalysisstudies} to accommodate missing outcome data. In Section~\ref{sec:data}, we illustrate our methods using data from the TOIB study to evaluate the effect of oral versus topical ibuprofen on chronic knee pain in older adults. In Section~\ref{sec:sim} we present a simulation analysis to demonstrate the performance of our methodology. Section~\ref{sec:conc} provides a brief discussion. All proofs are deferred to the Appendix.

\section{Notation, Estimands, and Identification Assumptions}
\label{sec:estimand_id} 

\subsection{Notation and estimands}

Let $X$ denote a vector of measured pre-enrollment covariates, $R$ a randomization consent indicator ($1$ for RCT, $0$ for OBS), $T$ a binary treatment indicator, and $Y$ the outcome scheduled to be observed. Let $M$ denote a missing outcome indicator ($M=1$ if $Y$ is observed, $M=0$ if $Y$ is missing).  We denote the observed data in the absence of missingness by $O = (X, R, T, Y)$ and in the presence of missingness by $\widetilde{O} = (X,R,T,M,Y:M=1)$. We let $P$ and $\widetilde{P}$ denote the distribution of $O$ and $\widetilde{O}$, respectively. We assume that we observe $n$ i.i.d. copies of $\widetilde{O}$.

Note that $P$ is characterized by: (i) the cumulative conditional distribution function of outcome $Y$ given $T$, $R$ and $X$, defined as $F_{t, r}(y \,| \,x)=P(Y\leq y\mid T=t, R=r, X=x)$, (ii) the treatment propensity score, defined as $\pi_{t,r}(x) = P(T = t \mid R=r, X=x)$, (iii) the randomization consent propesnity score, defined as $g_r(x) = P(R = r \mid X=x)$, and (iv) the distribution of covariates, denoted by $F(x) = P(X \leq x)$. For notational convenience, we define the conditional regression of a specified function of the outcome $h(Y)$ as $\mu_{t,r}(h(Y); x) = \int h(y) dF_{t, r}(y|x)$. $\widetilde{P}$ is characterized by (ii), (iii), (iv) as well as (i') the cumulative distribution function of outcome $Y$ given $T$, $R$, $X$ and $M=1$, defined as $\widetilde{F}_{t, r}(y|x)=P(Y\leq y\mid T=t, R=r, X=x,M=1)$ and (v) the outcome missingness propensity score, denoted by $\eta_m(x, r, t)=P(M =m \mid X=x, R=r, T=t)$. We define $\widetilde{\mu}_{t,r}(h(Y); x) = \int h(y) d \widetilde{F}_{t, r}(y|x)$.

Let $T(r)$ be the treatment that would be received under study type $r$.  Let $Y(r,t)$ be the outcome that would be observed under selection into study type $r$ and treatment $t$. Throughout, we assume that study type selection has no impact on the outcome above and beyond treatment, so that $Y(r,t)=Y(t)$. The primary estimand is the CCCE, formally defined as $\E[Y(1)-Y(0)]$. Since the identification and estimation arguments for $\E[Y(1)]$ and $\E[Y(0)]$ are analogous, we focus on the generic mean potential outcome
\[
\psi_t=\E[Y(t)], \qquad t\in\{0,1\}.
\]

To characterize the contribution of the randomized and observational components of the study, we also consider the randomized trial causal effect (RTCE), $
\E[Y(1)-Y(0)\mid R=1]$, and the patient preference causal effect (PPCE), $\E[Y(1)-Y(0)\mid R=0]$. By the law of total expectation, $\text{CCCE} = \text{RTCE} \times P(R=1) + \text{PPCE} \times P(R=0)$.

\subsection{Identification assumptions}

We assume that in the randomized trial, there is full compliance with assigned treatment. We further assume: 
\begin{itemize}
    \setlength{\itemsep}{0.cm} 
    
    \item[(A1)\ ] \underline{Consistency}: The observed treatment is same as the counterfactual treatment had study type been set to the actual observed study type, i.e., $T(r)=T$ if $R=r$ for $r=0,1$. The observed outcome is the same as the counterfactual outcome had treatment been set to the actually observed treatment, i.e., $Y(t) = Y$ if $T=t$, for $t=0,1$. 
    
    \item[(A2)\ ] \underline{Positivity}: Treatment assignment has positive support in the RCT and OBS (conditioned on $X$) arms. That is, $0 < P(T=1 \mid R=1) < 1$ and $0 < P(T=1 \mid R=0, X=x) < 1$ for all $x$ in the support of $X$. 
    
    \item[(A3)\ ] \underline{Randomization in RCT}: Treatment assignment is independent of the potential outcomes and covariates in the RCT. That is, $T \perp \{ Y(t), X \} \mid R=1$, for $t = 0, 1$. 

    \item[(A4)\ ] \underline{Confounding Controlled by $\gamma_t$ in OBS}: Violation of no unmeasured confounding in OBS is controlled via $\gamma_t$ as follows: 
\begin{equation}
\label{nounmeasuredconfounding1}
   dF(y(t) \mid T=1-t,R=0,X)  =  dF(y(t) \mid  T=t,R=0,X) \times  \frac{ \exp\{  \gamma_t s_t(y(t)) \} }{ \mu_{t, 0} (\exp\{  \gamma_t s_t(Y(t)) \}; X)} \ , 
    \end{equation}
where $s_t(\cdot)$ is a pre-specified function and $\gamma_t$ is a sensitivity analysis parameter. 

     \item[(A5)\ ] \underline{Missing-At-Random Assumption for Outcome}: $M \perp Y \mid  T, X, R$. 

\end{itemize} 

By Assumption (A3), $\pi_{t, 1}(x) = \pi_{t, 1}$ for all $x$ in the support of the distribution of $X$. 

Assumption (A4) specifies a class of assumptions
indexed by a non-identified sensitivity analysis parameter $\gamma_t$, where $\gamma_t=0$ implies that $T$ is independent of $Y(t)$ given $X$ and $R=0$ (\cite{nabi2024semiparametric}). If $s_t(\cdot)$ is an increasing function, then $\gamma_t>0$ ($\gamma_t <0$) implies that the conditional (on $R=0$ and $X$) distribution of $Y(t)$ for those who opt for treatment $1-t$ is skewed to higher (lower) values of $Y(t)$ than those who opt for treatment $t$.  Using Bayes' rule, we can re-write (\ref{nounmeasuredconfounding1}) as
\begin{align}
\label{nounmeasuredconfounding2}
   \mbox{logit} &  \{ P[T=1-t|R=0,X,Y(t)] \} \nonumber \\
   & = \mbox{logit} \{ P[T=1-t|R=0,X] \} - \log\{\mu_{t, 0} (\exp\{  \gamma_t s_t(Y(t)) \}; X) \}  + \gamma_t  s_t(Y(t))\ .
\end{align}
    Thus, $\gamma_t$ is the conditional (on $R=0$ and $X$) log odds ratio of opting for treatment $1-t$ per unit change in $s_t(Y(t))$. 
    
    Assumptions (A3) and (A4) can be represented by the single-world intervention graph (SWIG) in Figure  \ref{fig:swig}. In this figure, the dashed lines are only present when $r=0$. Specifically, there are unmeasured factors denoted by $U$ that are associated with study type, $T(0)$ and $Y(t)$. The sensitivity analysis parameter $\gamma_t$ posits the association (above and beyond $X$) between $T(0)$ and $Y(t)$ that is induced by $U$.

Assumption (A5) implies that $F_{t, r}(y|x)=\widetilde{F}_{t, r}(y|x)$ and $\mu_{t,r}(h(Y); x) = \widetilde{\mu}_{t,r}(h(Y); x)$.

Under Assumptions (A1-A5), we can identify $\E[Y(t)]$ via the following functional of the observed data distribution, denoted by $\psi_t(\widetilde{P}; \gamma_t)$:
\begin{equation}
\label{eq:parameter_id1}
\psi_t(\widetilde{P}; \gamma_t) = \psi_{t,1}(\widetilde{P}) P(R=1) + \psi_{t,0}(\widetilde{P}; \gamma_t) P(R=0),
\end{equation}
where 
\[
\psi_{t,1}(\widetilde{P}) = \E[ \widetilde{\mu}_{t,1}(Y;X) | R=1]
\]

and
\[
\psi_{t,0}(\widetilde{P}; \gamma_t) = \E \bigg[ \Big\{ \widetilde{\mu}_{t,0}(Y;X) \pi_{t,0}(X) + \frac{ \widetilde{\mu}_{t,0}(Y \! \exp\{  \gamma_t s_t(Y)\} ;X) }{\widetilde{\mu}_{t,0}( \exp\{  \gamma_t s_t(Y)\} ;X) } \pi_{1-t,0}(X)  \Big\} \, \bigg| \, R=0\bigg]. 
\]

See Appendix~\ref{app:proofs_ID} for a proof. 

Note that $\psi_{t, 1}(\widetilde{P})$ and $\psi_{t, 0}(\widetilde{P}; \gamma_t)$ are the identification functionals of $\E[Y(t)|R=1]$ and $\E[Y(t)|R=0]$, which are the  mean of counterfactual outcomes in the subset of RCT and OBS participants. The \textit{comprehensive cohort causal effect} is defined as $\mbox{CCCE}(\gamma_1,\gamma_0) = \psi_1(\widetilde{P}; \gamma_1)-\psi_0(\widetilde{P}; \gamma_0)$;  the \textit{randomized trial causal effect} is defined as $\mbox{RTCE}=\psi_{1, 1}(\widetilde{P})-\psi_{0, 1}(\widetilde{P})$; and the \textit{patient preference causal effect} is defined as $\mbox{PPCE}(\gamma_1,\gamma_0) = \psi_{1, 0}(\widetilde{P}; \gamma_1)-\psi_{0, 0}(\widetilde{P}; \gamma_0)$.  It follows from (\ref{eq:parameter_id1}) that $\mbox{CCCE}(\gamma_1,\gamma_0) = \mbox{RTCE}  P(R=1) + \mbox{PPCE}(\gamma_1,\gamma_0) P(R=0)$.

\section{Estimation and Inference}
\label{sec:infernece}

\subsection{Influence function and remainder term}\label{sec:IF}

\begin{theorem}\label{EIF_miss}
As shown in Appendix~\ref{app:proofs_EIF}, an influence function for $\psi_t(\widetilde{P}; \gamma_t)$ is:
    \begin{align}\label{EIF_miss_eq}
    \phi_t(\widetilde{P}; \gamma_t)(\widetilde{O})
    &=  \underbrace{\upsilon_{t, 1}(\widetilde{P})(\widetilde{O})\times P(R=1)+\upsilon_{t, 0}(\widetilde{P}; \gamma_t)(\widetilde{O})\times P(R=0)}_{\upsilon_{t}(\widetilde{P}; \gamma_t)(\widetilde{O})} - \psi_t(\widetilde{P}; \gamma_t) \ , 
\end{align}
where $\upsilon_{t, 1}(\widetilde{P})(\widetilde{O})$ is given by
\begin{equation}\label{EIF_miss_eq_R1}
    \begin{aligned}
    \upsilon_{t, 1}(\widetilde{P})(\widetilde{O})
    =& \frac{M}{\eta_1(X, R, T)}\times\frac{\I(R=1)}{P(R=1)}\left\{\frac{\I(T=t)}{\pi_{t, 1}(X)}\left(Y-\widetilde{\mu}_{t, 1}(Y; X)\right)+\widetilde{\mu}_{t, 1}(Y; X))\right\} \\
    &\hspace{0.25cm}+\Big\{ 1 - \frac{M}{\eta_{1}(X, R, T)}  \Big\}\times \frac{\I(R=1)}{P(R=1)}\widetilde{\mu}_{t, 1}(Y; X )\ ,
\end{aligned}
\end{equation}
and $\upsilon_{t, 0}(\widetilde{P}; \gamma_t)(\widetilde{O})$ is given by
\begin{equation}
    \begin{aligned}
   \upsilon_{t, 0}(\widetilde{P}; \gamma_t)(\widetilde{O})
    =& \frac{M}{\eta_1(X, R, T)}\Bigg[\frac{\mathbb{I}(R = 0, T = t)}{P(R=0)}\Bigg\{ Y +  \frac{\pi_{1-t,0}(X)}{\pi_{t,0}(X)} \times \frac{\exp(\gamma_t s_t(Y))}{\widetilde{\mu}_{t,0}(\exp(\gamma_t s_t(Y)); X)}  \\
    &\hspace{6cm}\times\left\{Y  - \frac{\widetilde{\mu}_{t,0}(Y\exp(\gamma_t s_t(Y)); X)}{\widetilde{\mu}_{t,0}(\exp(\gamma_t s_t(Y)); X)} \right\} \Bigg\}\\
    &\hspace{2.5cm}+\frac{\mathbb{I}(R=0, T=1-t)}{P(R=0)} \times \frac{\widetilde{\mu}_{t,0}(Y\exp(\gamma_t s_t(Y)); X)}{\widetilde{\mu}_{t,0}(\exp(\gamma_t s_t(Y)); X)}\Bigg]\\
    &\hspace{0.25cm}+\Big\{ 1 - \frac{M}{\eta_{1}(X, R, T)}  \Big\}\Bigg[ \frac{\I(R=0, T=t)}{P(R=0)} \ \widetilde{\mu}_{t,0}(Y; X)\\
    &\hspace{2.5cm}+\frac{\I(R=0, T=1-t)}{P(R=0)}\ \frac{\widetilde{\mu}_{t,0}(Y\exp(\gamma_t s_t(Y)); X)}{\widetilde{\mu}_{t,0}(\exp(\gamma_t s_t(Y)); X)}\Bigg]\ .
\end{aligned}
\end{equation}
\end{theorem}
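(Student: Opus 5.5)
The plan is to compute the pathwise (Gateaux) derivative of $\psi_t(\widetilde{P};\gamma_t)$ along a regular parametric submodel $\widetilde{P}_\epsilon$ with score $S(\widetilde{O})$. I will then show that $\frac{d}{d\epsilon}\psi_t(\widetilde{P}_\epsilon;\gamma_t)|_{\epsilon=0}=\E[\phi_t(\widetilde{O})S(\widetilde{O})]$. The score factorizes according to the components (i'), (ii)–(v) of $\widetilde{P}$:
\[
S = S_X(X)+S_R(R\mid X)+S_T(T\mid R,X)+S_M(M\mid X,R,T)+M\,S_Y(Y\mid X,R,T,M=1),
\]
with each piece mean zero given its conditioning variables. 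Rather than differentiate $\psi_t$ as a whole, I will write it as a sum of simpler functionals:
\[
\psi_t=\E[\I(R=1)\widetilde{\mu}_{t,1}(Y;X)]+\E[\I(R=0)\pi_{t,0}(X)\widetilde{\mu}_{t,0}(Y;X)]+\E[\I(R=0)\pi_{1-t,0}(X)\kappa_t(X)],
\]
where $\kappa_t(X)=\widetilde{\mu}_{t,0}(Ye^{\gamma_t s_t(Y)};X)/\widetilde{\mu}_{t,0}(e^{\gamma_t s_t(Y)};X)$. Rewriting in this form removes the $P(R=r)$ factors, which then reappear only through the normalization in the stated $\upsilon_{t,r}$; the factor $P(R=r)$ times $1/P(R=r)$ cancels. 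Because influence functions add, I will derive one for each term and sum them.

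For each term, the contribution from perturbing the marginal of $(X,R)$ is the centered integrand, e.g.\ $\I(R=0)\pi_{1-t,0}(X)\kappa_t(X)$ minus its mean. Perturbing $\pi_{t,0}$ contributes the term $\I(R=0)\{\I(T=t)-\pi_{t,0}(X)\}\{\widetilde{\mu}_{t,0}(Y;X)-\kappa_t(X)\}$, since $\pi_{1-t,0}=1-\pi_{t,0}$. Combined with the marginal part, this collapses to $\I(R=0,T=t)\widetilde{\mu}_{t,0}+\I(R=0,T=1-t)\kappa_t$, which matches the bracketed plug-in terms in $\upsilon_{t,0}$. For $R=1$, the RCT treatment probability is constant by (A3), and the analogous collapse produces $\I(R=1)\widetilde{\mu}_{t,1}$.

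The outcome-regression perturbation uses the observed-data conditional law given $M=1$. For a generic $h$, the derivative of $\widetilde{\mu}_{t,r}(h(Y);X)$ is $\E[M\,\I(T=t)\{h(Y)-\widetilde{\mu}_{t,r}(h;X)\}S_Y/(\eta_1\pi_{t,r}) \mid X,R=r]$ after weighting. This gives the familiar AIPW term $\frac{M\I(R=1)\I(T=t)}{\eta_1\pi_{t,1}}(Y-\widetilde{\mu}_{t,1})$ for the RCT part, and the same with $R=0$ for the $\pi_{t,0}\widetilde{\mu}_{t,0}$ part. In the latter the $\pi_{t,0}$ cancels, leaving $\frac{M\I(R=0,T=t)}{\eta_1}(Y-\widetilde{\mu}_{t,0})$.

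The main obstacle is the ratio $\kappa_t$. By the quotient rule, its derivative is
\[
\frac{d\widetilde{\mu}(Ye^{\gamma s})-\kappa_t\,d\widetilde{\mu}(e^{\gamma s})}{\widetilde{\mu}(e^{\gamma s})}.
\]
This yields the influence term
\[
\frac{M\I(R=0,T=t)}{\eta_1\pi_{t,0}}\cdot\frac{e^{\gamma_t s_t(Y)}}{\widetilde{\mu}_{t,0}(e^{\gamma_t s_t};X)}(Y-\kappa_t(X)),
\]
multiplied by $\pi_{1-t,0}$, which reproduces the middle term of $\upsilon_{t,0}$. I will check that this term has conditional mean zero given $(X,R=0,T=t,M=1)$, which also confirms that no extra centering is needed.

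Finally, I will reorganize the sum into the stated $M/\eta_1$ versus $(1-M/\eta_1)$ form. I write each plug-in piece $b(X,R,T)$ as $\frac{M}{\eta_1}b+(1-\frac{M}{\eta_1})b$ and verify that the result equals the display. Note that $\E[1-M/\eta_1\mid X,R,T]=0$, so the second group is a mean-zero augmentation. I should also note that it is exactly what the nonparametric tangent space requires once $S_M$ is included: the $\psi_t$ functional does not depend on $\eta$, so the $S_M$-direction must be orthogonal, and the terms vanish under conditioning. I will add a brief remark on this rather than a separate computation. Subtracting $\psi_t$ completes the proof.
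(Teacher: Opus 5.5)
Your proposal is correct, but it reaches the result by a different route than the paper.

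\textbf{The paper's route.} The paper works in two stages. First it derives the full-data (no-missingness) nonparametric influence function $\phi^c_t(P;\gamma_t)(O)$. It does this by perturbing $F(x)$, $g_r$, $\pi_{t,r}$ and $F_{t,r}$ along explicit parametric submodels and matching pathwise derivatives against a generic mean-zero function of $O$. It then invokes Tsiatis's coarsening-at-random representation to write every observed-data influence function as $\frac{M}{\eta_1}\phi^c_t+\{1-\frac{M}{\eta_1}\}a_t(X,R,T)$. Finally it pins down $a_t$ by imposing orthogonality to the missingness tangent space $\{(M-\eta_1)h(X,R,T)\}$, which gives $a_t=\E[\phi^c_t\mid X,R,T]$, i.e.\ the plug-in brackets.

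\textbf{Your route.} You differentiate the observed-data functional directly, written in terms of the $M=1$ regressions $\widetilde\mu_{t,r}$, which already encode (A5). The $1/\eta_1$ weights then come simply from converting the score of $\widetilde F_{t,r}$, which lives only on $\{M=1\}$, into an unconditional expectation. The $(1-M/\eta_1)$ augmentation is pure bookkeeping.

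I checked your pieces:
\begin{itemize}
\item The $(X,R)$ and $\pi_{t,0}$ contributions collapse to $\I(R=0,T=t)\widetilde\mu_{t,0}+\I(R=0,T=1-t)\kappa_t$.
\item The AIPW terms come out as you state, with $\pi_{t,0}$ cancelling in the second.
\item The quotient-rule term has conditional mean zero given $(X,R=0,T=t,M=1)$.
\end{itemize}
Together they sum exactly to $\upsilon_{t,1}P(R=1)+\upsilon_{t,0}P(R=0)-\psi_t$.

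Your remark that $\pi_{t,1}$ is constant by (A3) is unnecessary. The first term of your decomposition does not involve $\pi_{t,1}$ at all. The $\pi_{t,1}(X)=P(T=t\mid R=1,X)$ in the denominator arises from the reweighting, and it equals the constant at the truth under (A3).

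\textbf{What each buys.} Your route is shorter and self-contained, and avoids the tangent-space manipulations. Since the observed-data model under (A3$'$) is nonparametric, it delivers the unique influence function there. The (A3) model is a submodel, so that gradient remains a gradient, which is all ``an influence function'' requires. The paper's route is modular instead. It isolates the full-data influence function and exhibits the whole AIPW class of observed-data influence functions. It is also the natural setting for the paper's efficiency discussion under (A3) and for the non-MAR extensions mentioned in the Discussion.
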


See Appendix~\ref{app:proofs_EIF} for a proof. 

\noindent {\bf Remark 1:} $\phi_t(\widetilde{P}; \gamma_t)(\widetilde{O})$ is the non-parametric influence function for $\psi_t(\widetilde{P}; \gamma_t)$ under a weaker version of Assumption (A3), i.e., $T \perp Y(t) \mid X, R=1$, for $t = 0, 1$ (Assumption (A3') - conditional ignorability of treatment in the RCT). Assumption (A3) places restrictions on the observed data distribution. As shown in Appendix~\ref{app:proofs_EIF}, the efficient influence function is given by (\ref{EIF_miss_eq}) with  $\pi_{t,1}(X)$ replaced by $\pi_{t,1}$. In our implementation, we use (\ref{EIF_miss_eq}) to adjust for potential covariate imbalance that can arise in finite samples \citep{cov_adjust_Benkeser_2021, cov_adjust_jin_2026}. As we will see in Section \ref{est_asymp}, our estimator will, nonetheless, be asymptotically efficient. In Appendix~\ref{app:proofs_R0R1}, we will see that $\upsilon_{t, 1}(\widetilde{P})(\widetilde{O})$ and $\upsilon_{t, 0}(\widetilde{P}; \gamma_t)(\widetilde{O})$ are related to the influence functions for $\psi_{t, 1}(\widetilde{P})$ and $\psi_{t, 0}(\widetilde{P}; \gamma_t)$. 

\noindent {\bf Remark 2:} In Lemma~\ref{remainder} of Appendix~\ref{sec:remainder}, we derive the remainder term $Rem_t(\widetilde{P}^\ast, \widetilde{P}) \equiv \psi_t(\widetilde{P}^\ast; \gamma_t) - \psi_t(\widetilde{P}; \gamma_t) + \E[\phi_t(\widetilde{P}^\ast; \gamma_t)(\widetilde{O})]$, where $\widetilde{P}^\ast$ is any distribution of $\widetilde{O}$. The remainder can be expressed as a sum of expectations involving products of nuisance estimation errors, a structure that is key to establishing the asymptotic distribution of our estimator of $\psi_t(\widetilde{P}; \gamma_t)$. Specifically, $Rem_t(\widetilde{P}^\ast, \widetilde{P})$ depends on products of the errors 
$\widetilde{\mu}^\ast_{t,r}(h(Y); X)-\widetilde{\mu}_{t,r}(h(Y); X)$ with $\pi_{1, r}^\ast(X)-\pi_{1, r}(X)$ and $\eta_1^\ast(X, r, t)-\eta_1(X, r, t)$ for $r=0, 1$. Consequently, asymptotic normality of the estimator for $\psi_t(\widetilde P;\gamma_t)$ can be achieved even when $\widetilde F_{t,r}(y\mid X)$, $\pi_{1,r}(X)$, and $\eta_1(X,r,t)$ are estimated at rates as slow as $n^{-1/4}$. More generally, if one set of nuisance functions is estimated slowly using flexible methods, asymptotic normality is still obtained provided that the resulting product of estimation errors is $o_p(n^{-1/2})$.

\subsection{Modeling and estimation of $\widetilde{P}$}\label{estimateP}

To fix ideas, we propose specific models whose parameters can be estimated at rates fast enough so that our resulting estimators converge to a normal distribution at $\sqrt{n}$-rates. Let $\lVert f(Z)\rVert_{L_2}\coloneq\sqrt{\int f^2(z)dF(z)}$. 

We posit a generalized additive model (GAM) with a logistic link function for $\pi_{1, r}(X)$ and $\eta_1(X, r, t)$ for $r=0, 1$ and $t=0, 1$ \citep{hastie2017generalized}.  Let $\widehat{\pi}_{1, r}(X)$ be the GAM estimator of 
$\pi_{1, r}(X)$.  \cite{horowitz2004nonparametric} showed that
\[
\Big|\Big| \ \widehat{\pi}_{1, r}(X) - \pi_{1, r}(X) \ \Big|\Big|_{L_2} =O_P(n^{-2/5}).
\]
for $r=0, 1$. Similarly, we have $\Big|\Big| \ \widehat{\eta}_1(X, r, t) - \eta_1(X, r, t) \ \Big|\Big|_{L_2} =O_P(n^{-2/5})$ for $r=0, 1$ and $t=0, 1$. 

We posit a single index model \citep{chiang2012new} for $\widetilde{F}_{t, r}(y|X)$. This model assumes that 
\[
\widetilde{F}_{t, r}(y|X)= \widetilde{F}_{t, r}(y,X'\beta_{t, r};\beta_{t, r}),
\]
where $F_{t, r}(y,u;\beta_{t, r})$ is a cumulative distribution function in $y$ for each $u$ ($\widetilde{F}_{t, r}(y,u;\beta_{t, r})=P(Y\leq y\mid T=t, R=r, X'\beta_{t, r}=u,M=1)$), $\beta_{t, r} = (\beta_{1,t, r},\ldots,\beta_{p,t, r})$ is a vector of unknown parameters and, for purposes of identifiability, $\beta_{t, r}$ has norm $1$ \citep{redd2025sensiatrpackageconducting}. We estimate $\widetilde{F}_{t, r}(y,x'\beta_{t, r};\beta_{t, r})$ by 
\[
\widehat{\widetilde{F}}_{t, r}(y,X'\widehat{\beta}_{t, r};\widehat{\beta}_{t, r}) = \frac{ \sum_{T_i=t} \I(Y_i \leq y) K_{\widehat{h}_{t, r}}(X_i'\widehat{\beta}_{t, r} - X'\widehat{\beta}_{t, r}) }{ \sum_{T_i=t} K_{\widehat{h}_{t, r}}(X_i'\widehat{\beta}_{t, r} - X'\widehat{\beta}_{t, r})) },
\]
where $K_h(v) = K(v/h)/h$, $K$ is a kernel function, $\widehat{\beta}_{t, r}$ is an estimator of $\beta_{t, r}$ and $\widehat{h}_{t, r}$ is an estimator of the bandwidth $h_{t, r}$ (see Appendix~\ref{nuisance} for estimation details). We then estimate  $\widetilde{\mu}_{t, 0}\big(Y\exp(\gamma_t s_t( Y)); X\big)$, $\widetilde{\mu}_{t, 0}\big(\exp(\gamma_t s_t(Y)); X\big)$ and $\widetilde{\mu}_{t, r}\big(Y; X\big)$ by
\begin{align*}
    \widehat{\widetilde{\mu}}_{t, 0}\big(Y\exp \{ \gamma_t s_t( Y)\} ; X\big) &= \int y\exp\{\gamma_t s_t(y)\} d \widehat{\widetilde{F}}_{t, 0}(y,X'\widehat{\beta}_{t, 0};\widehat{\beta}_{t, 0}), \text{ and } \\
    \widehat{\widetilde{\mu}}_{t, 0}\big(\exp(\gamma_t s_t(Y)); X\big) &= \int \exp\{\gamma_t s_t(y)\} d \widehat{\widetilde{F}}_{t, 0}(y,X'\widehat{\beta}_{t, 0};\widehat{\beta}_{t, 0}), \text{ and } \\
    \widehat{\widetilde{\mu}}_{t, r}\big(Y; X\big) &= \int yd \widehat{\widetilde{F}}_{t, r}(y,X'\widehat{\beta}_{t, r};\widehat{\beta}_{t, r}). 
\end{align*}

In Appendix~\ref{nuisance}, we establish conditions under which $\bigg|\bigg| \widehat{\widetilde{\mu}}_{t, 0}\big(\exp(\gamma_t s_t(Y)); X\big) - \widetilde{\mu}_{t, 0}\big(\exp(\gamma_t s_t(Y)); X\big) \bigg|\bigg|_{L_2}$, $\bigg|\bigg| \widehat{\widetilde{\mu}}_{t, 0}\big(Y\exp(\gamma_t s_t(Y)); X\big) - \widetilde{\mu}_{t, 0}\big(Y\exp(\gamma_t s_t(Y)); X\big) \bigg|\bigg|_{L_2}$ and $\bigg|\bigg| \widehat{\widetilde{\mu}}_{t, r}\big(Y; X\big) - \widetilde{\mu}_{t, r}\big(Y; X\big) \bigg|\bigg|_{L_2}$ are $O_P(n^{-2/5})$. 

\subsection{Estimation of $\psi_t(\widetilde{P}; \gamma_t)$ and asymptotic properties}\label{est_asymp}

We estimate $\psi_t(\widetilde{P};\gamma_t)$ using a one-step, split sample estimator \citep{kennedy2023semiparametricdoublyrobusttargeted} based on the influence function $\phi_t(\widetilde{P};\gamma_t)(\widetilde{O})$ presented above.
Specifically, we randomly split the observations into $K$ disjoint sets, where $S_i$ denotes the split membership of the $i$th observation (i.e, $S_i \in \{1, \ldots, K\}$). The size of the $k$th disjoint set is denoted by $n_k \approx n/K$ (i.e., $n_k = O(n)$).  We let $\widehat{\widetilde{P}}^{(-k)}$ be an estimator of $\widetilde{P}$ based on observations from all the splits except that of $k$th split. We let $P_n$ denote the empirical distribution of the observed data $\widetilde{O}$ based on $n$ observations, and $P_{n/n_k}^{(-k)}$ be the empirical distribution of the observed data from all the splits except that of $k$th split. 

Our one-step, split sample estimator takes the form:
\begin{equation*}
    	\widehat{\psi}_t(\gamma_t) = \frac{1}{K} \sum_{k=1}^K \underbrace{\left\{\frac{1}{n_k} \sum_{i: S_i=k}\Bigg[ \upsilon_{t, 1} \left(\widehat{\widetilde{P}}^{(-k)}\right) (\widetilde{O}_i)\times P_{n/n_k}^{(-k)}(R=1) +\upsilon_{t, 0} \left(\widehat{\widetilde{P}}^{(-k)} ; \gamma_t\right) (\widetilde{O}_i)\times P_{n/n_k}^{(-k)}(R=0)\Bigg]\right\}}_{\mbox{$k$th Split One-Step Estimator } \left(\widehat{\psi}_t^{(k)}(\gamma_t)\right)}. 
\end{equation*}

Similar to \citet{nabi2024semiparametric}, we apply the tuning-free Huberization procedure \citep{tuning_wang_2021} to avoid extreme values of $\upsilon_{t, 1}\left(\widehat{\widetilde{P}}^{(-k)}\right)(\widetilde{O}_i)$ and $\upsilon_{t, 0}\left(\widehat{\widetilde{P}}^{(-k)} ; \gamma_t\right)(\widetilde{O}_i)$ for some $i \in \{1,\ldots,n\}$. The $k$th split one-step truncated estimator $\widehat{\psi}_t^{(k)\dagger}(\gamma_t)$ is 
\begin{equation}\label{truncate_estimator}
\begin{aligned}
    	\widehat{\psi}_t^{(k)\dagger}(\gamma_t) = \frac{1}{n_k} \sum_{i: S_i=k}&\Bigg[\underbrace{\min\left\{\left|\upsilon_{t, 1}\left(\widehat{\widetilde{P}}^{(-k)} \right)(\widetilde{O}_i)\right|, \widehat\tau_{t, 1}^{(k)}\right\}\text{sign}\left\{\upsilon_{t, 1}\left(\widehat{\widetilde{P}}^{(-k)}\right)(\widetilde{O}_i)\right\}}_{\upsilon_{t, 1}^\dagger\left(\widehat{\widetilde{P}}^{(-k)}\right)(\widetilde{O}_i)}\times P_{n/n_k}^{(-k)}(R=1)\\
        &+\underbrace{\min\left\{\left|\upsilon_{t, 0}\left(\widehat{\widetilde{P}}^{(-k)} ; \gamma_t\right)(\widetilde{O}_i)\right|, \widehat\tau_{t, 0}^{(k)}\right\}\text{sign}\left\{\upsilon_{t, 0}\left(\widehat{\widetilde{P}}^{(-k)} ; \gamma_t\right)(\widetilde{O}_i)\right\}}_{\upsilon_{t, 0}^\dagger\left(\widehat{\widetilde{P}}^{(-k)} ; \gamma_t\right)(\widetilde{O}_i)}\times P_{n/n_k}^{(-k)}(R=0)\Bigg]\ ,
\end{aligned}
\end{equation}
where $\widehat\tau_{t, 0}^{(k)}$ and $\widehat\tau_{t, 1}^{(k)}$ are the non-negative solutions to
\begin{equation}\label{trunc1}
    	\frac{1}{n_k} \sum_{i: S_i=k}\frac{\min\left\{\upsilon_{t, 1}\left(\widehat{\widetilde{P}}^{(-k)}\right)(\widetilde{O}_i)^2, \left(\widehat\tau_{t, 1}^{(k)}\right)^2\right\}}{\left(\widehat\tau_{t, 1}^{(k)}\right)^2}=\frac{\log(n_k)}{n_k}
\end{equation}
and
\begin{equation}\label{trunc2}
    	\frac{1}{n_k} \sum_{i: S_i=k}\frac{\min\left\{\upsilon_{t, 0}\left(\widehat{\widetilde{P}}^{(-k)} ; \gamma_t\right)(\widetilde{O}_i)^2, \left(\widehat\tau_{t, 0}^{(k)}\right)^2\right\}}{\left(\widehat\tau_{t, 0}^{(k)}\right)^2}=\frac{\log(n_k)}{n_k}
\end{equation}
The one-step, split sample truncated estimator $\widehat{\psi}_t^\dagger(\gamma_t)$ is
\begin{equation*}
\widehat{\psi}_t^\dagger(\gamma_t)=\frac{1}{K}\sum_{k=1}^K\widehat{\psi}_t^{(k)\dagger}(\gamma_t)
\end{equation*}

In Appendix \ref{robustness}, we show that $\widehat{\psi}_t^\dagger(\gamma_t)$ is robust to mis-specification of $\pi_{1, r}(X)$ and $\eta_1(X, r, t)$ for $r=0, 1$. That is, $\widehat{\psi}_t^\dagger(\gamma_t)$ is a consistent estimator under correct specification of $\widetilde{F}_{t, r}(y|X)$  for $r=0, 1$. Note that correct specification of $\widetilde{F}_{t, r}(y|X)$ implies correct specification of $\widehat{\widetilde{\mu}}_{t,r}(h(Y); X)$.

In Appendix~\ref{asymptotic}, we prove the following theorem:

\begin{theorem}\label{asymptotic_theorem}
Assume that 
\begin{enumerate}
    \item $|Y|$ and $|\exp(\gamma_ts_t(Y))|$ are bounded in probability,  and 
    \item $\pi_{t, r}(X)$, $\eta_1(X, r, t)$ and $\mu_{t, 0}(\exp(\gamma_t s_t(Y);X)$ are bounded away from zero with probability one. 
\end{enumerate}
Under Assumptions (A1-A5), the modeling assumptions (correct specifications) in Section~\ref{estimateP}, and the estimation procedure described therein, 
$$\sqrt{n} \left\{ \widehat{\psi}_t^\dagger(\gamma_t) - \psi_t (\widetilde{P}; \gamma_t) \right\} \stackrel{D(\widetilde{P})}{\rightarrow} \mathcal{N}(0, \E[\phi_t(\widetilde{P}; \gamma_t)(\widetilde{O})^2]).$$
\end{theorem}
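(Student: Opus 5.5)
The proof follows the standard cross-fitting argument for one-step estimators, as in \citet{kennedy2023semiparametricdoublyrobusttargeted}, applied separately to each split. Huberization and the plug-in weights $P_{n/n_k}^{(-k)}(R=r)$ are handled as extra negligible terms. Write $\widehat{\widetilde{P}}_k=\widehat{\widetilde{P}}^{(-k)}$ and $P_{n_k}$ for the empirical measure on split $k$. For the untruncated $k$th split estimator, first replace $P_{n/n_k}^{(-k)}(R=r)$ by the corresponding quantity implicit in $\widehat{\widetilde{P}}_k$, so that $\widehat\psi^{(k)}_t=P_{n_k}\upsilon_t(\widehat{\widetilde{P}}_k;\gamma_t)$. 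Since $\upsilon_t=\phi_t+\psi_t$, this gives the usual decomposition
\begin{align*}
\widehat\psi^{(k)}_t-\psi_t(\widetilde P;\gamma_t)
&=(P_{n_k}-\widetilde P)\phi_t(\widetilde P;\gamma_t)
+(P_{n_k}-\widetilde P)\{\phi_t(\widehat{\widetilde{P}}_k;\gamma_t)-\phi_t(\widetilde P;\gamma_t)\}
+Rem_t(\widehat{\widetilde{P}}_k,\widetilde P),
\end{align*}
where $Rem_t$ is the remainder of Remark~2 and Lemma~\ref{remainder}. Averaging over $k$, the first terms combine to $(P_n-\widetilde P)\phi_t(\widetilde P;\gamma_t)$. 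The central limit theorem then gives the stated limit, because the variance $\E[\phi_t^2]$ is finite under the boundedness conditions 1--2.

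The empirical-process term is handled by conditioning on the training folds. Given those folds, $\phi_t(\widehat{\widetilde{P}}_k)-\phi_t(\widetilde P)$ is a fixed function. Chebyshev's inequality then shows the term is $O_P(n_k^{-1/2}\|\phi_t(\widehat{\widetilde{P}}_k)-\phi_t(\widetilde P)\|_{L_2})$, so it suffices to prove $L_2$ consistency of the estimated influence function. Conditions 1--2 make $\phi_t$ a Lipschitz function of the nuisances $(\widetilde\mu_{t,r}(h(Y);X),\pi_{1,r},\eta_1,P(R=r))$. This relies on the denominators $\pi$, $\eta_1$ and $\widetilde\mu_{t,0}(e^{\gamma_t s_t})$ being bounded away from zero, and the same must hold for the estimates with probability tending to one, which I would argue from uniform consistency. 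With Lipschitz continuity, the $O_P(n^{-2/5})$ rates of Section~\ref{estimateP} give $o_P(1)$. The remainder term uses Lemma~\ref{remainder}. Each summand is an expectation of a product of an outcome-regression error with either a propensity error $\pi^\ast_{1,r}-\pi_{1,r}$ or a missingness error $\eta_1^\ast-\eta_1$. There are also second-order terms from the ratio $\widetilde\mu(Ye^{\gamma s})/\widetilde\mu(e^{\gamma s})$, which take the form of products of errors in the numerator and denominator. By Cauchy--Schwarz each product is $O_P(n^{-2/5}\cdot n^{-2/5})=O_P(n^{-4/5})=o_P(n^{-1/2})$. For $r=1$ the true $\pi_{t,1}$ is constant while the GAM estimate is not, but this fits the same product bound. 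The plug-in error from $P_{n/n_k}^{(-k)}(R=r)-P(R=r)=O_P(n^{-1/2})$ multiplies $P_{n_k}\upsilon_{t,r}-\psi_{t,r}$, which is $O_P(n^{-1/2})$ plus the terms above. Hence this contribution is $o_P(n^{-1/2})$.

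I expect the main obstacle to be the truncation. I need $P_{n_k}(\upsilon^\dagger_{t,r}-\upsilon_{t,r})=o_P(n^{-1/2})$. The plan is as follows. Under conditions 1--2, with probability tending to one, $\upsilon_{t,r}(\widehat{\widetilde{P}}_k)$ is uniformly bounded by some constant $C$; this holds because the estimated nuisances respect the same bounds asymptotically, and the kernel estimator of $\widetilde F$ yields bounded $\widetilde\mu$ when $Y$ is bounded. Next I would show from equations (\ref{trunc1})--(\ref{trunc2}) that $\widehat\tau^{(k)}_{t,r}\to\infty$. The left side is at least $\min\{1,\overline{\upsilon^2}/\tau^2\}$, where $\overline{\upsilon^2}$ denotes the average of $\upsilon^2$ over split $k$, which is bounded below since $\E[\upsilon^2]>0$. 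The equation therefore forces $\tau^2\gtrsim n_k/\log n_k$. Once $\widehat\tau^{(k)}_{t,r}>C$, no observation is truncated, so the truncated and untruncated estimators coincide with probability tending to one. If the bounded-in-probability assumption only gives tail control, I would fall back on the bound $|P_{n_k}(\upsilon^\dagger-\upsilon)|\le P_{n_k}|\upsilon|\I(|\upsilon|>\tau)\le P_{n_k}\upsilon^2/\tau=O_P(\sqrt{\log n/n})$, which would need slight strengthening. Finally, Slutsky's theorem combines all the pieces, and efficiency follows from Remark~1, since $\phi_t$ evaluated at the truth with $\pi_{t,1}(X)=\pi_{t,1}$ equals the efficient influence function.
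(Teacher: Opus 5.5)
Your proof has the same skeleton as the paper's. Both split the error into four pieces: the truncation error; the leading term $(P_n-\widetilde P)\phi_t(\widetilde P;\gamma_t)$; the empirical-process term (the paper's $R_{n,1}$); and the remainder $R_{n,2}$. The paper handles $R_{n,1}$ as you do, via the sample-splitting argument of \citet{kennedy2023semiparametricdoublyrobusttargeted} plus $L_2$ consistency of the estimated influence function. It bounds $R_{n,2}$ by Cauchy--Schwarz on the product structure of Lemma~\ref{remainder}, using $n^{-2/5}\cdot n^{-2/5}=o(n^{-1/2})$.

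The real difference is the truncation step. The paper simply cites \citet{nabi2024semiparametric} for $\sqrt n(\widehat\psi_t^\dagger-\widehat\psi_t)=o_P(1)$. You instead show directly that if the estimated $\upsilon_{t,r}$ are bounded by some $C$ with probability tending to one, the Huberization threshold diverges and no observation is ever truncated. That is more elementary and self-contained. It does need uniform, not just $L_2$, control keeping $\widehat\pi_{t,r}$ and $\widehat\eta_1$ away from zero. The paper's ``easy to show'' $L_2$ step and its remainder bound, which divide by $\eta_1^\ast$ and $\pi_{t,r}^\ast$, tacitly need the same thing, so you are right to flag it.

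Two small repairs are needed.

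First, the bound you state for the left side of (\ref{trunc1}) goes the wrong way in general. Since $a\mapsto\min\{a,1\}$ is concave, the average is \emph{at most} $\min\{1,\overline{\upsilon^2}/\tau^2\}$. Argue instead as follows. For $\tau<C$, every summand $\min\{\upsilon_i^2/\tau^2,1\}$ is at least $\upsilon_i^2/C^2$, so the left side is at least $\overline{\upsilon^2}/C^2$. This stays bounded away from zero and cannot equal $\log n_k/n_k$. That forces $\widehat\tau^{(k)}\ge C$, and then $(\widehat\tau^{(k)})^2=n_k\overline{\upsilon^2}/\log n_k\to\infty$, as you wanted.

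Second, your separate plug-in argument for the weights $P^{(-k)}_{n/n_k}(R=r)$ is unnecessary, and as written it is not right. The weight error multiplies $P_{n_k}\upsilon_{t,r}\approx\psi_{t,r}$, not $P_{n_k}\upsilon_{t,r}-\psi_{t,r}$, so it is first order. It is neutralized only because the same training-fold proportion sits in the denominator of $\upsilon_{t,r}(\widehat{\widetilde P}^{(-k)})$ and cancels it exactly. That is the reduction you already made in your first step.
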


We estimate the variance of $\widehat{\psi}_t^\dagger(\gamma_t)$ by
$$\frac{1}{nK}\sum_{k=1}^K\left\{\frac{1}{n_k-1}\sum_{i: S_i=k}\left\{\upsilon_t^\dagger\left(\widehat{\widetilde{P}}^{(-k)} ; \gamma_t\right)(\widetilde{O}_i)-\widehat{\psi}_t^{(k)\dagger}(\gamma_t)\right\}^2\right\}$$
where 
$$\upsilon_t^\dagger\left(\widehat{\widetilde{P}}^{(-k)} ; \gamma_t\right)(\widetilde{O}_i)=\upsilon_{t, 1}^\dagger\left(\widehat{\widetilde{P}}^{(-k)}\right)(\widetilde{O}_i)\times P_{n/n_k}^{(-k)}(R=1)+\upsilon_{t, 0}^\dagger\left(\widehat{\widetilde{P}}^{(-k)} ; \gamma_t\right)(\widetilde{O}_i)\times P_{n/n_k}^{(-k)}(R=0)\ .$$
We construct a $95\%$ Wald confidence interval for $\psi_t(\widetilde{P}; \gamma_t)$ as well as a symmetric-t $95\%$ confidence interval using parametric bootstrap \citep{efron1993introduction}. 

In order to obtain a $\sqrt{n}-$consistent estimator of $\psi_t (\widetilde{P}; \gamma_t)$ under mis-specification of $\pi_{t, r}(X)$ and $\eta_1(X, r, t)$, we need to apply parametric estimation methods for $\widetilde{F}_{t, r}(y|X)$ such that $\lVert \widehat{\widetilde{\mu}}_{t, r}(h(Y); X) - \widetilde{\mu}_{t, r}(h(Y); X) \rVert_{L_2}=o_p(n^{-1/2})$ (see Appendix~\ref{asymptotic}, Lemma~\ref{mis_asymp}). 

Under Assumption (A3), the asymptotic variance $\E[\phi_t(\widetilde{P}; \gamma_t)(\widetilde{O})^2]$ equals the variance of the efficient influence function. Therefore, under the conditions of Theorem \ref{asymptotic_theorem}, our estimator for $\psi_t(\widetilde{P}; \gamma_t)$ is semiparametric efficient.

\section{Alternative Estimation Approach}
\label{sec:alternatives}

Another way to identify $\E[Y(t)]$ is to assume individuals in the RCT and OBS are {\em exchangeable} within levels of $X$. That is, $R \perp Y(t) \mid X$. This assumption is commonly applied to generalize RCT results to a larger trial-eligible population \citep{exchange_Stuart_2001, lu2019causal, review_Colnet_2024}. \citet{dahabreh2022globalsensitivityanalysisstudies} embedded this assumption within a sensitivity model of the following form:
\begin{equation}
\label{exchangeability}
   dF(y(t) \mid R=0, X) = dF(y(t) \mid R=1,X)\times\frac{\exp\{  \gamma_t' s_t(y(t)) \} }{ \mu_{t, 1}(\exp(\gamma_t' s_t(Y)); X)},
\end{equation}
where $s_t(\cdot)$ is a specified function and $\gamma_t'$ is a non-identified sensitivity analysis parameter.  Note that exchangeability is obtained when  $\gamma_t'=0$. 
If $s_t(\cdot)$ is an increasing function, then $\gamma_t>0$ ($\gamma_t <0$) implies that the conditional (on $R=0$ and $X$) distribution of $Y(t)$ for those who opt for treatment $1-t$ is skewed to higher (lower) values of $Y(t)$ than those who opt for treatment $t$.  Using Bayes' rule, we can re-write (\ref{nounmeasuredconfounding1}) as
\begin{align*}
   \mbox{logit} &  \{ P[R=0|X,Y(t)] \} \nonumber \\
   & = \mbox{logit} \{ P[R=0|X] \} - \log\{\mu_{t, 1}(\exp(\gamma_t' s_t(Y)); X)\}  + \gamma'_t  s_t(Y(t))\ , 
    \end{align*}
    Thus, $\gamma'_t$ is the conditional (on $X$) log odds ratio of opting for the observational study per unit change in $s_t(Y(t))$.

Under Assumptions (A1-A3), (A5) and (\ref{exchangeability}), $\E[Y(t)]$ is identified according to the following formulae:
\begin{equation}\label{eq:parameter_id_ex}
    \psi_t'(\widetilde{P}; \gamma_t') = \psi_{t,1}(\widetilde{P}) P(R=1)+\psi_{t, 0}'(\widetilde{P}; \gamma_t')P(R=0), 
\end{equation}
where $\psi_{t,1}(\widetilde{P})$ is the same as in (\ref{eq:parameter_id1}), and $\psi_{t, 0}'(\widetilde{P}; \gamma_t')$ is the identification functional of $\E[Y(t)|R=0]$:
\[
    \psi_{t, 0}'(\widetilde{P}; \gamma_t') = \E\left[ \frac{\mu_{t, 1}(Y \exp(\gamma_t' s_t(Y)); X)}{\mu_{t, 1}(\exp(\gamma_t' s_t(Y)); X)} \, \bigg| \, R=0\right]
\]

See Appendix~\ref{app:model_2} for details on the estimation procedures for $\psi_t'(\widetilde{P}; \gamma_t')$ and $\psi_{t, 0}'(\widetilde{P}; \gamma_t')$. Identification and estimation procedure for $\E[Y(t)|R=1]$ is the same as in Section~\ref{sec:IF}. 

The sensitivity models in  (\ref{nounmeasuredconfounding1}) and (\ref{exchangeability}) each identify  $\E[Y(t)]$ without imposing restrictions on the observed data law $\widetilde{P}$.  In Lemma~\ref{onetoone} of Appendix~\ref{app:model_2}, we show that in models (\ref{nounmeasuredconfounding1}) and (\ref{exchangeability}), $\psi_t(\widetilde{P},\gamma_t)$  and $\psi'_t(\widetilde{P},\gamma_t')$ are increasing functions of $\gamma_t$ and $\gamma_t'$, respectively. This implies that, for each $\gamma_t$, there exists a $\gamma_t'$ (possibly infinite) that minimizes $\{\psi_t(\widetilde{P},\gamma_t) - \psi'_t(\widetilde{P},\gamma_t')\}^2$ and vice-versa. Thus, having knowledge about the level of departure from the unmeasured confounding assumption in OBS ($\gamma_t$) will inform the level of deviation from the exchangeability assumption over $R$ ($\gamma_t'$), and vice versa. We illustrate this point in Section~\ref{sec:data}. This is a generalization of the falsification test for external validity and unmeasured confounding in \citet{Parikh_2025}, where we can test for violation of either $R\perp Y(t)|X$ or $T\perp Y(t)|X, R=0$, given the other.

\section{The TOIB Study}
\label{sec:data}

The TOIB study was designed to determine whether general practitioners should advise their older patients with chronic knee pain to use topical or oral NSAIDs \citep{underwood2008topical}. As an equivalence trial, its primary objective was to evaluate whether the effects of advice to use topical NSAIDs were equivalent to those of advice to use oral NSAIDs on knee pain and disability outcomes \citep{toib_underwood_2008}. The study included two components: RCT and OBS. In the RCT, patients were randomized to receive either a recommendation to use a topical or an oral anti-inflammatory drug. In the OBS, patients received a recommendation based on their expressed preference for a topical or oral anti-inflammatory drug. After excluding 22 patients (9 in RCT, 13 in OBS) with missing baseline covariate data, the RCT includes 273 participants: 140 in the oral group and 133 in the topical group and the OBS includes 290 participants: 75 in the oral group and 215 in the topical group. Our goal is to use the proposed methodology to estimate the causal effect of recommendation for topical ($t=1$) versus oral anti-inflammatory drug ($t=0$) on Western Ontario and McMaster Universities Osteoarthritis Index (WOMAC) pain score evaluated at 12 months after enrollment ($Y$). We also report the effect of treatment on $Y$ among RCT and among OBS participants. The effects of advice to use topical versus oral NSAIDs on knee pain were considered equivalent if the $95\%$ confidence interval for the between-group difference in WOMAC pain scores lay entirely within the prespecified equivalence bounds of -10 to 10 points \citep{toib_underwood_2008}. 

The WOMAC pain score in \citet{underwood2008topical} was computed as the average of five questions about pain level (pain walking on a flat surface, pain going up or down stairs, pain at night while in bed, pain sitting or lying, pain standing upright), each recorded as a score ranging from 0 (no pain) to 100 (extreme pain). If the answer to any of the five questions is missing, then the WOMAC pain score is the average of four. Having more than one missing answer is defined as having missing WOMAC pain score. At 12 month, $19.9\%$ of the participants are missing WOMAC pain score. It is believed that the WOMAC pain score is less discriminating at the extremes. Thus we set $s_t(Y)=\Phi(\frac{Y-60}{25})$, where $\Phi$ is the cumulative distribution function of a normal distribution (see Figure~\ref{fig_s_t_y}).

In our analysis, baseline covariates $X$ included age, baseline WOMAC pain score, expected pain a year from now (much/a little worse, about the same, a little/much better/free of pain) and chronic pain grade (grade 1-2, grade 3-4). We specified the regression models for treatment, outcome and missing outcome mechanisms to have additive effects of the baseline covariates. We allowed for non-linear effects of age and pain in the treatment and missing outcome models. Table~\ref{Table1} shows descriptive statistics of $X$ by study enrollment and treatment received. Due to small sample sizes, we were limited in how many covariates we could include in our analysis. For the OBS arm, it is unlikely that the four covariates included in the analysis  fully account for the differences in patient characteristics between treatment groups, leading to issues of unmeasured confounding. In addition, in the RCT arm, we observe covariate imbalance between treatment groups: on average, participants in the topical group have lower baseline WOMAC pain score and lower chronic pain grade than the oral group. Covariate adjustment in the RCT can address such imbalances.

Figure~\ref{fig1} shows the estimated $\E[Y(t)]$, $\E[Y(t)|R=0]$ and $\E[Y(t)|R=1]$ for $t=0, 1$ (solid lines) as a function of $\gamma_1$ and $\gamma_0$. When there is no unmeasured confounding ($\gamma_1=\gamma_0=0$), the estimated difference in WOMAC pain score had all patients in the comprehensive cohort received a recommendation for topical versus oral NSAIDs is $-0.24$ (symmetric-t $95\%$ CI: ($-4.28$, $3.81$)); the estimated difference in WOMAC pain score had all patients in the OBS received topical versus oral NSAIDs is $-0.02$ ($95\%$ CI: ($-5.79$, $5.75$)). Since $\E[Y(t)|R=1]$ for $t=0, 1$ does not depend on $\gamma_t$, the estimated effect of topical versus oral NSAIDs among patients enrolled in the RCT is $-0.13$ ($95\%$ CI: ($-6.18$, $5.92$)), regardless of $\gamma_1$ and $\gamma_0$. In our sensitivity analysis, we restricted the range of $\gamma_t$ to $[-2, 2]$ so that estimated change from baseline in mean WOMAC pain score under the topical or oral recommendation with varying $\gamma_t$ value would not exceed 5 points. 

In Figure~\ref{fig2}, we plot the induced estimates of $\E[Y(t)|T=1-t]$, $\E[Y(t)|T=1-t, R=0]$ and  $\E[Y(t)|T=1-t, R=1]$ as functions of $\gamma_t$. See Appendix~\ref{app:data_analysis} for details of estimation. The induced estimates reflect the average WOMAC score at 12 months under topical (oral) NSAIDs recommendation for those who actually received a recommendation for  oral (topical) NSAIDs in the entire cohort, the OBS arm and the RCT arm, respectively, under different $\gamma_t$ specifications. Subject matter experts could utilize this information to determine plausible range of sensitivity parameters for analysis. 

Another way to understand our choice of sensitivity parameters is to observe how different combinations of $\gamma_1$ and $\gamma_0$ influence treatment effects. Figure~\ref{fig3} shows the contour plots of $\E[Y(1)]-\E[Y(0)]$ and $\E[Y(1)|R=0]-\E[Y(0)|R=0]$ under different values of $\gamma_1$ and $\gamma_0$. The bold black curve represent the contour where treatment effects are estimated to be zero. We include regions of $\gamma_1$ and $\gamma_0$ (defined by the blue curves) that would lead to equivalence results about the effect of topical versus oral NSAIDs on pain level at 12 months, under a equivalence margin of 10 points \citep{toib_underwood_2008}. Since all (almost all) of the $(\gamma_1, \gamma_0)$ sets within the $[-2, 2]$ range would result in equivalence conclusions among the entire cohort (among OBS participants), we demonstrate the robustness of our result against unmeasured confounding in the OBS arm. 

To demonstrate the relationship between $\gamma_t$ and $\gamma_t'$, we plot the estimated $\E[Y(t)]$ and $\E[Y(t)|R=0]$ as a function of $\gamma_t$ (solid lines), and intersect with the estimated $\E[Y(t)]$ and $\E[Y(t)|R=0]$ under $\gamma_t'=0, 0.5, 1$ (dashed horizontal lines) in Figure~\ref{fig4}. The horizational axis coordinates where the lines meet are the corresponding $\gamma_t$ value when $\gamma_t'=0, 0.5, 1$. For instance, $\gamma_1'=0$ corresponds to $\gamma_1=-1.5$. And $\gamma_0'=0$ corresponds to $\gamma_0=-2.2$. That is, if we assume exchangeability with respect to consent into the RCT, there is evidence of unmeasured confounding with respect to the recommendation received in OBS. Specifically, for participants who enroll in OBS, the conditional (on $X$) probability of opting for an oral (topical) recommendation decreases as WOMAC pain score under a topical (oral) recommendation increases. 
Similary, if we assume no unmeasured confounding in OBS, there is evidence that exchangeability with respect to consent into the RCT does not hold.  That is, the conditional (on $X$) probability of selecting the OBS study decreases as WOMAC pain score under a topical (oral) recommendation increases.

\section{Simulation Study}
\label{sec:sim}

We conducted a realistic simulation study to evaluate the performance of our estimation approach for $\E[Y(t)]$, $\E[Y(t)|R=0]$ and $\E[Y(t)|R=1]$. We used data from the TOIB study to build the true observed data generating mechanisms. Specifically, we used the empirical distribution of $X$, the estimated GAM models for treatment and missing outcome mechanisms, and the estimated parameters from beta regression models for scaled outcome ($Y/100$) to generate simulated data. We considered $\gamma_1=(-2, -1.5, -1, 0.5, 0, 0.5, 1, 1.5, 2)$ and $\gamma_0=(-2, -1.5, -1, 0.5, 0, 0.5, 1, 1.5, 2)$. For each choice of $\gamma_t$, we used the observed data distribution and the functional form of $s_t(Y)$ discussed in Section~\ref{sec:data} to compute the true value of $\E[Y(t)]$, $\E[Y(t)|R=0]$ and $\E[Y(t)|R=1]$ using (\ref{eq:parameter_id1}). We considered sample sizes of 563 (the sample size of the original dataset), 1000 and 2000. For each sample size, we simulated 2000 datasets. We evaluated estimation bias, $95\%$ Wald confidence interval coverage and  $95\%$ symmetric-t confidence interval coverage from parametric bootstrap. 

Table~\ref{coverage} presents the result of our simulation study for different $\gamma_t$ and sample sizes. The performance for estimating $\E[Y(t)]$, $\E[Y(t)|R=0]$ and $\E[Y(t)|R=1]$ are listed in columns, with topical and oral treatment reported side by side for each estimand of interest. Regardless of the value of $\gamma_t$, bias is low and decreases with sample size. Wald confidence intervals have coverage close to the nominal level, with the exception of oral NSAIDs for the estimation of $\psi_0(\widetilde{P}; \gamma_0)$ and $\psi_{0, 0}(\widetilde{P}; \gamma_0)$ with larger $\gamma_0$ values. Symmetric-t confidence interval using parametric bootstrap have coverage close to the nominal level.

\section{Discussion}
\label{sec:conc}

In this paper, we developed a semiparametric sensitivity analysis approach to address potential unmeasured confounding in estimating comprehensive cohort causal effects when outcomes are missing at random. The methods can be further extended to address deviations from the missing at random assumption along the lines of \cite{scharfstein2003generalized}.

Assumption (A3) as formulated in (\ref{nounmeasuredconfounding1}) is not  the only way to identify $\E[Y(t)]$.  In Section~\ref{sec:alternatives}, we discussed an alternative class of identfiying assumptions that explore deviations from conditional exchangeability over consent into RCT. Another alternative (\cite{robins2000sensitivity}) would be to replace Assumption (A3) by an assumption that posits that
\begin{align*}
g(\E[Y(t)|T=1-t,R=0,X]) - g(\E[Y(t)|T=t,R=0,X]) = \delta_t(X), 
\end{align*}
where $g(\cdot)$ is a specified function whose range is the entire real line and $\delta_t(X)$ is a non-identified sensitivity analysis function. For tractability of the sensitivity analysis, it is typically assumed that $\delta_t(x) = \delta_t$ for all $x$, which may be unrealistic. In contrast, (\ref{nounmeasuredconfounding1}) implies that 
\begin{align*}
& g(\E[Y(t)|T=1-t,R=0,X]) - g(\E[Y(t)|T=t,R=0,X]) \\
& = g \left(\frac{\E[ Y(t) \exp\{ \gamma_t s_t Y(t) \} |T=t,R=0,X]}{\E[\exp\{ \gamma_t s_t Y(t) \} |T=t,R=0,X]}\right) -  g(\E[Y(t)|T=t,R=0,X]),
\end{align*}
where the right hand side is a function of $X$ and $\gamma_t$. Of course, our approach assumes that the effect of $Y(t)$ on treatment selection in the observational study does not vary by $X$ (see \eqref{nounmeasuredconfounding2}). Ultimately, the approach adopted needs to rely on expert judgment about the underlying assumptions.

In this paper, we assumed full compliance in the RCT, i.e., treatment assigned equals treatment received.  This is plausible in the TOIB study because the intervention was a treatment recommendation provided to the patient by their physician. When there is non-compliance in the RCT, the notion of what constitutes treatment is different between RCT and OBS. In the RCT, it would be treatment assigned and in the OBS it would be treatment received. To align the definition of treatment between the two study types, additional assumptions would be required as treatment received in the RCT would no longer benefit from randomization. As currently formulated, our approach will provide valid inferences if treatment received in the RCT is randomized within levels of X; this is because we allowed  the treatment model in the RCT to depend on $X$. If conditional randomization is suspect, one would then need to explore deviations from this assumption through an RCT-specific  sensitivity analysis model like \eqref{nounmeasuredconfounding1}. 

\newpage

\section*{Acknowledgements}
The research reported in this publication was supported by the National Center for Advancing Translational Sciences of the National Institutes of Health under Award Number UM1 TR004409. The content is solely the responsibility of the authors and does not necessarily represent the official views of the National Institutes of Health. The TOIB  study was funded by the NIHR Health Technology Assessment programme (Award ID 01/09/02). Views expressed in the publication are representative of the authors and not necessarily those of the NIHR or the Department of Health and Social Care.

\section*{Data availability}
The data that support the findings in this paper cannot be shared because of confidentiality and/or data use restrictions. The R code used for the analysis, together with a vignette and sample data, is available at \url{https://github.com/UofUEpiBio/CompCausal}.

\section*{Conflict of Interest}
Martin Underwood has been chief investigator or co-investigator on multiple previous and current research grants from the UK National Institute for Health Research, and is a co-investigator on current grants funded by the Australian NHMRC and Norwegian MRC. He is a director and shareholder of Clinvivo Ltd that provides electronic data collection for health services research. He has accepted honoraria for examining theses, and performing peer review. He receives some salary support from University Hospitals Coventry and Warwickshire. He is a co-investigator on two current and one completed NIHR funded studies that have, or have had, additional support from Stryker Ltd. He has accepted travel expenses and accommodation from the organisers for speaking at academic meetings.

\newpage
\bibliographystyle{plainnat}
\bibliography{references}

\newpage

\begin{figure}
\centering

\begin{tikzpicture}
\tikzset{
  line width=1.5pt,
  outer sep=0pt,
  >={Stealth[length=4mm, width=4mm]},
  ell/.style={
    draw,
    fill=white,
    inner sep=0pt,
    line width=1.5pt,
    minimum size=15mm
  },
  swignode/.style={
    shape=swig vsplit,
    draw,
    fill=white,
    inner sep=2pt,
    minimum width=20mm,
    minimum height=11mm,
    text height=1.5ex,
    text depth=.25ex,
    line width=1pt,
    swig vsplit={
      gap=4pt,
      line color right=black
    }
  }
}

\node[name=x, ell, shape=circle] {$X$};

\node[name=u, ell, below=30mm of x, shape=circle] {$U$};

\node[name=r, swignode, right=15mm of x] {
  \nodepart{left}{$R$}
  \nodepart{right}{$r$}
};

\node[name=t, swignode, right=15mm of r] {
  \nodepart{left}{$T(r)$}
  \nodepart{right}{$t$}
};

\node[name=y, right=15mm of t, ell, shape=circle] {$Y(t)$};

\draw[->]
  (x) edge (r)
  (x) edge[out=30,in=150, dashed] (t)
  (x) edge[out=30,in=150] (y)
  (r) edge (t)
  (u) edge (r)
  (u) edge[red, dashed] (t)
  (u) edge[red] (y)
  (t) edge (y);

\end{tikzpicture}
\caption{SWIG for comprehensive cohort study with RCT and OBS arms.  Dashed arrows into $T(r)$ represent preference-based treatment selection in the OBS arm $(r=0)$. When $r=1$, treatment is randomized, so there is no $X \to T(r)$ or $U \to T(r)$ confounding path. The sensitivity parameter $\gamma_t$ governs the level of unmeasured confounding in the OBS arm, represented by the edge from $U$ to $T(r)$, and $U$ to $Y(t)$.}
\label{fig:swig}
\end{figure}

\begin{figure}[hbt!]
\centering
\includegraphics[width=0.5\textwidth,height=0.5\textheight,keepaspectratio]{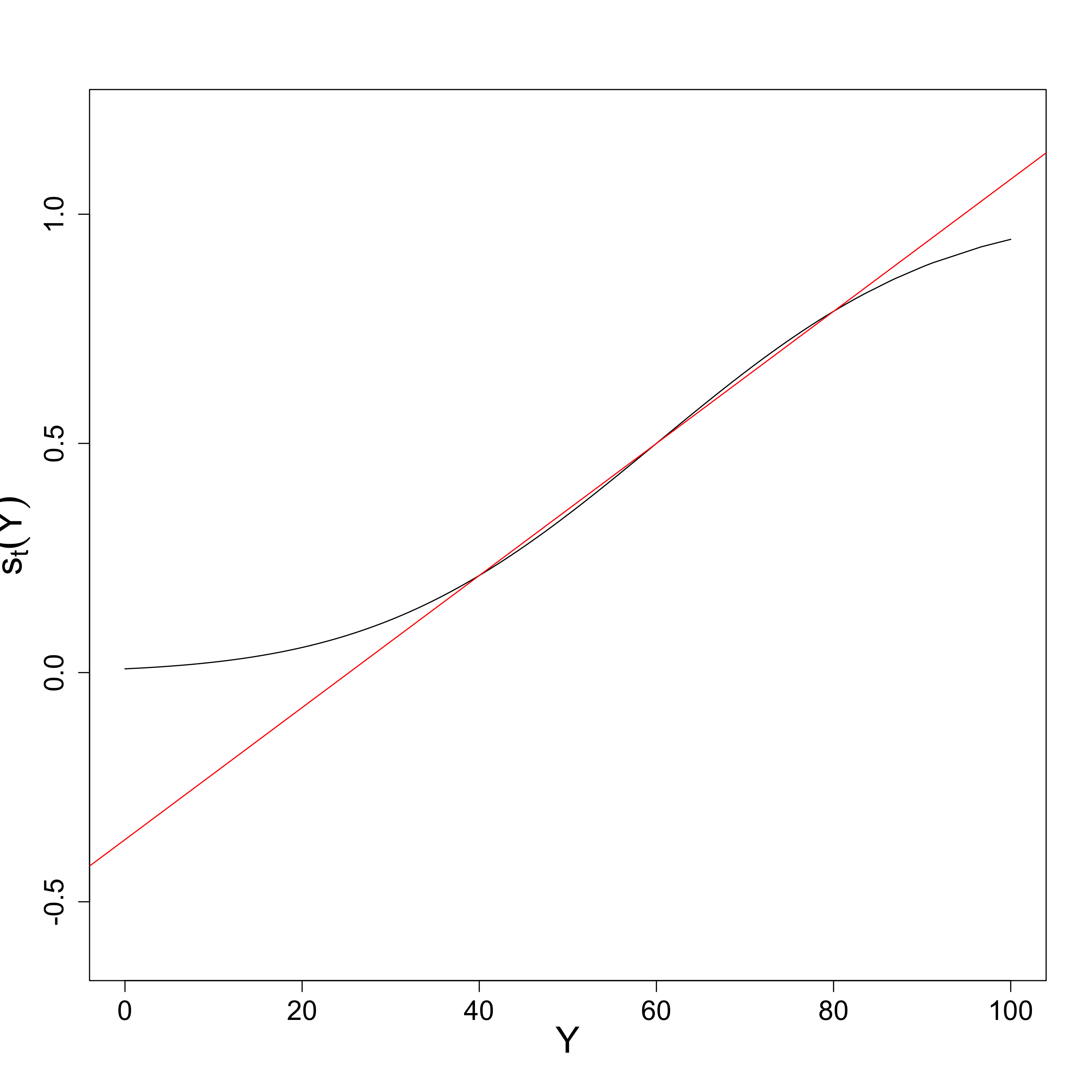}
\caption{$s_t(Y)=\Phi(\frac{Y-60}{25})$ (black curve). Red line is the linear line going through $(40, \Phi(\frac{40-60}{25}))$ and $(80, \Phi(\frac{80-60}{25}))$}
\label{fig_s_t_y}
\end{figure}

\begin{landscape}
\setlength\tabcolsep{5pt}
\begin{xltabular}{\textheight}{|ccccccc|}
\caption{Descriptive Statistics for covariates by study enrollment (RCTs versus OBS) and treatment (topical versus oral ibuprofen).}\label{Table1}\\
\toprule
& \multicolumn{3}{c}{\makecell{\large RCT}} & \multicolumn{3}{c|}{\makecell{\large OBS}}\\
\midrule
\textbf{Covariates} & \textbf{\makecell{ Topical\\ (N=133)}} & \textbf{ \makecell{Oral \\(N=140)}} & \textbf{\makecell{Overall \\ (N=273)}} & \textbf{\makecell{Topical\\ (N=215)}} & \textbf{ \makecell{Oral \\(N=75)}} & \textbf{ \makecell{Overall \\(N=290)}}\\
\midrule
\endfirsthead
\multicolumn{7}{c}%
{\tablename\ \thetable{} -- continued from previous page} \\
\toprule
& \multicolumn{3}{c}{\makecell{\large RCT}} & \multicolumn{3}{c|}{\makecell{\large OBS}}\\
\midrule
\textbf{Covariates} & \textbf{\makecell{ Topical\\ (N=133)}} & \textbf{ \makecell{Oral \\(N=140)}} & \textbf{\makecell{Overall \\ (N=273)}} & \textbf{\makecell{Topical\\ (N=215)}} & \textbf{ \makecell{Oral \\(N=75)}} & \textbf{ \makecell{Overall \\(N=290)}}\\
\midrule
\endhead

\midrule
\multicolumn{7}{|r|}{{Continued on next page}} \\ 
\bottomrule
\endfoot

\bottomrule
\endlastfoot

\multicolumn{1}{|l}{\textbf{Age (in years):} Mean (SD)} & 61.9 (8.4) & 62.3 (8.3) & 62.1 (8.3) & 65.8(8.0) & 64.0(8.8) & 65.3(8.2)\\
\multicolumn{7}{|l|}{\textbf{Expected pain a year from now}} \\
A little/Much worse & 73 (54.9\%)&64 (45.7\%)&137 (50.2\%)&96 (44.7\%)&40 (53.3\%)&136 (46.9\%)\\
About the same&33 (24.8\%)&42 (30.0\%)&75 (27.5\%)&59 (27.4\%)&23 (30.7\%)&82 (28.3\%)\\
A little/Much better/Free of pain&27 (20.3\%)&34 (24.3\%)&61 (22.3\%)&60 (27.9\%)& 12 (16.0\%)&72 (24.8\%)\\
\multicolumn{1}{|l}{\textbf{Baseline WOMAC pain score:} Mean (SD)} &38.0 (19.2) &39.5 (21.6)&38.8 (20.5)&41.3 (20.1)&38.6 (19.4)&40.6 (20.0)\\
\multicolumn{7}{|l|}{\textbf{Chronic pain grade}} \\
Grade 1-2&99 (74.4\%)&90 (64.3\%) & 189 (69.2\%) & 150 (69.8\%) & 50 (66.7\%) & 200 (69.0\%)\\
Grade 3-4&34 (25.6\%) & 50 (35.7\%) & 84 (30.8\%) & 65 (30.2\%) & 25 (33.3\%) & 90 (31.0\%)\\
\end{xltabular}
\end{landscape}

\begin{table}[ht!]
\caption{Simulation results averaged over 2000 replicates. Percent bias, 95\% Wald coverage and 95\% Bootstrap coverage.}
\centering
\scriptsize
\setlength{\tabcolsep}{3pt}
\begin{adjustbox}{max width=\linewidth}
\begin{threeparttable}
\begin{tabular}{r|cc|cc|cc|cc|cc|cc|cc|cc|cc}
\toprule
\multicolumn{1}{c}{} & \multicolumn{6}{c}{Percent bias (\%)} & \multicolumn{6}{c}{95\% Wald coverage} & \multicolumn{6}{c}{95\% Bootstrap coverage} \\
\cmidrule(lr){2-7} \cmidrule(lr){8-13} \cmidrule(lr){14-19}
\multicolumn{1}{c}{$\gamma_t$}
& \multicolumn{2}{c}{$\psi_t(\widetilde{P};\gamma_t)$}
& \multicolumn{2}{c}{$\psi_{t,1}(\widetilde{P})$}
& \multicolumn{2}{c}{$\psi_{t,0}(\widetilde{P};\gamma_t)$}
& \multicolumn{2}{c}{$\psi_t(\widetilde{P};\gamma_t)$}
& \multicolumn{2}{c}{$\psi_{t,1}(\widetilde{P})$}
& \multicolumn{2}{c}{$\psi_{t,0}(\widetilde{P};\gamma_t)$}
& \multicolumn{2}{c}{$\psi_t(\widetilde{P};\gamma_t)$}
& \multicolumn{2}{c}{$\psi_{t,1}(\widetilde{P})$}
& \multicolumn{2}{c}{$\psi_{t,0}(\widetilde{P};\gamma_t)$}\\
\cmidrule(lr){2-3} \cmidrule(lr){4-5} \cmidrule(lr){6-7}
\cmidrule(lr){8-9} \cmidrule(lr){10-11} \cmidrule(lr){12-13}
\cmidrule(lr){14-15} \cmidrule(lr){16-17} \cmidrule(lr){18-19}
\multicolumn{1}{c}{} & Topical & \multicolumn{1}{c}{Oral}
& Topical & \multicolumn{1}{c}{Oral}
& Topical & \multicolumn{1}{c}{Oral}
& Topical & \multicolumn{1}{c}{Oral}
& Topical & \multicolumn{1}{c}{Oral}
& Topical & \multicolumn{1}{c}{Oral} 
& Topical & \multicolumn{1}{c}{Oral}
& Topical & \multicolumn{1}{c}{Oral}
& Topical & Oral\\
\midrule

\multicolumn{19}{c}{\textbf{Sample Size = 563}\tnote{1}} \\
\midrule
-2 & -0.30 & -0.69 & \multirow{9}{*}{-0.30} & \multirow{9}{*}{-0.15} & -0.30 & -1.20 & 0.95 & 0.95 & \multirow{9}{*}{0.95} & \multirow{9}{*}{0.95} & 0.95 & 0.95 & 0.96 & 0.97 & \multirow{9}{*}{0.96} & \multirow{9}{*}{0.96} & 0.96 & 0.97\\
-1.5 & -0.27 & -0.70 &  &  & -0.26 & -1.20 & 0.95 & 0.95 &  &  & 0.95 & 0.95 & 0.96 & 0.97 & & & 0.96 & 0.97\\
-1 & -0.24 & -0.70 &  & & -0.20 & -1.19 & 0.95 & 0.95 &  &  & 0.96 & 0.95 & 0.96 & 0.97 & & & 0.96 & 0.97\\
-0.5 & -0.21 & -0.71 &  &  & -0.14 & -1.19 & 0.95 & 0.95 &  &  & 0.96 & 0.95 & 0.96 & 0.97 & & & 0.96 & 0.97 \\
0  &  -0.17 & -0.74 &  &  & -0.07 & -1.23 & 0.95 & 0.94 &  &  & 0.96 & 0.94 & 0.96 & 0.97 & & & 0.96 & 0.97  \\
0.5 & -0.13 & -0.81 &  &  & 0.01 & -1.35 & 0.95 & 0.94 &  &  & 0.96 & 0.93 & 0.96 & 0.96 & & & 0.96 & 0.97 \\
1  &  -0.09 & -0.94 &  &  & 0.08 & -1.56 & 0.95 & 0.93 &  &  & 0.96 & 0.92 & 0.96 & 0.96 & & & 0.96 & 0.96\\
1.5  &  -0.05 & -1.12 &  &  & 0.15 & -1.87 & 0.95 & 0.92 &  &  & 0.96 & 0.90 & 0.96 & 0.96 & & & 0.96 & 0.96 \\
2  &  -0.02 & -1.36 &  & &  0.20 & -2.26 & 0.95 & 0.90 &  &  & 0.95 & 0.89 & 0.96 & 0.95 & & & 0.96 & 0.94 \\

\midrule
\multicolumn{19}{c}{\textbf{Sample Size = 1000}\tnote{1}} \\
\midrule
-2 & -0.17 & -0.37 & \multirow{9}{*}{-0.18} & \multirow{9}{*}{-0.13} & -0.16 & -0.59 & 0.95 & 0.95 & \multirow{9}{*}{0.95} & \multirow{9}{*}{0.95} & 0.95 &  0.95 & 0.95 & 0.96 & \multirow{9}{*}{0.95} & \multirow{9}{*}{0.95} & 0.96 & 0.96\\
-1.5 & -0.16 & -0.34 &  &  & -0.13 & -0.53 & 0.95 & 0.95 &  &  & 0.95 & 0.95 & 0.96 & 0.96 & & & 0.96 & 0.96\\
-1 & -0.14 & -0.29 &  & & -0.10 & -0.43 & 0.95 & 0.96 &  &  & 0.95 & 0.95 & 0.95 & 0.96 & & & 0.96 & 0.97\\
-0.5 & -0.13 & -0.24 &  &  & -0.07 & -0.33 & 0.95 & 0.96 &  &  & 0.96 & 0.95 & 0.95 & 0.96 & & & 0.96 & 0.96 \\
0  &  -0.11 & -0.20 &  &  & -0.04 & -0.25 & 0.95 & 0.96 &  &  & 0.96 & 0.95 & 0.95 & 0.96 & & & 0.96 & 0.97  \\
0.5 & -0.09 & -0.19 &  &  & -0.01 & -0.23 & 0.95 & 0.95 &  &  & 0.96 & 0.95 & 0.95 & 0.97 & & & 0.96 & 0.96\\
1  &  -0.07 & -0.22 &  &  & 0.03 & -0.28 & 0.95 & 0.95 &  &  & 0.95 & 0.95 & 0.95 & 0.96 & & & 0.96 & 0.96 \\
1.5  &  -0.05 & -0.31 &  &  & 0.06 & -0.44 & 0.95 & 0.94 &  &  & 0.96 & 0.94 & 0.95 & 0.96 & & & 0.96 & 0.96 \\
2  &  -0.03 & -0.45 &  & &  0.10 & -0.69 & 0.95 & 0.94 &  &  & 0.96 & 0.93 & 0.95 & 0.96 & & & 0.96 & 0.96\\

\midrule
\multicolumn{19}{c}{\textbf{Sample Size = 2000}} \\
\midrule
-2 & -0.03 & -0.21 & \multirow{9}{*}{-0.03} & \multirow{9}{*}{0.01} & -0.02 & -0.41 & 0.96 & 0.95 & \multirow{9}{*}{0.96} & \multirow{9}{*}{0.95} & 0.94 & 0.96 & 0.96 & 0.96 & \multirow{9}{*}{0.96} & \multirow{9}{*}{0.95} & 0.95 & 0.97\\
-1.5 & -0.02 & -0.17 &  &  & 0.00 & -0.35 & 0.96 & 0.95 &  &  & 0.95 & 0.96 & 0.96 & 0.96 & & & 0.95 & 0.97\\
-1 & -0.01 & -0.13 &  & & 0.01 & -0.26 & 0.95 & 0.95 &  &  & 0.95 & 0.96 & 0.96 & 0.96 & & & 0.95 & 0.96\\
-0.5 & 0.00 & -0.07 &  &  & 0.03 & -0.15 & 0.95 & 0.95 &  &  & 0.95 & 0.96 &  0.96 & 0.96 & & & 0.95 & 0.96 \\
0  &  0.01 & -0.02 &  &  & 0.04 & -0.05 & 0.95 & 0.95 &  &  & 0.95 & 0.96 &  0.96 & 0.96 & & & 0.95 & 0.96 \\
0.5 & 0.02 & 0.02 &  &  & 0.06 & 0.03 & 0.96 & 0.95 &  &  & 0.95 & 0.95  & 0.96 & 0.96 & & & 0.95 & 0.96\\
1  &  0.03 & 0.04 &  &  & 0.08 & 0.06 & 0.96 & 0.95 &  &  & 0.95 &  0.95 & 0.96 & 0.96 & & & 0.95 & 0.96\\
1.5  &  0.04 & 0.02 &  &  & 0.10 & 0.03 & 0.96 & 0.95 &  &  & 0.95 & 0.95 & 0.96 & 0.96 & & & 0.95 & 0.96 \\
2  &  0.05 & -0.04 &  & &  0.12 & -0.08 & 0.96 & 0.95 &  &  & 0.95 & 0.94 & 0.96 & 0.96 & & & 0.95 &  0.96\\

\bottomrule
\end{tabular}
\begin{tablenotes}
\footnotesize
\item[1] For sample size 563, 2 (0.1\%) simulation trials for topical NSAIDs ($t=1$) and 50 (2.5\%) simulation trials for oral NSAIDs ($t=0$) have bootstrap samples that fail to fit the single index model under both initial coefficients estimation methods. For sample size 1000, 1 (0.05\%) simulation trial for oral NSAIDs ($t=0$) have bootstrap samples that fail to fit the single index model under both initial coefficients estimation methods. We exclude these simulation trials from the reported results. 
\end{tablenotes}
\end{threeparttable}
\end{adjustbox}
\label{coverage}
\end{table}

\begin{figure}[hbt!]
\centering
\begin{subfigure}{0.8\textwidth}
\includegraphics[width=\textwidth, height=\textheight, keepaspectratio]{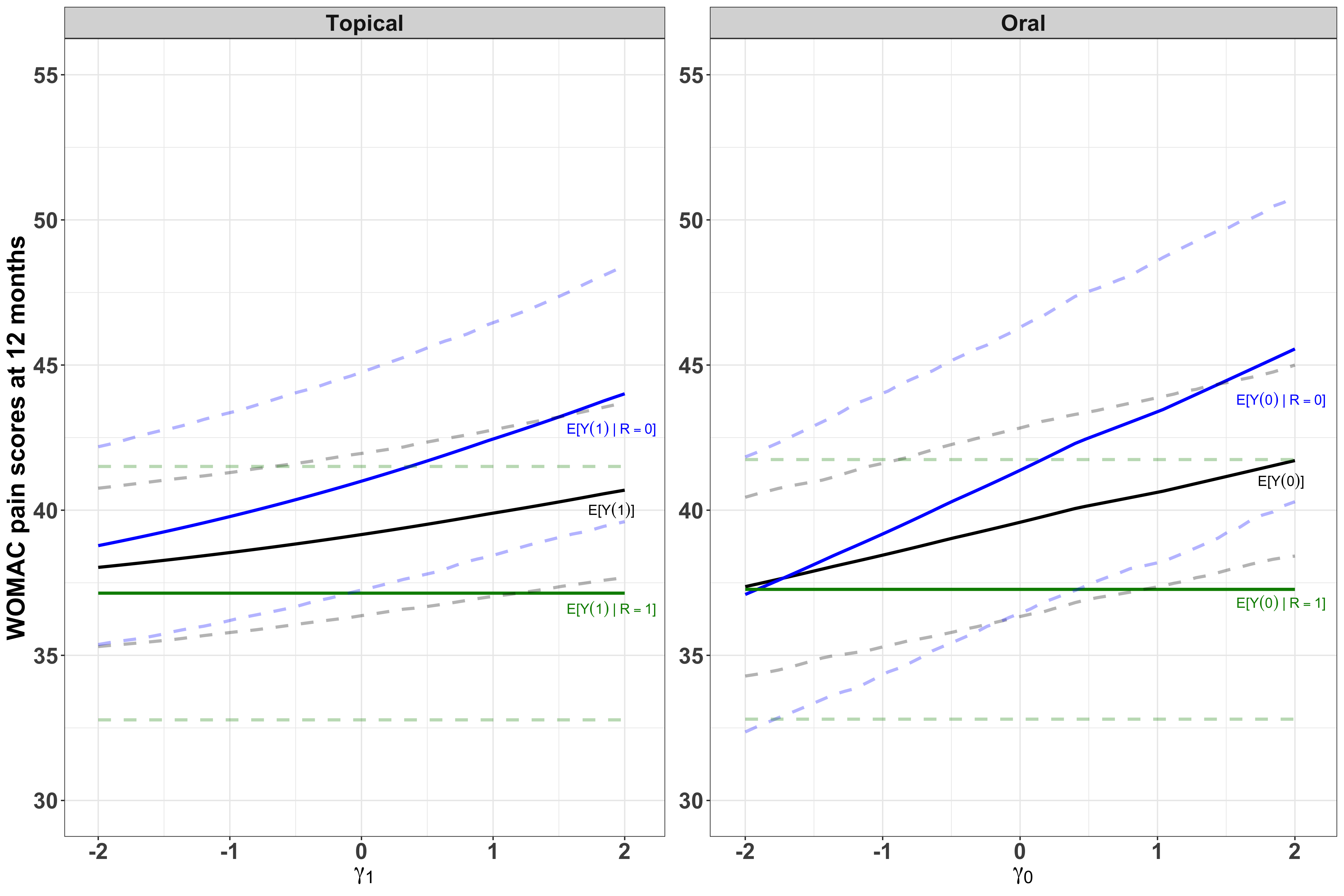}
\caption{Estimates as functions of $\gamma_t$}
\label{fig1}
\end{subfigure}
\hfill
\begin{subfigure}{0.8\textwidth}
\includegraphics[width=\linewidth]{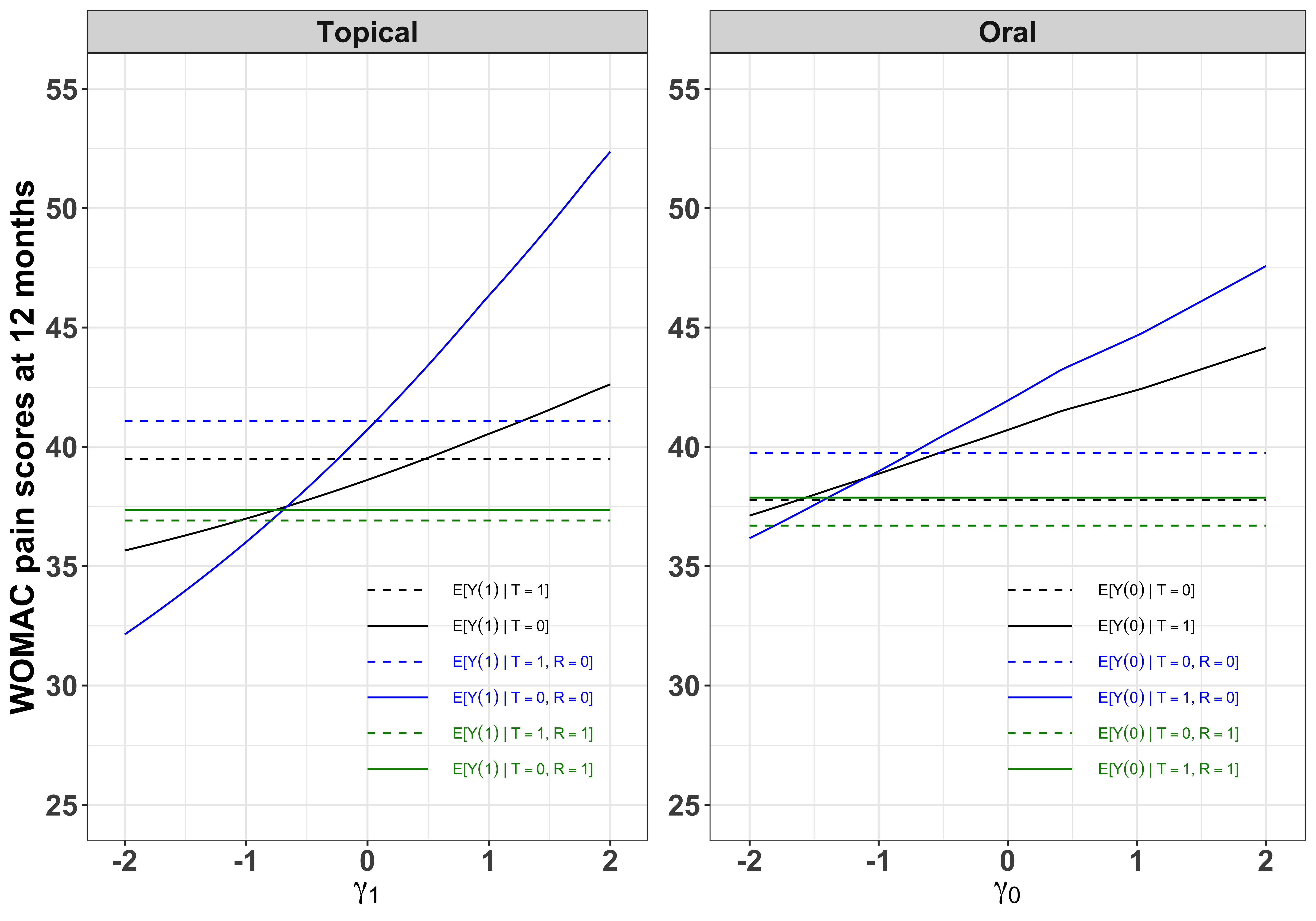}
\caption{Induced estimates}
\label{fig2}
\end{subfigure}
\caption{(a). Estimates of $\E[Y(t)]$, $\E[Y(t)|R=0]$ and $\E[Y(t)|R=1]$ as functions of $\gamma_t$. Pointwise symmetric-t $95\%$ confidence interval (dashed lines) included. (b). Induced estimates of WOMAC pain score at 12 months under receiving a recommendation of topical NSAIDs for those who actually received an oral NSAIDs recommendation (left panel), and under receiving a recommendation of oral NSAIDs for those who actually received a  topical NSAIDs recommendation (right panel) in the entire cohort, OBS, and RCT.} 
\end{figure}

\begin{figure}[hbt!]
\begin{subfigure}{0.5\textwidth}
\centering
\includegraphics[width=\textwidth, height=\textheight, keepaspectratio]{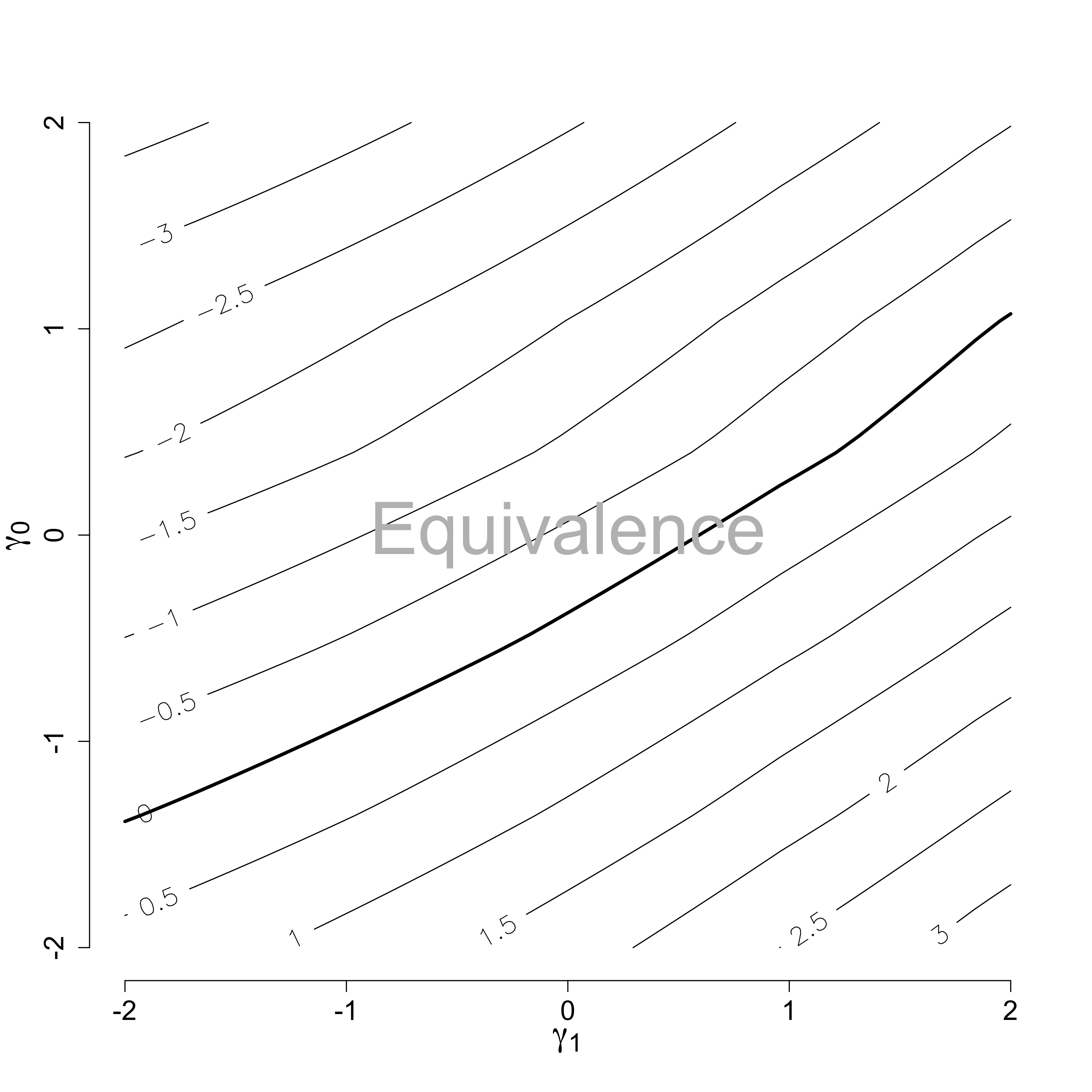}
\caption{Contour plot of $\E[Y(1)]-\E[Y(0)]$}
\end{subfigure}
\hfill
\begin{subfigure}{0.5\textwidth}
\centering
\includegraphics[width=\linewidth]{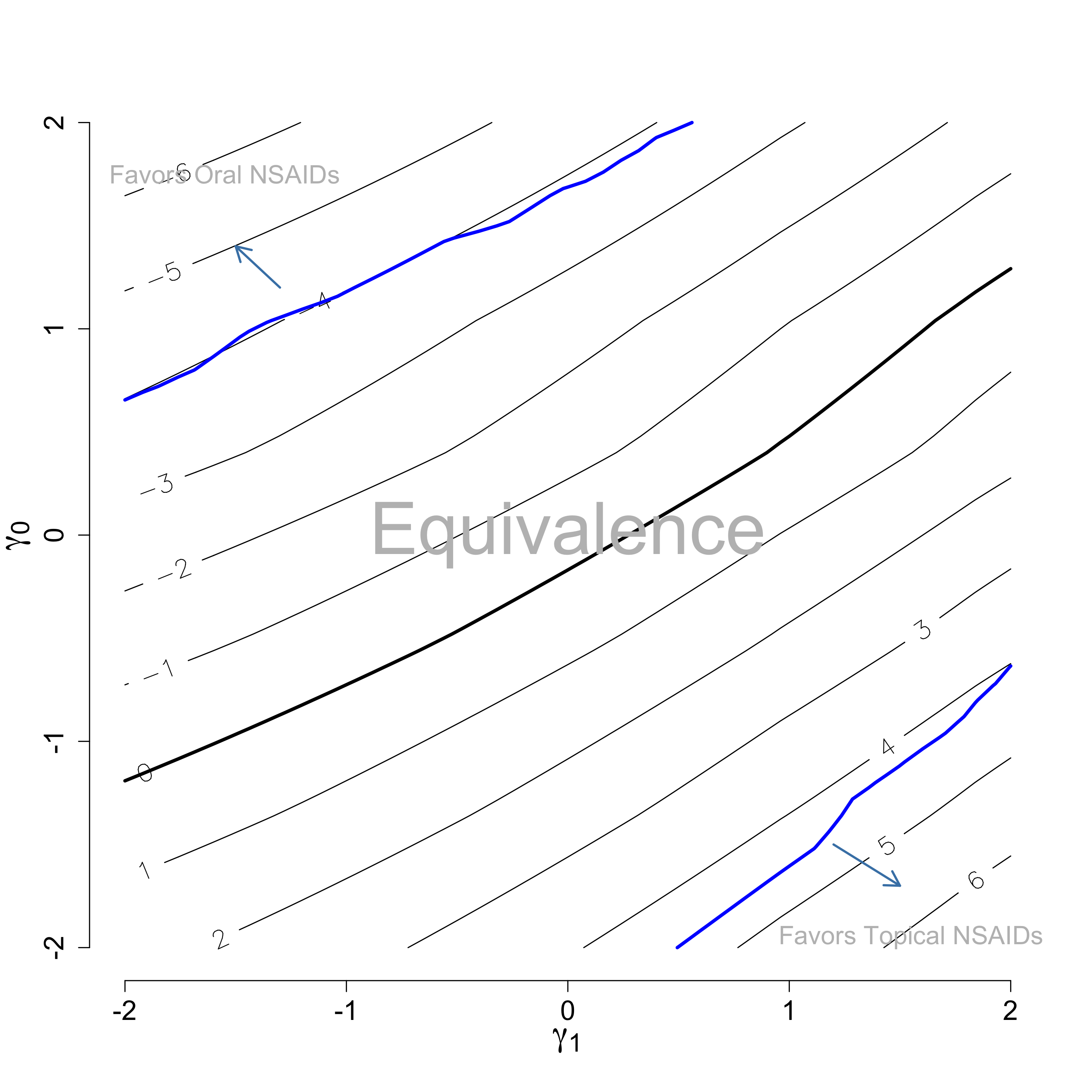}
\caption{Contour plot of $\E[Y(1)|R=0]-\E[Y(0)|R=0]$}
\end{subfigure}
\caption{Contour plot of estimated treatment effects: (a) among all individuals in the comprehensive cohort; (b). among individuals enrolled in the OBS arm, under different values of $\gamma_1$ and $\gamma_0$. The bold black curve indicate combinations of $\gamma_1$ and $\gamma_0$ that would lead to estimated treatment effects of zero. The area defined by the blue curves indicate the range of $\gamma_1$ and $\gamma_0$ with statistically equivalent effects. }
\label{fig3}
\end{figure}

\begin{figure}[hbt!]
\centering
\begin{subfigure}{0.7\textwidth}
\centering
\includegraphics[width=\textwidth, height=\textheight, keepaspectratio]{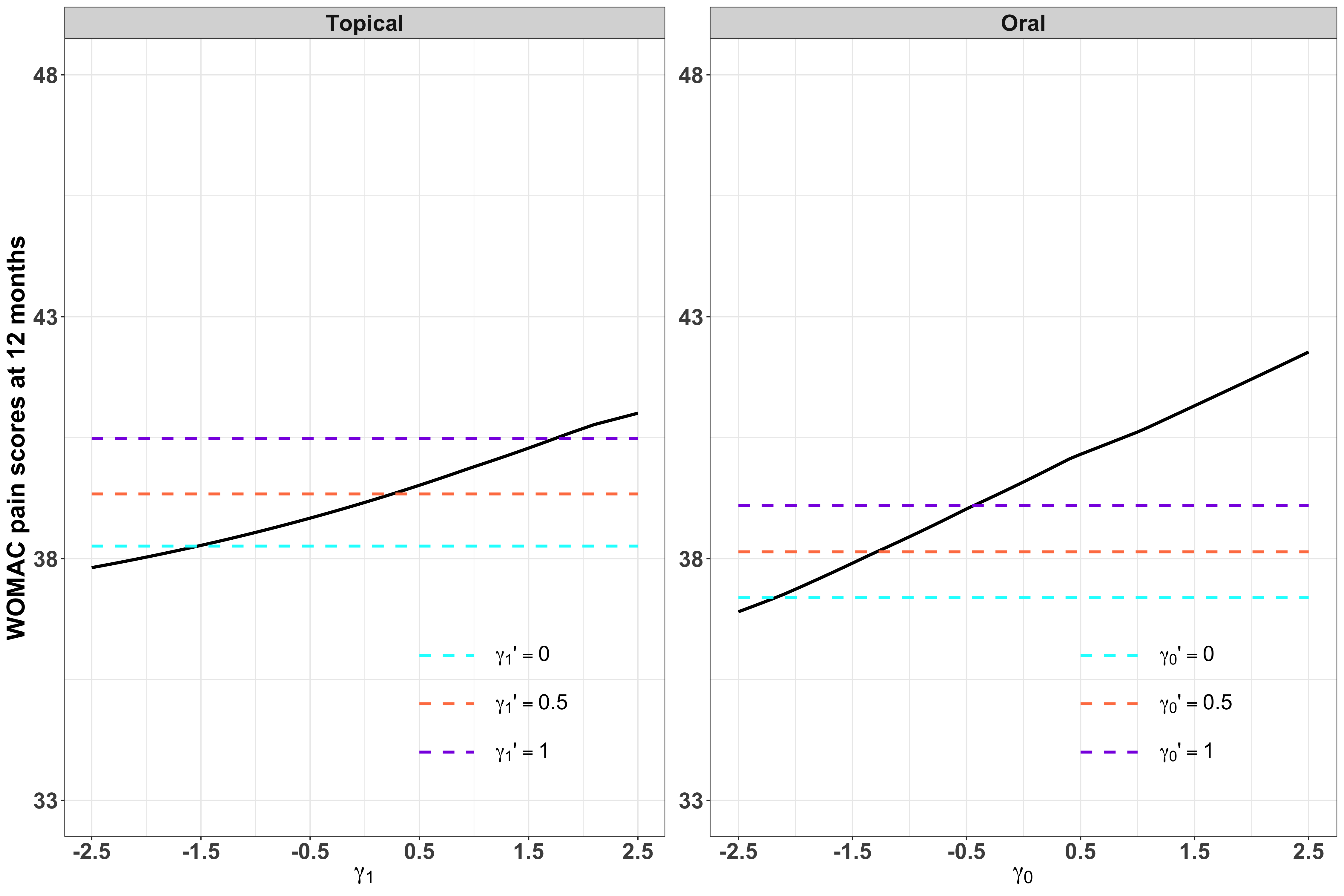}
\caption{Estimated $\E[Y(t)]$ under Assumption (A4) (solid lines) and under (\ref{exchangeability}) (dashed lines). }
\end{subfigure}
\hfill
\begin{subfigure}{0.7\textwidth}
\centering
\includegraphics[width=\linewidth]{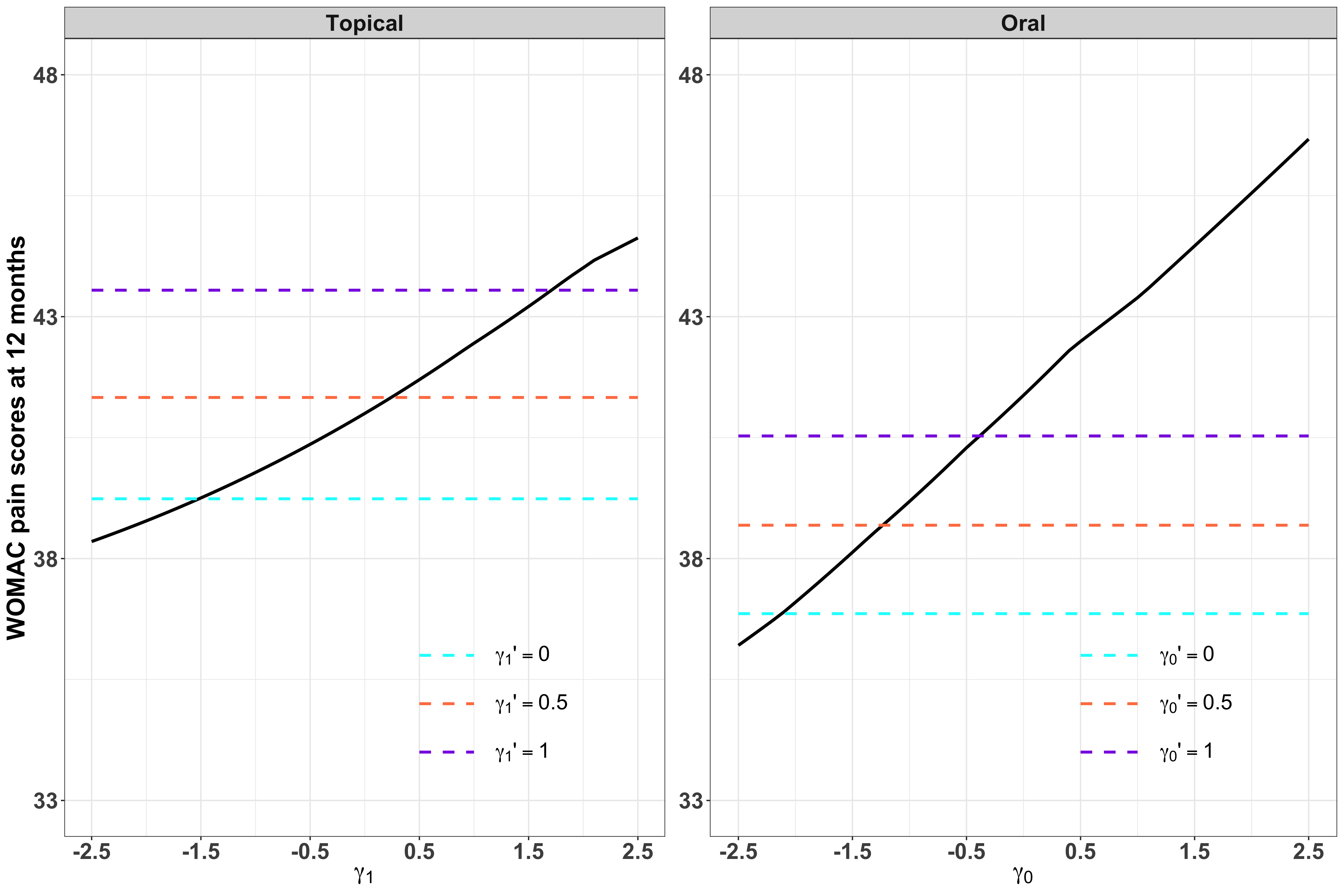}
\caption{Estimated $\E[Y(t)|R=0]$ under Assumption (A4) (solid lines) and under (\ref{exchangeability}) (dashed lines). }
\end{subfigure}
\caption{Relationship between $\gamma_t$ and $\gamma_t'$}
\label{fig4}
\end{figure}

\clearpage
\appendix  

\noindent {\bf \Large Appendix} \vspace{0.25cm}

\noindent The appendix is structured as follows. 
Appendix~\ref{app:notation} offers a summary of the notations used throughout the manuscript. Appendix~\ref{app:proofs} provides proofs for estimation and inference for $\E[Y(t)]$ under Assumptions (A1-A5). Appendix~\ref{app:proofs_R0R1} presents the influence function and main steps for estimating $\E[Y(t)|R=0]$ and $\E[Y(t)|R=1]$. Appendix~\ref{app:model_2} present the estimation procedures for $\E[Y(t)]$ and $\E[Y(t)|R=0]$ under Assumptions (A1-A3), (A5) and (\ref{exchangeability}). Appendix~\ref{app:data_analysis} contains the identification and estimation discussions of $\E[Y(t)|1-t]$,  $\E[Y(t)|1-t, R=0]$ and $\E[Y(t)|1-t, R=1]$.

\section{Notation} 
\label{app:notation} 

We provide a comprehensive list of notation in Table~\ref{tab:notations}. 

\begin{table}[!h]
\begin{center}
\caption{\centering Glossary of terms and notations}
\label{tab:notations}
\addtolength{\tabcolsep}{8pt}
{\small
\begin{tabular}{ ll} 
    \hline  
    \textbf{Symbol} & \textbf{Definition} 
    \\ \hline 
     $T$ & Treatment indicator \\ 
     $R$ & Randomization consent indicator (1 for RCT, 0 for OBS)  \\ 
     $X$ & Baseline covariates \\ 
     $Y$ & Outcome  \\ 
     $M$ & Outcome missingness indicator ($M=1$ if $Y$ is observed, $M=0$ if $Y$ is missing)\\
     $Y(t)$ & Counterfactual outcome under treatment $t$\\
     $F_{t, r}(y|x)$ & $P(Y\leq y\mid T=t, R=r, X=x)$\\
     $\widetilde{F}_{t, r}(y|x)$ & $P(Y\leq y\mid T=t, R=r, X=x, M=1)$\\
     $\mu_{t,r}(h(Y); x)$ & $\E[h(Y) \mid T=t, R=r, X=x]$\\
     $\widetilde{\mu}_{t,r}(h(Y); x)$ & $\E[h(Y) \mid T=t, R=r, X=x, M=1]$\\
     $\pi_{t,r}(x)$ & $P(T = t \mid R=r, X=x)$\\
     $g_r(x)$ & $P(R = r \mid X=x)$\\
     $F(x)$ & $P(X\leq x)$\\
     $\eta_m(x, r, t)$ & $P(M =m \mid X=x, R=r, T=t)$\\
         \hline  
\end{tabular}
}
\end{center}
\end{table}

\section{Proofs for $\E[Y(t)]$ under Assumptions (A1-A5)}
\label{app:proofs}

\subsection{Identification Functional in \eqref{eq:parameter_id1}}
\label{app:proofs_ID}

\begin{align*}
\psi_t(\widetilde{P}; \gamma_t) &=  \E[Y(t)|R=1]P(R=1)+\E[Y(t)|R=0]P(R=0)\\
&= \sum_{r\in\{0, 1\}}\E[\E[Y(t)\mid R=r, X]\mid R=r]P(R=r)
\end{align*}
Under Assumptions (A1), (A3) and (A5), 
\begin{align*}
    \E[Y(t)\mid R=1, X] &= \E[Y(t)\mid T=t, R=1, X]\\
    &= \E[Y(t)\mid T=t, R=1, X, M=1]\\
    &= \widetilde{\mu}_{t, 1}(Y; X)
\end{align*}
and under Assumptions (A1), (A4) and (A5), 
\begin{align*}
    \E[Y(t)\mid R=0, X] &= \E[Y(t)\mid T=t, R=0, X]\pi_{t, 0}(X)+\E[Y(t)\mid T=1-t, R=0, X]\pi_{1-t, 0}(X)\\
    &=\E[Y\mid T=t, R=0, X, M=1]\pi_{t, 0}(X)+\pi_{1-t, 0}(X)\int y(t) dF(y(t)\mid T=1-t, R=0, X=X)\\
    &= \widetilde{\mu}_{t, 0}(Y; X)\pi_{t, 0}(X)+\pi_{1-t, 0}(X)\int \frac{ y(t)\exp\{  \gamma_t s_t(y(t)) \} }{ \mu_{t, 0} (\exp\{  \gamma_t s_t(Y(t)) \}; X)}dF(y(t)\mid T=t, R=0, X)\\
    &= \widetilde{\mu}_{t, 0}(Y; X)\pi_{t, 0}(X)+\pi_{1-t, 0}(X)\int \frac{ y\exp\{  \gamma_t s_t(y) \} }{\widetilde{\mu}_{t, 0} (\exp\{  \gamma_t s_t(Y) \}; X)}dF(y\mid T=t, R=0, X, M=1)\\
    &= \widetilde{\mu}_{t, 0}(Y; X)\pi_{t, 0}(X)+\frac{\widetilde{\mu}_{t, 0}(Y\exp\{  \gamma_t s_t(Y) \}; X)}{\widetilde{\mu}_{t, 0} (\exp\{  \gamma_t s_t(Y) \}; X)}\pi_{1-t, 0}(X)
\end{align*}
Thus, 
\begin{equation*}
\psi_t(\widetilde{P}; \gamma_t) = \psi_{t,1}(\widetilde{P}) P(R=1) + \psi_{t,0}(\widetilde{P}; \gamma_t) P(R=0),
\end{equation*}
where 
\[
\psi_{t,1}(\widetilde{P}) = \E[ \widetilde{\mu}_{t,1}(Y;X) | R=1].
\]
and
\[
\psi_{t,0}(\widetilde{P}; \gamma_t) = \E \bigg[ \Big\{ \widetilde{\mu}_{t,0}(Y;X) \pi_{t,0}(X) + \frac{ \widetilde{\mu}_{t,0}(Y \! \exp\{  \gamma_t s_t(Y)\} ;X) }{\widetilde{\mu}_{t,0}( \exp\{  \gamma_t s_t(Y)\} ;X) } \pi_{1-t,0}(X)  \Big\} | R=0\bigg].
\]

\subsection{Nonparametric Efficient Influence Function} 
\label{app:proofs_EIF}

Here we introduce an alternative to Assumption (A3):
\begin{itemize}
    \setlength{\itemsep}{0.cm} 
    \item[(A3')\ ] \underline{Conditional ignorability under RCT}: Treatment assignment is independent of the potential outcomes given covariates in the RCT . That is, $T \perp Y(t) \mid X, R=1$, for $t = 0, 1$. 
\end{itemize} 

We start by first deriving the non-parametric influence function for $\psi_t(P; \gamma_t)$ in a  model governed by Assumptions (A1-A2), (A3') and (A4) and no missing outcome data. Then we argue that under Assumption (A3), the form of influence function stays the same. Finally, we derive the influence function for $\psi_t(\widetilde{P}; \gamma_t)$ in a  model governed by Assumptions (A1-A5) and missing outcome data.

\textbf{Proof of Theorem \ref{EIF_miss}}
Under Assumptions (A1-A2), (A3') and (A4) and no missing outcome data, we have
\begin{equation*}
\begin{aligned}
    \psi_t(P; \gamma_t) &= \E[\E[Y(t)|X]]\\
    &=\sum_{r\in\{0, 1\}}\E[\E[Y(t)|R=r, X]g_r(X)]\\
    &=\sum_{r\in\{0, 1\}}\E[\E[Y(t)|T=t, R=r, X]g_r(X)]\\
    &=\E[\E[Y|T=t, R=1, X]g_1(X)]\\
    &\hspace{1cm}+\E\bigg[g_0(X)\Big\{\E[Y\mid T=t, R=0, X]\pi_{t, 0}(X)+\E[Y(t)\mid T=1-t, R=0, X]\pi_{1-t, 0}(X)\Big\}\bigg]\\
    &=\E \big[ g_1(X) \, \mu_{t,1}(Y;X)\big] \\
    &\hspace{1cm}+\E \bigg[g_0(X) \Big\{ \mu_{t,0}(Y;X) \pi_{t,0}(X) +  \pi_{1-t,0}(X)\int y(t) dF(y(t)\mid T=1-t, R=0, X=X)  \Big\} \bigg]\\
    &=\E \big[ g_1(X) \, \mu_{t,1}(Y;X)\big] \\
    &\hspace{1cm}+\E \bigg[g_0(X) \Big\{ \mu_{t,0}(Y;X) \pi_{t,0}(X) +  \pi_{1-t,0}(X)\int \frac{ y\exp\{  \gamma_t s_t(y) \} }{ \mu_{t, 0} (\exp\{  \gamma_t s_t(Y(t)) \}; X)}dF(y\mid T=t, R=0, X) \Big\} \bigg]\\
    &=\E \big[ g_1(X) \, \mu_{t,1}(Y;X)\big] \\
    &\hspace{1cm}+\E \bigg[g_0(X) \Big\{ \mu_{t,0}(Y;X) \pi_{t,0}(X) + \frac{ \mu_{t,0}(Y \! \exp\{  \gamma_t s_t(Y)\} ;X) }{\mu_{t,0}( \exp\{  \gamma_t s_t(Y)\} ;X) } \pi_{1-t,0}(X)  \Big\} \bigg]\\
    &=\int \bigg(g_1(x)\mu_{t, 1}(Y; x)+g_0(x)\bigg\{\mu_{t, 0}(Y; x)\pi_{t, 0}(x)+\frac{\mu_{t, 0}(Y\exp\{  \gamma_t s_t(Y) \}; x)}{\mu_{t, 0} (\exp\{  \gamma_t s_t(Y) \}; x)}\pi_{1-t, 0}(x)\bigg\}\bigg)dF(x).
\end{aligned}
\end{equation*}

Consider a statistical model $\mathcal{M}$ containing distributions $P^\ast$: $P^\ast\in \mathcal{M}$. Consistent with the main text, $P$ is the true distribution. $P^\ast$ is characterized by $F^\ast(x) = P^\ast(X \leq x)$, $g^\ast_{r}(x)=P^\ast(R=r\mid X=x)$, $\pi^\ast_{t, r}(x)=P^\ast(T=t\mid R=r, X=x)$ and $F^\ast_{t, r}(y\mid x)=P^\ast(Y\leq y\mid T=t, R=r, X=x)$. Let $\{ P^\ast_{\theta}: P^\ast_{\theta} \in \mathcal{M}\}$. We consider the following parametric submodels:
\begin{align*}
dF^\ast_{\theta}(x)&=dF(x)\{1+\epsilon h(x)\}\\
g^\ast_{r, \theta}(x) &= \frac{\{ g_1(x) \exp \{ \delta l(x) \} \}^r g_0(x)^{1-r}  }{ g_1(x) \exp \{ \delta l(x) \} + g_0(x)}\\
\pi^\ast_{t, 0, \theta}(x) &= \frac{\{ \pi_{1, 0}(x) \exp \{ \alpha j(x) \} \}^t \pi_{0, 0}(x)^{1-t}  }{ \pi_{1, 0}(x) \exp \{ \alpha j(x) \} + \pi_{0, 0}(x)}\\
\pi^\ast_{t, 1, \theta}(x) &= \frac{\{ \pi_{1, 1}(x) \exp \{ \beta m(x) \} \}^t \pi_{0, 1}^{1-t}(x)  }{ \pi_{1, 1}(x) \exp \{ \beta m(x) \} + \pi_{0, 1}(x)}\\
dF^\ast_{t, r, \theta}(y\mid x) &= dF_{t, r}(y\mid x)\{1+\eta_{t, r}k_{t, r}(y, x)\}
\end{align*}
where $\theta=(\epsilon, \delta, \alpha, \eta_{1, 1}, \eta_{0, 1}, \eta_{1, 0}, \eta_{0, 0})$, $\mathbb{E}[h(X)]=0$, $\mathbb{E}[k_{t, r}(Y,X) \mid T=t, R=r, X]=0$, $l(X)$, $m(X)$ and $j(X)$ are any functions of $X$. The associated score functions are $h(X)$, $\{R - g_1(X)\}l(X)$, $R\{T-\pi_{1, 1}(X)\}m(X)$, $(1-R)\{T-\pi_{1, 0}(X)\}j(X)$, $\I(T=1, R=1)k_{1, 1}(Y,X)$, $\I(T=0, R=1)k_{0, 1}(Y,X)$, $\I(T=1, R=0)k_{1, 0}(Y,X)$ and $\I(T=0, R=0)k_{0, 0}(Y,X)$. 


We can express the target parameter as a function of $P^\ast_{\theta}$:
\begin{align*}
    \psi_t(P^\ast_{\theta}; \gamma_t) = \int_x \bigg(&\bigg[\int_y ydF^\ast_{t, 1, \theta}(y\mid x)\bigg]g^\ast_{1, \theta}(x)+\\
    &g^\ast_{0, \theta}(x)\bigg\{\bigg[\int_y ydF^\ast_{t, 0, \theta}(y\mid x)\bigg]\pi^\ast_{t, 0, \theta}(x)+\frac{\int_y y\exp\{  \gamma_t s_t(y) \}dF^\ast_{t, 0, \theta}(y\mid x)}{\int_y \exp\{  \gamma_t s_t(y) \}dF^\ast_{t, 0, \theta}(y\mid x)}\pi^\ast_{1-t, 0, \theta}(x)\bigg\}\bigg)dF^\ast_{\theta}(x).
\end{align*}
The derivative of $\psi_t(P^\ast_{\theta}; \gamma_t)$ with respect to $\epsilon$ evaluated at $\theta=0$ is
\begin{align*}
    \frac{\partial \psi_t(P^\ast_{\theta}; \gamma_t)}{\partial\epsilon}\bigg|_{\theta=0}=\int_x \bigg(\mu_{t, 1}(Y; x)g_1(x)+g_0(x)\bigg\{\mu_{t, 0}(Y; x)\pi_{t, 0}(x)+\frac{\mu_{t, 0}(Y\exp\{  \gamma_t s_t(Y) \}; x)}{\mu_{t, 0} (\exp\{  \gamma_t s_t(Y) \}; x)}\pi_{1-t, 0}(x)\bigg\}\bigg)h(x)dF(x).
\end{align*}
The derivative of $\psi_t(P^\ast_{\theta}; \gamma_t)$ with respect to $\delta$ evaluated at $\theta=0$ is
\begin{align*}
    \frac{\partial \psi_t(P^\ast_{\theta}; \gamma_t)}{\partial\delta}\bigg|_{\theta=0} = \int_x\bigg(\mu_{t, 1}(Y; x)-\bigg\{\mu_{t, 0}(Y; x)\pi_{t, 0}(x)+\frac{\mu_{t, 0}(Y\exp\{  \gamma_t s_t(Y) \}; x)}{\mu_{t, 0} (\exp\{  \gamma_t s_t(Y) \}; x)}\pi_{1-t, 0}(x)\bigg\}\bigg)g_1(x)g_0(x)l(x)dF(x). 
\end{align*}
The derivative of $\psi_t(P^\ast_{\theta}; \gamma_t)$ with respect to $\alpha$ evaluated at $\theta=0$ is
\begin{align*}
    \frac{\partial \psi_t(P^\ast_{\theta}; \gamma_t)}{\partial\alpha}\bigg|_{\theta=0}=\int_x g_0(x)\bigg\{\mu_{t, 0}(Y; x)-\frac{\mu_{t, 0}(Y\exp\{  \gamma_t s_t(Y) \}; x)}{\mu_{t, 0} (\exp\{  \gamma_t s_t(Y) \}; x)}\bigg\}\times(-1)^{t+1}\pi_{1, 0}(x)\pi_{0, 0}(x)j(x)dF(x).
\end{align*}
The derivative of $\psi_t(P^\ast_{\theta}; \gamma_t)$ with respect to $\beta$ evaluated at $\theta=0$ is 0.

The derivative of $\psi_t(P^\ast_{\theta}; \gamma_t)$ with respect to $\eta_{t, 1}$ evaluated at $\theta=0$ is
\begin{align*}
    \frac{\partial \psi_t(P^\ast_{\theta}; \gamma_t)}{\partial\eta_{t, 1}}\bigg|_{\theta=0}=\int_x\bigg[\int_y yk_{t, 1}(y, x)dF_{t, 1}(y\mid x)\bigg]g_1(x)dF(x).
\end{align*}
The derivative of $\psi_t(P^\ast_{\theta}; \gamma_t)$ with respect to $\eta_{t, 0}$ evaluated at $\theta=0$ is
\begin{align*}
    \frac{\partial \psi_t(P^\ast_{\theta}; \gamma_t)}{\partial\eta_{t, 0}}\bigg|_{\theta=0}=\int_x g_0(x)\bigg\{&\bigg[\int_y yk_{t, 0}(y, x)dF_{t, 0}(y\mid x)\bigg]\pi_{t, 0}(x)\\
    +\bigg(&\frac{\int_y y\exp\{  \gamma_t s_t(y) \}k_{t, 0}(y, x)dF_{t, 0}(y\mid x)}{\mu_{t, 0} (\exp\{  \gamma_t s_t(Y) \}; x)}\\
    &-\frac{\mu_{t, 0} (Y\exp\{  \gamma_t s_t(Y) \}; x)}{\mu_{t, 0} (\exp\{  \gamma_t s_t(Y) \}; x)^2}\int_y \exp\{  \gamma_t s_t(y) \}k_{t, 0}(y, x)dF_{t, 0}(y\mid x)\bigg)\pi_{1-t, 0}(x)\bigg\}dF(x).
\end{align*}
The derivative of $\psi_t(P^\ast_{\theta}; \gamma_t)$ with respect to $\eta_{1-t, 1}$ evaluated at $\theta=0$ is 0. 

The derivative of $\psi_t(P^\ast_{\theta}; \gamma_t)$ with respect to $\eta_{1-t, 0}$ evaluated at $\theta=0$ is 0. 

Any mean zero observed data random variable can be expressed as
\begin{align*}
d(O)=&a(X)+\left(\mathbb{I}(R=1)-g_1(X)\right)b(X)+\mathbb{I}(R=0)\left(\mathbb{I}(T=1)-\pi_{1, 0}(X)\right)c(X)\\
&+\mathbb{I}(R=1)\left(\mathbb{I}(T=1)-\pi_{1, 1}(X)\right)w(X)
+\sum_{t, r\in\{0, 1\}}\mathbb{I}(T=t, R=r)d_{t, r}(Y, X)
\end{align*}
where $\mathbb{E}[a(X)]=0$, $\mathbb{E}[d_{t, r}(Y,X)|T=t, R=r, X]=0$, $b(X)$, $w(X)$ and $c(X)$ are unspecified functions of X. To find the non-parametric influence
function, we need to find $a(X)$, $b(X)$, $w(X)$, $c(X)$, and $d_{t, r}(Y, X)$ such that $\mathbb{E}[a(X)h(X)]=\frac{\partial \psi_t(P^\ast_{\theta}; \gamma_t)}{\partial\epsilon}\bigg|_{\theta=0}$, $\mathbb{E}[(R-g_1(X))^2b(X)l(X)]=\frac{\partial \psi_t(P^\ast_{\theta}; \gamma_t)}{\partial\delta}\bigg|_{\theta=0}$, $\mathbb{E}[(1-R)\{T-\pi_{1, 0}(X)\}^2c(X)j(X)]=\frac{\partial \psi_t(P^\ast_{\theta}; \gamma_t)}{\partial\alpha}\bigg|_{\theta=0}$, $\mathbb{E}[R\{T-\pi_{1, 1}(X)\}^2m(X)w(X)]=\frac{\partial \psi_t(P^\ast_{\theta}; \gamma_t)}{\partial\beta}\bigg|_{\theta=0}$, and $\mathbb{E}[\mathbb{I}(T=t, R=r)d_{t, r}(Y, X)k_{t, r}(Y, X)]=\frac{\partial \psi_t(P^\ast_{\theta}; \gamma_t)}{\partial\eta_{t, r}}\bigg|_{\theta=0}$. We can show that 
\begin{align*}
    a(X) &= \mu_{t, 1}(Y; X)g_1(x)+g_0(X)\bigg\{\mu_{t, 0}(Y; X)\pi_{t, 0}(X)+\frac{\mu_{t, 0}(Y\exp\{  \gamma_t s_t(Y) \}; X)}{\mu_{t, 0} (\exp\{  \gamma_t s_t(Y) \}; X)}\pi_{1-t, 0}(X)\bigg\}-\psi_t(P; \gamma_t)\\
    b(X) &= \mu_{t, 1}(Y; X)-\bigg\{\mu_{t, 0}(Y; X)\pi_{t, 0}(X)+\frac{\mu_{t, 0}(Y\exp\{  \gamma_t s_t(Y) \}; X)}{\mu_{t, 0} (\exp\{  \gamma_t s_t(Y) \}; X)}\pi_{1-t, 0}(X)\bigg\}\\
    c(X) &= (-1)^{t+1}\times\bigg\{\mu_{t, 0}(Y; X)-\frac{\mu_{t, 0}(Y\exp\{  \gamma_t s_t(Y) \}; X)}{\mu_{t, 0} (\exp\{  \gamma_t s_t(Y) \}; X)}\bigg\}\\
    w(X) &=0\\
    d_{t, 1}(Y, X) &= \frac{1}{\pi_{t, 1}(X)}(Y-\mu_{t, 1}(Y; X))\\
    d_{t, 0}(Y, X) &= Y-\mu_{t, 0}(Y; X)+\frac{\pi_{1-t, 0}(X)}{\pi_{t, 0}(X)}\frac{\exp\{  \gamma_t s_t(Y) \}}{\mu_{t, 0} (\exp\{  \gamma_t s_t(Y) \}; X)}\left\{Y-\frac{\mu_{t, 0} (Y\exp\{  \gamma_t s_t(Y) \}; X)}{\mu_{t, 0} (\exp\{  \gamma_t s_t(Y) \}; X)}\right\}\\
    d_{1-t, 1}(Y, X) &= 0\\
    d_{1-t, 0}(Y, X) &= 0.
\end{align*}
Thus, the non-parametric influence function for $\psi_t(P; \gamma_t)$ under Assumptions (A1-A2), (A3') and (A4)  no outcome missingness , denoted by $\phi^c_t(P; \gamma_t)(O)$, is:
\begin{equation}
\begin{aligned}\label{IF_proof}
    \phi^c_t(P; \gamma_t)(O) 
    &=  \mathbb{I}(R = 0, T = t) \times \Bigg\{ Y +  \frac{\pi_{1-t,0}(X)}{\pi_{t,0}(X)} \times \frac{\exp(\gamma_t s_t(Y))}{\mu_{t,0}(\exp(\gamma_t s_t(Y)); X)} \\
    &\hspace{6.25cm}  \times \left(Y  - \frac{\mu_{t,0}(Y\exp(\gamma_t s_t(Y)); X)}{\mu_{t,0}(\exp(\gamma_t s_t(Y)); X)} \right) \Bigg\} \\
    &\hspace{0.25cm} + \mathbb{I}(R=0, T=1-t) \times \frac{\mu_{t,0}(Y\exp(\gamma_t s_t(Y)); X)}{\mu_{t,0}(\exp(\gamma_t s_t(Y)); X)} \\
    &\hspace{0.25cm} + \mathbb{I}(R = 1) \times  \bigg\{  \frac{\mathbb{I}(T=t)}{\pi_{t,1}(X)}  \times \big( Y - \mu_{t,1}(Y;X) \big) + \mu_{t,1}(Y;X) \bigg\}  - \psi_t(P; \gamma_t) \ . 
\end{aligned}
\end{equation}

Now we will show that after replacing $\pi_{t,1}(X)$ with $\pi_{t,1}$, $\phi^c_t(P; \gamma_t)$ is the efficient influence function for $\psi_t(P; \gamma_t)$ under Assumptions (A1-A4) with no outcome missingness. Under Assumptions (A1-A2), (A3') and (A4), the full-data likelihood is
\begin{align*}
    p(Y, R, A, X)=&p(X; \zeta_X)p(R|X; \zeta_R)p(A|X, R=1; \zeta_{A, 1})^Rp(A|X, R=0; \zeta_{A, 0})^{1-R}\\
    &\times\prod_{a, r\in\{0, 1\}} p(Y|X, A=a, R=r; \zeta_{Y, a, r})^{\I(A=a, R=r)}. 
\end{align*}
And the tangent space is the direct sum of the tangent spaces spanned by the scores of $\zeta_X$, $\zeta_{A, 1}$, $\zeta_{A, 0}$, $\zeta_{Y, 1, 1}$, $\zeta_{Y, 0, 1}$, $\zeta_{Y, 1, 0}$ and $\zeta_{Y, 0, 0}$:
$$\mathcal{F}=\mathcal{F}_{\zeta_X}\oplus\mathcal{F}_{\zeta_{A, 1}}\oplus\mathcal{F}_{\zeta_{A, 0}}\oplus\mathcal{F}_{\zeta_{Y, 1, 1}}\oplus\mathcal{F}_{\zeta_{Y, 0, 1}}\oplus\mathcal{F}_{\zeta_{Y, 1, 0}}\oplus\mathcal{F}_{\zeta_{Y, 0, 0}}.$$
In comparison, under Assumptions (A1-A4), the full-data likelihood is
\begin{align*}
    p(Y, R, A, X)=&p(X; \zeta_X)p(R|X; \zeta_R)p(A|R=1)^Rp(A|X, R=0; \zeta_{A, 0})^{1-R}\\
    &\prod_{a, r\in\{0, 1\}} p(Y|X, A=a, R=r; \zeta_{Y, a, r})^{\I(A=a, R=r)}. 
\end{align*}
And the tangent space is the same as before minus $\mathcal{F}_{\zeta_{A, 1}}$ (since $p(A|R=1)$ is known, there is no nuisance tangent space): 
$$\mathcal{F}'=\mathcal{F}_{\zeta_X}\oplus\mathcal{F}_{\zeta_{A, 0}}\oplus\mathcal{F}_{\zeta_{Y, 1, 1}}\oplus\mathcal{F}_{\zeta_{Y, 0, 1}}\oplus\mathcal{F}_{\zeta_{Y, 1, 0}}\oplus\mathcal{F}_{\zeta_{Y, 0, 0}}.$$
Thus $\mathcal{F}=\mathcal{F}'\oplus\mathcal{F}_{\zeta_{A, 1}}$. Since Assumption (A3) implies Assumption (A3'), we can treat $\phi^c_t(P; \gamma_t)$ as a ``naive'' influence function under Assumptions (A1-A4) and no outcome missingness. Using Theorem 3.5 in \citet{tsiatis2006semiparametric}, the efficient influence function under Assumptions (A1-A4) with fully observed outcome is
\begin{align*}
    \phi^{c, \text{eff}}_t(P; \gamma_t)&=\Pi(\phi^c_t(P; \gamma_t)|\mathcal{F}')\\
    &=\Pi(\phi^c_t(P; \gamma_t)|\mathcal{F})-\Pi(\phi^c_t(P; \gamma_t)|\mathcal{F}_{\zeta_{A, 1}})\\
    &=\Pi(\phi^c_t(P; \gamma_t)|\mathcal{F})\\
    &=\phi^c_t(P; \gamma_t),
\end{align*}
as $\Pi(\phi^c_t(P; \gamma_t)|\mathcal{F}_{\zeta_{A, 1}})=0$ (since $\frac{\partial \psi_t(P^\ast_{\theta}; \gamma_t)}{\partial\beta}\bigg|_{\theta=0}=0$) and $\phi^c_t(P; \gamma_t)\in\mathcal{F}$. 

Note in the paper, we choose to adopt the exact form of influence function in (\ref{IF_proof}) with covariate adjusted $\pi_{t, 1}(X)$. This is to adjust for chance imbalance in baseline covariates in the  RCT. 
 
Now we will derive the observed data influence function whenoutcomes are  missing at random. We need to derive the space orthogonal to the observed nuisance tangent space, denoted as $\Lambda^{\widetilde{O}, \perp}$, since the observed data influence function belongs to $\Lambda^{\widetilde{O}, \perp}$. We have $\Lambda^{\widetilde{O}, \perp} = \Lambda^{\widetilde{O}, \perp}_{O} \cap \Lambda^{\widetilde{O}, \perp}_{M | O}$: elements of $\Lambda^{\widetilde{O}, \perp}$ live in $\Lambda^{\widetilde{O}, \perp}_{O} $ and are orthogonal to $\Lambda^{\widetilde{O}}_{M | O}$. To derive $\Lambda^{\widetilde{O}, \perp}$, we follow two steps: 

\textbf{STEP 1.} Derive $\Lambda^{\widetilde{O}, \perp}_{O}$. \\ \\
By Theorem 7.2 in \citet{tsiatis2006semiparametric}, 
\begin{align*}
    \Lambda^{\widetilde{O}, \perp}_{O} 
    &= \bigg\{ \frac{M}{\eta_{1}(X, R, T)}  \left\{\phi^c_t(\widetilde{P}; \gamma_t)(O)\oplus\mathcal{F}^{'\perp}\right\} + b(\widetilde{O}) \ : \ \E[b(\widetilde{O})\mid O]=0 \bigg\}. 
\end{align*}
Note that by Assumption (A5), $\phi^c_t(\widetilde{P}; \gamma_t)=\phi^c_t(P; \gamma_t)$. 

Any function of observed data $b(\widetilde{O})$ can be written as:
\begin{align*}
    b(\widetilde{O}) = M \ c(O) + (1-M) \ a_t(\widetilde{P})(X, R, T). 
\end{align*}

Applying the restriction on $b(\widetilde{O})$ to be mean zero given ${X, R, T, Y}$ implies: 
\begin{align*}
    \E[b(\widetilde{O}) \mid X, R, T, Y] 
    &= c(O) \eta_{1}(X, R, T) + a_t(\widetilde{P})(X, R, T)\eta_{0}(X, R, T) = 0. 
\end{align*}
Thus, it must be the case that $c(O) = - \frac{\eta_{0}(X, R, T)}{\eta_{1}(X, R, T)} a_t(\widetilde{P})(X, R, T)$. So, 
\begin{align*}
    b(\widetilde{O}) 
    &= a_t(\widetilde{P})(X, R, T) \Big\{ (1-M) - M \ \frac{\eta_{0}(X, R, T)}{\eta_{1}(X, R, T)}  \Big\} \\
    &= a_t(\widetilde{P})(X, R, T) \Big( 1 - \frac{M}{\eta_{1}(X, R, T)}  \Big).
\end{align*}
Thus, 
\begin{align*}
    \Lambda^{\widetilde{O}, \perp}_{O} 
    &= \bigg\{ \frac{M}{\eta_{1}(X, R, T)}  \left\{\phi^c_t(\widetilde{P}; \gamma_t)(O)\oplus\mathcal{F}^{'\perp}\right\} + a_t(\widetilde{P})(X, R, T) \Big( 1 - \frac{M}{\eta_{1}(X, R, T)}  \Big) \ : \ a_t(\widetilde{P})(X, R, T) \bigg\}\\
    &= \bigg\{ \frac{M}{\eta_{1}(X, R, T)}  \phi^c_t(\widetilde{P}; \gamma_t)(O) + a_t(\widetilde{P})(X, R, T) \Big( 1 - \frac{M}{\eta_{1}(X, R, T)}  \Big) \ : \ a_t(\widetilde{P})(X, R, T) \bigg\}\oplus\frac{M}{\eta_{1}(X, R, T)}\mathcal{F}^{'\perp}
\end{align*}
where $\phi^c_t(\widetilde{P}; \gamma_t)(O)$ is 
\begin{align*}
    \phi^c_t(\widetilde{P}; \gamma_t)(O) 
    &=  \mathbb{I}(R = 0, T = t) \times \Bigg\{ Y +  \frac{\pi_{1-t,0}(X)}{\pi_{t,0}(X)} \times \frac{\exp(\gamma_t s_t(Y))}{\widetilde{\mu}_{t,0}(\exp(\gamma_t s_t(Y)); X)} \\
    &\hspace{6.25cm}  \times \left(Y  - \frac{\widetilde{\mu}_{t,0}(Y\exp(\gamma_t s_t(Y)); X)}{\widetilde{\mu}_{t,0}(\exp(\gamma_t s_t(Y)); X)} \right) \Bigg\} \\
    &\hspace{0.25cm} + \mathbb{I}(R=0, T=1-t) \times \frac{\widetilde{\mu}_{t,0}(Y\exp(\gamma_t s_t(Y)); X)}{\widetilde{\mu}_{t,0}(\exp(\gamma_t s_t(Y)); X)} \\
    &\hspace{0.25cm} + \mathbb{I}(R = 1) \times  \bigg\{  \frac{\mathbb{I}(T=t)}{\pi_{t,1}(X)}  \times \big( Y - \widetilde{\mu}_{t,1}(Y;X) \big) + \widetilde{\mu}_{t,1}(Y;X) \bigg\}  - \psi_t(\widetilde{P}; \gamma_t) \ . 
\end{align*}

\textbf{STEP 2. Ensuring orthogonality of elements in $\Lambda^{\widetilde{O}, \perp}_{O} $ to $\Lambda^{\widetilde{O}}_{M|O}$. }\\ \\
By Assumption (A5), 
\begin{align*}
    \Lambda^{\widetilde{O}}_{M|O}=\Lambda_{M | O} = \Big\{ (M - \eta_{1}(X, R, T)) \  h(X, R, T) \ : \ h(X, R, T) \Big\}
\end{align*}
Now we find $c_t(X, R, T)$ that ensures orthogonality to all elements of $\Lambda_{M|O}$, i.e.,
\begin{align*}
    &\E\bigg[\frac{M}{\eta_{1}(X, R, T)}  \phi^c_t(\widetilde{P}; \gamma_t)(O) (M - \eta_{1}(X, R, T)) 
    \\
    &\hspace{0.5cm}+ a_t(\widetilde{P})(X, R, T) \Big( 1 - \frac{M}{\eta_{1}(X, R, T)}  \Big) (M - \eta_{1}(X, R, T)) \ \bigg| \ X, R, T \bigg] = 0 \ . 
\end{align*}
Therefore, we have: 
\begin{align*}
    &\E\bigg[\frac{M}{\eta_{1}(X, R, T)} \ (M - \eta_{1}(X, R, T)) \ \mathbb{I}(R = 0, T = t) \times Y
    \\
    &\hspace{0.5cm} + \frac{M}{\eta_{1}(X, R, T)} \ (M - \eta_{1}(X, R, T)) \ \mathbb{I}(R = 0, T = t) \times   \frac{\pi_{1-t,0}(X)}{\pi_{t,0}(X)} \times  \frac{\exp(\gamma_t s_t(Y))}{\widetilde{\mu}_{0,t}(\exp(\gamma_t s_t(Y)); X)} \\ 
    &\hspace{4cm} \times\Big( Y - \frac{ \widetilde{\mu}_{t,0}(Y\exp(\gamma_t s_t(Y)); X)}{\widetilde{\mu}_{t,0}(\exp(\gamma_t s_t(Y)); X)} \Big) 
    \\
    &\hspace{0.5cm} + \frac{M}{\eta_{1}(X, R, T)} \ (M - \eta_{1}(X, R, T)) \ \mathbb{I}(R=0, T=1-t) \times \frac{\widetilde{\mu}_{t,0}(Y\exp(\gamma_t s_t(Y)); X)}{\widetilde{\mu}_{t,0}(\exp(\gamma_t s_t(Y)); X)} 
    \\
    &\hspace{0.5cm} + \frac{M}{\eta_{1}(X, R, T)} \ (M - \eta_{1}(X, R, T)) \ \mathbb{I}(R = 1) \times \bigg\{ \frac{\mathbb{I}(T = t)}{\pi_{t, 1}(X)}  \times \big( Y - \widetilde{\mu}_{t,1}(Y;X) \big) + \widetilde{\mu}_{t,1}(Y;X) \bigg\} 
    \\
     &\hspace{0.5cm} - \frac{M}{\eta_{1}(X, R, T)} \ (M - \eta_{1}(X, R, T)) \ \psi_t(\widetilde{P}; \gamma_t) 
    \\
    &\hspace{0.5cm}+ a_t(\widetilde{P})(X, R, T) \Big( 1 - \frac{M}{\eta_{1}(X, R, T)}  \Big) (M - \eta_{1}(X, R, T)) \ \bigg| \ X, R, T \bigg] = 0 \ . 
\end{align*}

Since 
\begin{align*}
    &\E\Bigg[\frac{M}{\eta_{1}(X, R, T)} \ (M - \eta_{1}(X, R, T)) \ \mathbb{I}(R = 0, T = t) \times \frac{\pi_{1-t,0}(X)}{\pi_{t,0}(X)} \times  \frac{\exp(\gamma_t s_t(Y))}{\widetilde{\mu}_{0,t}(\exp(\gamma_t s_t(Y)); X)} \\ 
    &\hspace{4cm} \times\Big( Y - \frac{ \widetilde{\mu}_{t,0}(Y\exp(\gamma_t s_t(Y)); X)}{\widetilde{\mu}_{t,0}(\exp(\gamma_t s_t(Y)); X)} \Big) \bigg|X, R, T\Bigg]\\
    =&\E\Bigg[\eta_{0}(X, R, T) \mathbb{I}(R = 0, T = t) \times \frac{\pi_{1-t,0}(X)}{\pi_{t,0}(X)} \times  \frac{\exp(\gamma_t s_t(Y))}{\widetilde{\mu}_{0,t}(\exp(\gamma_t s_t(Y)); X)} \\ 
    &\hspace{4cm} \times\Big( Y - \frac{ \widetilde{\mu}_{t,0}(Y\exp(\gamma_t s_t(Y)); X)}{\widetilde{\mu}_{t,0}(\exp(\gamma_t s_t(Y)); X)} \Big) \bigg|X, R, T, M=1\Bigg]\\
    =& \eta_{0}(X, R, T) \mathbb{I}(R = 0, T = t) \times \frac{\pi_{1-t,0}(X)}{\pi_{t,0}(X)}\\
    &\hspace{1cm} \times\E\Bigg[\frac{\exp(\gamma_t s_t(Y))}{\widetilde{\mu}_{0,t}(\exp(\gamma_t s_t(Y)); X)}\Big( Y - \frac{ \widetilde{\mu}_{t,0}(Y\exp(\gamma_t s_t(Y)); X)}{\widetilde{\mu}_{t,0}(\exp(\gamma_t s_t(Y)); X)} \Big) \bigg|X, R, T, M=1\Bigg]\\
    =& \eta_{0}(X, R, T) \mathbb{I}(R = 0, T = t) \times \frac{\pi_{1-t,0}(X)}{\pi_{t,0}(X)}\times\Big\{\frac{\widetilde{\mu}_{0,t}(Y\exp(\gamma_t s_t(Y)); X)}{\widetilde{\mu}_{0,t}(\exp(\gamma_t s_t(Y)); X)}-\frac{\widetilde{\mu}_{0,t}(Y\exp(\gamma_t s_t(Y)); X)}{\widetilde{\mu}_{0,t}(\exp(\gamma_t s_t(Y)); X)}\Big\}\\
    =&0,
\end{align*}
we have
\begin{align*}
    0 &=\I(R=0, T=t) \ \eta_{0}(X, R, T) \ \widetilde{\mu}_{t,0}(Y; X) \\
    &\hspace{0.25cm}+ \I(R=0, T=1-t) \ \eta_{0}(X, R, T) \  \frac{\widetilde{\mu}_{t,0}(Y\exp(\gamma_t s_t(Y)); X)}{\widetilde{\mu}_{t,0}(\exp(\gamma_t s_t(Y)); X)} \\
    &\hspace{0.25cm}+ \I(R=1) \ \eta_{0}(X, R, T) \ \widetilde{\mu}_{t,1}(Y; X) \\
    &\hspace{0.25cm} - \eta_{0}(X, R, T) \ \psi_t(\widetilde{P}; \gamma_t) \\
    &\hspace{0.25cm} - a_t(\widetilde{P})(X, R, T) \eta_{0}(X, R, T).
\end{align*}
So, 
\begin{equation*} 
\begin{aligned}
   a_t(\widetilde{P})(X, R, T) &= \frac{1}{\eta_{0}(X, R, T)} \bigg\{  \I(R=0, T=t) \ \eta_{0}(X, R, T) \ \widetilde{\mu}_{t,0}(Y; X) \\
    &\hspace{1cm}+ \I(R=0, T=1-t) \ \eta_{0}(X, R, T) \  \frac{\widetilde{\mu}_{t,0}(Y\exp(\gamma_t s_t(Y)); X)}{\widetilde{\mu}_{t,0}(\exp(\gamma_t s_t(Y)); X)}  \\
    &\hspace{1cm}+ \I(R=1) \ \eta_{0}(X, R, T) \ \widetilde{\mu}_{t,1}(Y; X)  \bigg\} - \psi_t(\widetilde{P}; \gamma_t)\\
    &= \I(R=0, T=t) \ \widetilde{\mu}_{t,0}(Y; X) + \I(R=0, T=1-t) \  \frac{\widetilde{\mu}_{t,0}(Y\exp(\gamma_t s_t(Y)); X)}{\widetilde{\mu}_{t,0}(\exp(\gamma_t s_t(Y)); X)}  \\
    &\hspace{1cm}+ \I(R=1) \ \widetilde{\mu}_{t,1}(Y; X) - \psi_t(\widetilde{P}; \gamma_t)
\end{aligned}
\end{equation*}
In addition, 
$$\Pi\left(\frac{M}{\eta_{1}(X, R, T)}\mathcal{F}^{'\perp}\bigg|\Lambda^{\widetilde{O}}_{M|O}\right)=\mathcal{F}^{'\perp}.\,$$
Putting all these together, we get the class of influence functions for $\psi_t(\widetilde{P}; \gamma_t)$ as \\
$\left(\frac{M}{\eta_{1}(X, R, T)}  \phi^c_t(\widetilde{P}; \gamma_t)(O) + a_t(\widetilde{P})(X, R, T) \Big( 1 - \frac{M}{\eta_{1}(X, R, T)}  \Big)\right)\oplus\mathcal{F}^{'\perp}$. 

Since the observed tangent space $\mathcal{F}^{'\widetilde{O}}=\E[\mathcal{F}^{'}|\widetilde{O}]=\mathcal{F}'$, the observed influence function for $\psi_t(\widetilde{P}; \gamma_t)$, denoted as $\phi_t(\widetilde{P}; \gamma_t)(\widetilde{O})$, is
\begin{align*}
\phi_t(\widetilde{P}; \gamma_t)(\widetilde{O})=&\Pi\left(\left(\frac{M}{\eta_{1}(X, R, T)}  \phi^c_t(\widetilde{P}; \gamma_t)(O) + a_t(\widetilde{P})(X, R, T) \Big( 1 - \frac{M}{\eta_{1}(X, R, T)}  \Big)\right)\oplus\mathcal{F}^{'\perp}\Bigg|\mathcal{F}^{'\widetilde{O}}\right)\\
=&\Pi\left(\left(\frac{M}{\eta_{1}(X, R, T)}  \phi^c_t(\widetilde{P}; \gamma_t)(O) + a_t(\widetilde{P})(X, R, T) \Big( 1 - \frac{M}{\eta_{1}(X, R, T)}  \Big)\right)\oplus\mathcal{F}^{'\perp}\Bigg|\mathcal{F}'\right)\\
=&\frac{M}{\eta_{1}(X, R, T)}  \phi^c_t(\widetilde{P}; \gamma_t)(O) + a_t(\widetilde{P})(X, R, T) \Big( 1 - \frac{M}{\eta_{1}(X, R, T)}  \Big)
\end{align*}

We can re-write $\phi_t(\widetilde{P}; \gamma_t)(\widetilde{O})$ as
\begin{align*}
    \phi_t(\widetilde{P}; \gamma_t)(\widetilde{O})
    &=  \underbrace{\upsilon_{t, 1}(\widetilde{P})(\widetilde{O})\times P(R=1)+\upsilon_{t, 0}(\widetilde{P}; \gamma_t)(\widetilde{O})\times P(R=0)}_{\upsilon_{t}(\widetilde{P}; \gamma_t)(\widetilde{O})} - \psi_t(\widetilde{P}; \gamma_t) \ , 
\end{align*}
where $\upsilon_{t, 1}(\widetilde{P})(\widetilde{O})$ is given by
\begin{equation*}
    \begin{aligned}
    \upsilon_{t, 1}(\widetilde{P})(\widetilde{O})
    =& \frac{M}{\eta_1(X, R, T)}\times\frac{\I(R=1)}{P(R=1)}\left\{\frac{\I(T=t)}{\pi_{t, 1}(X)}\left(Y-\widetilde{\mu}_{t, 1}(Y; X)\right)+\widetilde{\mu}_{t, 1}(Y; X))\right\} \\
    &\hspace{0.25cm}+\Big\{ 1 - \frac{M}{\eta_{1}(X, R, T)}  \Big\}\times \frac{\I(R=1)}{P(R=1)}\widetilde{\mu}_{t, 1}(Y; X )\ ,
\end{aligned}
\end{equation*}
and $\upsilon_{t, 0}(\widetilde{P}; \gamma_t)(\widetilde{O})$ is given by
\begin{equation*}
    \begin{aligned}
   \upsilon_{t, 0}(\widetilde{P}; \gamma_t)(\widetilde{O})
    =& \frac{M}{\eta_1(X, R, T)}\Bigg[\frac{\mathbb{I}(R = 0, T = t)}{P(R=0)}\Bigg\{ Y +  \frac{\pi_{1-t,0}(X)}{\pi_{t,0}(X)} \times \frac{\exp(\gamma_t s_t(Y))}{\widetilde{\mu}_{t,0}(\exp(\gamma_t s_t(Y)); X)}  \\
    &\hspace{6cm}\times\left\{Y  - \frac{\widetilde{\mu}_{t,0}(Y\exp(\gamma_t s_t(Y)); X)}{\widetilde{\mu}_{t,0}(\exp(\gamma_t s_t(Y)); X)} \right\} \Bigg\}\\
    &\hspace{2.5cm}+\frac{\mathbb{I}(R=0, T=1-t)}{P(R=0)} \times \frac{\widetilde{\mu}_{t,0}(Y\exp(\gamma_t s_t(Y)); X)}{\widetilde{\mu}_{t,0}(\exp(\gamma_t s_t(Y)); X)}\Bigg]\\
    &\hspace{0.25cm}+\Big\{ 1 - \frac{M}{\eta_{1}(X, R, T)}  \Big\}\Bigg[ \frac{\I(R=0, T=t)}{P(R=0)} \ \widetilde{\mu}_{t,0}(Y; X)\\
    &\hspace{2.5cm}+\frac{\I(R=0, T=1-t)}{P(R=0)}\ \frac{\widetilde{\mu}_{t,0}(Y\exp(\gamma_t s_t(Y)); X)}{\widetilde{\mu}_{t,0}(\exp(\gamma_t s_t(Y)); X)}\Bigg]\ .
\end{aligned}
\end{equation*}

\subsection{Second-order remainder term}\label{sec:remainder} 

\begin{lemma}\label{remainder}
The remainder term, $Rem_t(\widetilde{P}^\ast, \widetilde{P}) \equiv \psi_t(\widetilde{P}^\ast; \gamma_t) - \psi_t(\widetilde{P}; \gamma_t) + \E[\phi_t(\widetilde{P}^\ast; \gamma_t)(\widetilde{O})]$, where $\widetilde{P}^\ast$ is any distribution of $\widetilde{O}$, takes the form:
{\small
\begin{align*}
&\E\Bigg[\frac{1}{\eta^\ast_{1}(X, R=0, T=t)}g_0(X)\pi_{t, 0}(X)\bigg(\widetilde{\mu}_{t, 0}(Y; X)-\widetilde{\mu}^\ast_{t,0}(Y; X)\bigg)\bigg(\eta_{1}(X, R=0, T=t)-\eta^\ast_{1}(X, R=0, T=t)\bigg)\\
    &\hspace{1cm} + \frac{g_0(X)\bigg(\widetilde{\mu}_{t,0}(Y\exp(\gamma_t s_t(Y)); X)-\widetilde{\mu}^\ast_{t,0}(Y\exp(\gamma_t s_t(Y)); X)\bigg)}{\eta^\ast_1(X, R=0, T=t)\pi^\ast_{t, 0}(X)\widetilde{\mu}_{t, 0}(\exp(\gamma_t s_t(Y)); X)\widetilde{\mu}^\ast_{t, 0}(\exp(\gamma_t s_t(Y)); X)}\\
    &\hspace{1.5cm}\times\bigg\{\bigg(\eta_1(X, R=0, T=t)-\eta^\ast_1(X, R=0, T=t)\bigg)\pi_{t, 0}(X)\widetilde{\mu}_{t, 0}(\exp(\gamma_t s_t(Y)); X)\big(1-\pi^\ast_{t, 0}(X)\big)\\
    &\hspace{2cm}+\bigg(\widetilde{\mu}_{t, 0}(\exp(\gamma_t s_t(Y)); X)-\widetilde{\mu}^\ast_{t, 0}(\exp(\gamma_t s_t(Y)); X)\bigg)\pi^\ast_{t, 0}(X)\eta^\ast_{1}(X, R=0, T=t)\big(1-\pi_{t, 0}(X)\big)\\
    &\hspace{2cm}+\bigg(\pi_{t, 0}(X)-\pi^\ast_{t, 0}(X)\bigg)\eta^\ast_{1}(X, R=0, T=t)\widetilde{\mu}_{t, 0}(\exp(\gamma_t s_t(Y)); X)\bigg\}\\
    &\hspace{1cm}-\frac{\widetilde{\mu}^\ast_{t, 0}(Y\exp(\gamma_t s_t(Y)); X)}{\widetilde{\mu}^\ast_{t, 0}(\exp(\gamma_t s_t(Y)); X)}\frac{g_0(X)\bigg(\widetilde{\mu}_{t, 0}(\exp(\gamma_t s_t(Y)); X)-\widetilde{\mu}^\ast_{t, 0}(\exp(\gamma_t s_t(Y)); X)\bigg)}{\eta^\ast_1(X, R=0, T=t)\pi^\ast_{t, 0}(X)\widetilde{\mu}_{t, 0}(\exp(\gamma_t s_t(Y)); X)\widetilde{\mu}^\ast_{t, 0}(\exp(\gamma_t s_t(Y)); X)}\\
    &\hspace{1.5cm}\times\bigg\{\bigg(\eta_1(X, R=0, T=t)-\eta^\ast_1(X, R=0, T=t)\bigg)\pi_{t, 0}(X)\widetilde{\mu}_{t, 0}(\exp(\gamma_t s_t(Y)); X)\big(1-\pi^\ast_{t, 0}(X)\big)\\
    &\hspace{2cm}+\bigg(\mu_{t, 0}(\exp(\gamma_t s_t(Y)); X)-\widetilde{\mu}^\ast_{t, 0}(\exp(\gamma_t s_t(Y)); X)\bigg)\pi^\ast_{t, 0}(X)\eta^\ast_{1}(X, R=0, T=t)\big(1-\pi_{t, 0}(X)\big)\\
    &\hspace{2cm}+\bigg(\pi_{t, 0}(X)-\pi^\ast_{t, 0}(X)\bigg)\eta^\ast_{1}(X, R=0, T=t)\widetilde{\mu}_{t, 0}(\exp(\gamma_t s_t(Y)); X)\bigg\}\\
    &\hspace{1cm}+\frac{g_1(X)}{\eta^\ast_1(X, R=1, T=t)\pi^\ast_{t, 1}(X)}\bigg( \widetilde{\mu}_{t,1}(Y;X) - \widetilde{\mu}^\ast_{t,1}(Y;X) \bigg)\\
    &\hspace{1.5cm}\times\bigg\{\eta_1(X, R=1, T=t)\bigg(\pi_{t, 1}(X)-\pi^\ast_{t, 1}(X)\bigg)+\pi^\ast_{t, 1}(X)\bigg(\eta_1(X, R=1, T=t)-\eta^\ast_1(X, R=1, T=t)\bigg)\bigg\}\Bigg]
\end{align*}
}
\end{lemma}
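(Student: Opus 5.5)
The plan is a direct computation. Write $\E[\phi_t(\widetilde P^\ast;\gamma_t)(\widetilde O)] = \E[\upsilon_t(\widetilde P^\ast;\gamma_t)(\widetilde O)] - \psi_t(\widetilde P^\ast;\gamma_t)$. The $\psi_t(\widetilde P^\ast;\gamma_t)$ terms then cancel in $Rem_t$, leaving
\[
Rem_t(\widetilde P^\ast,\widetilde P) = \E[\upsilon_t(\widetilde P^\ast;\gamma_t)(\widetilde O)] - \psi_t(\widetilde P;\gamma_t).
\]
Here the $P(R=r)$ multiplying $\upsilon_{t,r}$ cancels the $1/P(R=r)$ inside $\upsilon_{t,r}$, so only true marginal quantities of $R$ appear. (If $P^\ast(R=r)$ is used instead, I would note that this contributes a term that vanishes in expectation; the statement implicitly takes the true $P(R=r)$.) I would evaluate $\E[\upsilon_t(\widetilde P^\ast;\gamma_t)]$ by iterated expectations, conditioning successively on $(X,R,T,M=1)$, then on $(X,R,T)$, then on $(X,R)$, then on $X$. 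At each stage I use that, under (A5), $\E[h(Y)\mid X,R,T=t,M=1]=\widetilde\mu_{t,r}(h(Y);X)$, $\E[M\mid X,R,T]=\eta_1$, $P(T=t\mid R,X)=\pi_{t,R}$, and $P(R=r\mid X)=g_r$. I then subtract the representation of $\psi_t(\widetilde P;\gamma_t)$ as $\E[g_1\widetilde\mu_{t,1}(Y;X)+g_0\{\widetilde\mu_{t,0}(Y;X)\pi_{t,0}+\kappa\,\pi_{1-t,0}\}]$, where $\kappa=\widetilde\mu_{t,0}(Ye^{\gamma_ts_t};X)/\widetilde\mu_{t,0}(e^{\gamma_ts_t};X)$, with $\kappa^\ast$ defined analogously.

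\textbf{RCT piece.} Conditioning gives
\[
g_1\Big\{\tfrac{\eta_1\pi_{t,1}}{\eta_1^\ast\pi_{t,1}^\ast}(\widetilde\mu_{t,1}-\widetilde\mu^\ast_{t,1})+\widetilde\mu_{t,1}^\ast\Big\}-g_1\widetilde\mu_{t,1}.
\]
This is because the $\widetilde\mu^\ast$ terms in the $M/\eta^\ast$ part and the $(1-M/\eta^\ast)$ part combine to $\widetilde\mu^\ast_{t,1}$. The expression equals $g_1(\widetilde\mu_{t,1}-\widetilde\mu^\ast_{t,1})\{\eta_1\pi_{t,1}-\eta_1^\ast\pi^\ast_{t,1}\}/(\eta_1^\ast\pi_{t,1}^\ast)$. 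Adding and subtracting $\eta_1\pi^\ast_{t,1}$ gives exactly the last displayed line of the lemma.

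\textbf{OBS, $T=t$ terms.} These produce $g_0\pi_{t,0}\{\frac{\eta_1}{\eta_1^\ast}(\widetilde\mu_{t,0}-\widetilde\mu^\ast_{t,0})+\widetilde\mu^\ast_{t,0}\}$. The tilting term produces
\[
g_0\pi_{t,0}\frac{\eta_1}{\eta_1^\ast}\frac{\pi^\ast_{1-t,0}}{\pi^\ast_{t,0}}\frac{\widetilde\mu_{t,0}(Ye^{\gamma s})-\kappa^\ast\widetilde\mu_{t,0}(e^{\gamma s})}{\widetilde\mu^\ast_{t,0}(e^{\gamma s})}.
\]
The $T=1-t$ terms give $g_0\pi_{1-t,0}\kappa^\ast$, because the $M/\eta^\ast$ and $(1-M/\eta^\ast)$ parts again combine. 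Subtracting the truth, the $\widetilde\mu_{t,0}(Y)$ part yields the first line of the lemma: $g_0\pi_{t,0}(\widetilde\mu-\widetilde\mu^\ast)(\eta_1-\eta_1^\ast)/\eta_1^\ast$. The remaining terms are $g_0$ times
\[
\pi_{t,0}\frac{\eta_1}{\eta_1^\ast}\frac{1-\pi^\ast_{t,0}}{\pi^\ast_{t,0}}\frac{\widetilde\mu(Ye^{\gamma s})-\kappa^\ast\widetilde\mu(e^{\gamma s})}{\widetilde\mu^\ast(e^{\gamma s})}+(1-\pi_{t,0})(\kappa^\ast-\kappa).
\]

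\textbf{Main obstacle.} The hard step is rearranging this last expression into the stated product form. I would rewrite $\widetilde\mu(Ye^{\gamma s})-\kappa^\ast\widetilde\mu(e^{\gamma s}) = [\widetilde\mu(Ye^{\gamma s})-\widetilde\mu^\ast(Ye^{\gamma s})] - \kappa^\ast[\widetilde\mu(e^{\gamma s})-\widetilde\mu^\ast(e^{\gamma s})]$ and $\kappa^\ast-\kappa = -\frac{1}{\widetilde\mu(e^{\gamma s})}\{[\widetilde\mu(Ye^{\gamma s})-\widetilde\mu^\ast(Ye^{\gamma s})] - \kappa^\ast[\widetilde\mu(e^{\gamma s})-\widetilde\mu^\ast(e^{\gamma s})]\}$. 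Both OBS contributions then share the factor $D\equiv[\widetilde\mu-\widetilde\mu^\ast](Ye^{\gamma s})-\kappa^\ast[\widetilde\mu-\widetilde\mu^\ast](e^{\gamma s})$. Putting everything over the common denominator $\eta_1^\ast\pi^\ast_{t,0}\widetilde\mu(e^{\gamma s})\widetilde\mu^\ast(e^{\gamma s})$, the coefficient of $D$ becomes
\[
\eta_1\pi_{t,0}(1-\pi^\ast_{t,0})\widetilde\mu(e^{\gamma s})-(1-\pi_{t,0})\eta_1^\ast\pi^\ast_{t,0}\widetilde\mu^\ast(e^{\gamma s}).
\]
This vanishes when all nuisances are true. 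I would add and subtract intermediate terms, telescoping in the order $\eta$, then $\widetilde\mu(e^{\gamma s})$, then $\pi$, to decompose it into the three braced summands of the lemma. I expect this bookkeeping, and matching signs and which arguments carry stars, to be the error-prone part. Splitting $D$ into its two pieces then produces the two large brace blocks, with the $-\kappa^\ast$ prefactor on the second.
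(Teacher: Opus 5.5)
Your proposal is correct and follows essentially the same route as the paper's proof. You cancel $\psi_t(\widetilde{P}^\ast;\gamma_t)$, evaluate $\E[\upsilon_t(\widetilde{P}^\ast;\gamma_t)(\widetilde{O})]$ by iterated expectations under (A5), subtract the identification formula, and regroup into products of errors. Your factorization through $D$, with coefficient $\eta_1\pi_{t,0}(1-\pi^\ast_{t,0})\widetilde\mu_{t,0}(e^{\gamma_t s_t(Y)};X)-(1-\pi_{t,0})\eta_1^\ast\pi^\ast_{t,0}\widetilde\mu^\ast_{t,0}(e^{\gamma_t s_t(Y)};X)$ telescoped in the order $\eta$, $\widetilde\mu(e^{\gamma_t s_t})$, $\pi$, reproduces exactly the three braced summands; the paper states this final regrouping without displaying it.

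One small correction to your parenthetical: inside $\phi_t(\widetilde{P}^\ast;\gamma_t)$ the factor $1/P^\ast(R=r)$ in $\upsilon_{t,r}$ is multiplied by $P^\ast(R=r)$. These cancel identically, so no extra term arises.
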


\textbf{Proof of Lemma \ref{remainder}}
{\small
\begin{align*}
    &Rem_t(\widetilde{P}^\ast, \widetilde{P}) \equiv \psi_t(\widetilde{P}^\ast; \gamma_t) - \psi_t(\widetilde{P}; \gamma_t) + \E[\phi_t(\widetilde{P}^\ast; \gamma_t)(\widetilde{O})]
    \\ 
    &= \psi_t(\widetilde{P}^\ast; \gamma_t) - \psi_t(\widetilde{P}; \gamma_t) +\E\left[\frac{M}{\eta^\ast_{1}(X, R, T)}  \phi^c_t(\widetilde{P}^\ast)(O) +  \Big\{ 1 - \frac{M}{\eta^\ast_{1}(X, R, T)}  \Big\} c(X, R, T; \widetilde{P}^\ast)\right]\\
    &= \E\Bigg[\frac{\mathbb{I}(M=1, R = 0, T = t)}{\eta^\ast_{1}(X, R, T)} \times \Bigg\{Y + \frac{\pi^\ast_{1-t, 0}(X)}{\pi^\ast_{t, 0}(X)} \times \left( \frac{Y\exp(\gamma_t s_t(Y))}{\widetilde{\mu}^\ast_{t, 0}(\exp(\gamma_t s_t(Y)); X)} - \frac{\exp(\gamma_t s_t(Y))\widetilde{\mu}^\ast_{t, 0}(Y\exp(\gamma_t s_t(Y)); X)}{\widetilde{\mu}^\ast_{t, 0}(\exp(\gamma_t s_t(Y)); X)^2} \right) \Bigg\} \\
    &\hspace{1cm} + \frac{\mathbb{I}(M=1, R=0, T=1-t)}{\eta^\ast_{1}(X, R, T)} \times \frac{\widetilde{\mu}^\ast_{t, 0}(Y\exp(\gamma_t s_t(Y)); X)}{\widetilde{\mu}^\ast_{t, 0}(\exp(\gamma_t s_t(Y)); X)} \\
    &\hspace{1cm} + \frac{\mathbb{I}(M=1, R=1)}{\eta^\ast_{1}(X, R, T)} \times \bigg\{ \frac{\mathbb{I}(T = t)}{\pi^\ast_{t, 1}(X)}  \times \big( Y - \widetilde{\mu}^\ast_{t,1}(Y;X) \big) + \widetilde{\mu}^\ast_{t,1}(Y;X) \bigg\} \\
    &\hspace{1cm} + \left(1-\frac{M}{\eta^\ast_{1}(X, R, T)}\right)\I(R=0, T=t) \ \widetilde{\mu}^\ast_{t,0}(Y; X)+ \left(1-\frac{M}{\eta^\ast_{1}(X, R, T)}\right)\I(R=0, T=1-t) \  \frac{\widetilde{\mu}^\ast_{t,0}(Y\exp(\gamma_t s_t(Y)); X)}{\widetilde{\mu}^\ast_{t,0}(\exp(\gamma_t s_t(Y)); X)}\\
    &\hspace{1cm} + \left(1-\frac{M}{\eta^\ast_{1}(X, R, T)}\right)\I(R=1) \ \widetilde{\mu}^\ast_{t,1}(Y; X)\Bigg]- \psi_t(\widetilde{P}; \gamma_t)\\
    &= \E\Bigg[\frac{\eta_{1}(X, R=0, T=t)}{\eta^\ast_{1}(X, R=0, T=t)}g_0(X)\pi_{t, 0}(X) \Bigg\{\widetilde{\mu}_{t, 0}(Y; X) + \frac{\pi^\ast_{1-t, 0}(X)}{\pi^\ast_{t, 0}(X)} \\
    &\hspace{6cm}\times \left( \frac{\widetilde{\mu}_{t, 0}(Y\exp(\gamma_t s_t(Y)); X)}{\widetilde{\mu}^\ast_{t, 0}(\exp(\gamma_t s_t(Y)); X)} - \frac{\widetilde{\mu}_{t, 0}(\exp(\gamma_t s_t(Y)); X)\widetilde{\mu}^\ast_{t, 0}(Y\exp(\gamma_t s_t(Y)); X)}{\widetilde{\mu}^\ast_{t, 0}(\exp(\gamma_t s_t(Y)); X)^2} \right) \Bigg\} \\
    &\hspace{1cm}+\frac{\eta_{1}(X, R=0, T=1-t)}{\eta^\ast_{1}(X, R=0, T=1-t)}g_0(X)\pi_{1-t, 0}(X) \times \frac{\widetilde{\mu}^\ast_{t, 0}(Y\exp(\gamma_t s_t(Y)); X)}{\widetilde{\mu}^\ast_{t, 0}(\exp(\gamma_t s_t(Y)); X)}\\
    &\hspace{1cm} + \frac{\eta_{1}(X, R=1, T=t)}{\eta^\ast_{1}(X, R=1, T=t)}\times\frac{\pi_{t, 1}(X)}{\pi^\ast_{t, 1}(X)} g_1(X)\big( \widetilde{\mu}_{t,1}(Y;X) - \widetilde{\mu}^\ast_{t,1}(Y;X) \big) +\frac{\eta_{1}(X, R=1, T)}{\eta^\ast_{1}(X, R=1, T)}g_1(X)\widetilde{\mu}^\ast_{t,1}(Y;X) 
    \\
    &\hspace{1cm} + \left(1-\frac{\eta_{1}(X, R=0, T=t)}{\eta^\ast_{1}(X, R=0, T=t)}\right)g_0(X)\pi_{t, 0}(X) \ \widetilde{\mu}^\ast_{t,0}(Y; X)\\
    &\hspace{1cm} + \left(1-\frac{\eta_{1}(X, R=0, T=1-t)}{\eta^\ast_{1}(X, R=0, T=1-t)}\right)g_0(X)\pi_{1-t, 0}(X) \  \frac{\widetilde{\mu}^\ast_{t,0}(Y\exp(\gamma_t s_t(Y)); X)}{\widetilde{\mu}^\ast_{t,0}(\exp(\gamma_t s_t(Y)); X)}\\
    &\hspace{1cm} + \left(1-\frac{\eta_{1}(X, R=1, T)}{\eta^\ast_{1}(X, R=1, T)}\right)g_1(X)  \ \widetilde{\mu}^\ast_{t,1}(Y; X)\Bigg]- \psi_t(\widetilde{P}; \gamma_t)\\
    &=\E\Bigg[\frac{\eta_{1}(X, R=0, T=t)}{\eta^\ast_{1}(X, R=0, T=t)}g_0(X)\pi_{t, 0}(X)\big(\widetilde{\mu}_{t, 0}(Y; X)-\widetilde{\mu}^\ast_{t,0}(Y; X)\big)\\
    &\hspace{1cm}+\frac{\eta_{1}(X, R=0, T=t)}{\eta^\ast_{1}(X, R=0, T=t)}g_0(X)\pi_{t, 0}(X)\times\frac{\pi^\ast_{1-t, 0}(X)}{\pi^\ast_{t, 0}(X)}\\
    &\hspace{3cm}\times\frac{\big(\widetilde{\mu}_{t, 0}(Y\exp(\gamma_t s_t(Y)); X)-\widetilde{\mu}^\ast_{t, 0}(Y\exp(\gamma_t s_t(Y)); X)\big)}{\widetilde{\mu}^\ast_{t, 0}(\exp(\gamma_t s_t(Y)); X)}\\
    &\hspace{1cm}-\frac{\eta_{1}(X, R=0, T=t)}{\eta^\ast_{1}(X, R=0, T=t)}g_0(X)\pi_{t, 0}(X)\times\frac{\pi^\ast_{1-t, 0}(X)}{\pi^\ast_{t, 0}(X)}\\
    &\hspace{3cm}\times\frac{\widetilde{\mu}^\ast_{t, 0}(Y\exp(\gamma_t s_t(Y)); X)\big(\widetilde{\mu}_{t, 0}(\exp(\gamma_t s_t(Y)); X)-\widetilde{\mu}^\ast_{t, 0}(\exp(\gamma_t s_t(Y)); X)\big)}{\widetilde{\mu}^\ast_{t, 0}(\exp(\gamma_t s_t(Y)); X)^2}\\
    &\hspace{1cm} + \frac{\eta_{1}(X, R=1, T=t)}{\eta^\ast_{1}(X, R=1, T=t)}\times\frac{\pi_{t, 1}(X)}{\pi^\ast_{t, 1}(X)} g_1(X)\big( \widetilde{\mu}_{t,1}(Y;X) - \widetilde{\mu}^\ast_{t,1}(Y;X) \big)\\
    &\hspace{1cm} -g_0(X)\pi_{t, 0}(X) \big(\widetilde{\mu}_{t,0}(Y; X)-\widetilde{\mu}^\ast_{t,0}(Y; X)\big)\\
    &\hspace{1cm} +g_0(X)\pi_{1-t, 0}(X) \  \frac{\widetilde{\mu}^\ast_{t,0}(Y\exp(\gamma_t s_t(Y)); X)\big(\widetilde{\mu}_{t,0}(\exp(\gamma_t s_t(Y)); X)-\widetilde{\mu}^\ast_{t,0}(\exp(\gamma_t s_t(Y)); X)\big)}{\widetilde{\mu}_{t,0}(\exp(\gamma_t s_t(Y)); X)\widetilde{\mu}^\ast_{t,0}(\exp(\gamma_t s_t(Y)); X)}\\
    &\hspace{1cm} -g_0(X)\pi_{1-t, 0}(X) \  \frac{\big(\widetilde{\mu}_{t,0}(Y\exp(\gamma_t s_t(Y)); X)-\widetilde{\mu}^\ast_{t,0}(Y\exp(\gamma_t s_t(Y)); X)\big)}{\widetilde{\mu}_{t,0}(\exp(\gamma_t s_t(Y)); X)}\\
    &\hspace{1cm} -g_1(X)\bigg(\widetilde{\mu}_{t,1}(Y; X)- \widetilde{\mu}^\ast_{t,1}(Y; X)\bigg)\bigg]\\
    &=\E\Bigg[\frac{1}{\eta^\ast_{1}(X, R=0, T=t)}g_0(X)\pi_{t, 0}(X)\bigg(\widetilde{\mu}_{t, 0}(Y; X)-\widetilde{\mu}^\ast_{t,0}(Y; X)\bigg)\bigg(\eta_{1}(X, R=0, T=t)-\eta^\ast_{1}(X, R=0, T=t)\bigg)\\
    &\hspace{1cm} + \frac{g_0(X)\bigg(\widetilde{\mu}_{t,0}(Y\exp(\gamma_t s_t(Y)); X)-\widetilde{\mu}^\ast_{t,0}(Y\exp(\gamma_t s_t(Y)); X)\bigg)}{\eta^\ast_1(X, R=0, T=t)\pi^\ast_{t, 0}(X)\widetilde{\mu}_{t, 0}(\exp(\gamma_t s_t(Y)); X)\widetilde{\mu}^\ast_{t, 0}(\exp(\gamma_t s_t(Y)); X)}\\
    &\hspace{1.5cm}\times\bigg\{\bigg(\eta_1(X, R=0, T=t)-\eta^\ast_1(X, R=0, T=t)\bigg)\pi_{t, 0}(X)\widetilde{\mu}_{t, 0}(\exp(\gamma_t s_t(Y)); X)\big(1-\pi^\ast_{t, 0}(X)\big)\\
    &\hspace{2cm}+\bigg(\widetilde{\mu}_{t, 0}(\exp(\gamma_t s_t(Y)); X)-\widetilde{\mu}^\ast_{t, 0}(\exp(\gamma_t s_t(Y)); X)\bigg)\pi^\ast_{t, 0}(X)\eta^\ast_{1}(X, R=0, T=t)\big(1-\pi_{t, 0}(X)\big)\\
    &\hspace{2cm}+\bigg(\pi_{t, 0}(X)-\pi^\ast_{t, 0}(X)\bigg)\eta^\ast_{1}(X, R=0, T=t)\widetilde{\mu}_{t, 0}(\exp(\gamma_t s_t(Y)); X)\bigg\}\\
    &\hspace{1cm}-\frac{\widetilde{\mu}^\ast_{t, 0}(Y\exp(\gamma_t s_t(Y)); X)}{\widetilde{\mu}^\ast_{t, 0}(\exp(\gamma_t s_t(Y)); X)}\frac{g_0(X)\bigg(\widetilde{\mu}_{t, 0}(\exp(\gamma_t s_t(Y)); X)-\widetilde{\mu}^\ast_{t, 0}(\exp(\gamma_t s_t(Y)); X)\bigg)}{\eta^\ast_1(X, R=0, T=t)\pi^\ast_{t, 0}(X)\widetilde{\mu}_{t, 0}(\exp(\gamma_t s_t(Y)); X)\widetilde{\mu}^\ast_{t, 0}(\exp(\gamma_t s_t(Y)); X)}\\
    &\hspace{1.5cm}\times\bigg\{\bigg(\eta_1(X, R=0, T=t)-\eta^\ast_1(X, R=0, T=t)\bigg)\pi_{t, 0}(X)\widetilde{\mu}_{t, 0}(\exp(\gamma_t s_t(Y)); X)\big(1-\pi^\ast_{t, 0}(X)\big)\\
    &\hspace{2cm}+\bigg(\widetilde{\mu}_{t, 0}(\exp(\gamma_t s_t(Y)); X)-\widetilde{\mu}^\ast_{t, 0}(\exp(\gamma_t s_t(Y)); X)\bigg)\pi^\ast_{t, 0}(X)\eta^\ast_{1}(X, R=0, T=t)\big(1-\pi_{t, 0}(X)\big)\\
    &\hspace{2cm}+\bigg(\pi_{t, 0}(X)-\pi^\ast_{t, 0}(X)\bigg)\eta^\ast_{1}(X, R=0, T=t)\widetilde{\mu}_{t, 0}(\exp(\gamma_t s_t(Y)); X)\bigg\}\\
    &\hspace{1cm}+\frac{g_1(X)}{\eta^\ast_1(X, R=1, T=t)\pi^\ast_{t, 1}(X)}\bigg( \widetilde{\mu}_{t,1}(Y;X) - \widetilde{\mu}^\ast_{t,1}(Y;X) \bigg)\\
    &\hspace{1.5cm}\times\bigg\{\eta_1(X, R=1, T=t)\bigg(\pi_{t, 1}(X)-\pi^\ast_{t, 1}(X)\bigg)+\pi^\ast_{t, 1}(X)\bigg(\eta_1(X, R=1, T=t)-\eta^\ast_1(X, R=1, T=t)\bigg)\bigg\}\Bigg]
\end{align*} 
}

\subsection{Proof of robustness property}\label{robustness}

\begin{theorem}\label{theorem_robustness}
The one-step, split sample truncated estimator $\widehat{\psi}_t^\dagger(\gamma_t)$ is a consistent estimator of $\psi_t(\widetilde{P}; \gamma_t)$ if $\widetilde{F}_{t, r}(y|X), r=0, 1$ are correctly specified. 
\end{theorem}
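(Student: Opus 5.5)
The plan is to reduce consistency to the remainder term of Lemma~\ref{remainder}. Fix a split $k$ and write $\widehat{\widetilde{P}}^{(-k)}$ for the nuisance estimates from the other folds. The $k$th split estimator, before truncation, satisfies
\[
\widehat{\psi}_t^{(k)}(\gamma_t)-\psi_t(\widetilde{P};\gamma_t)
=(P_{n_k}-\widetilde{P})\,\upsilon_t\bigl(\widehat{\widetilde{P}}^{(-k)};\gamma_t\bigr)
+Rem_t\bigl(\widehat{\widetilde{P}}^{(-k)},\widetilde{P}\bigr)
+\text{(error from replacing } P(R=r)\text{ by } P^{(-k)}_{n/n_k}(R=r)).
\]
This holds because $\E[\phi_t(\widetilde{P}^\ast;\gamma_t)]=\E[\upsilon_t(\widetilde{P}^\ast;\gamma_t)]-\psi_t(\widetilde{P}^\ast;\gamma_t)$, so $\E[\upsilon_t(\widetilde{P}^\ast;\gamma_t)]-\psi_t(\widetilde{P};\gamma_t)=Rem_t(\widetilde{P}^\ast,\widetilde{P})$.

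\textbf{Step 1: the empirical-process term.} Conditional on the training folds, $\upsilon_t(\widehat{\widetilde{P}}^{(-k)};\gamma_t)$ is a fixed function. Under boundedness of $Y$ and $\exp(\gamma_t s_t(Y))$ and the limits $\pi^\ast,\eta^\ast,\widetilde{\mu}^\ast_{t,0}(\exp(\cdot))$ bounded away from zero, this function has bounded conditional second moment. Chebyshev's inequality then makes the term $O_P(n_k^{-1/2})=o_P(1)$, even when $\pi$ and $\eta$ converge to wrong limits $\bar\pi,\bar\eta$.

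\textbf{Step 2: the proportion term.} The replacement of $P(R=r)$ by $P^{(-k)}_{n/n_k}(R=r)$ is $o_P(1)$ by the law of large numbers, times an $O_P(1)$ average.

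\textbf{Step 3: the remainder term.} This is the key step. Inspecting Lemma~\ref{remainder}, every summand contains a factor that is a difference of true and estimated conditional means, namely one of
\[
\widetilde{\mu}_{t,r}(Y;X)-\widetilde{\mu}^\ast_{t,r}(Y;X),\quad
\widetilde{\mu}_{t,0}(Ye^{\gamma_t s_t(Y)};X)-\widetilde{\mu}^\ast_{t,0}(Ye^{\gamma_t s_t(Y)};X),\quad
\widetilde{\mu}_{t,0}(e^{\gamma_t s_t(Y)};X)-\widetilde{\mu}^\ast_{t,0}(e^{\gamma_t s_t(Y)};X).
\]
The other factors involve $\pi,\eta$ errors and ratios. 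These stay bounded under the positivity-type conditions, applied to the possibly misspecified limits, together with boundedness of the outcome.

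Correct specification of $\widetilde{F}_{t,r}$ means the single-index estimator is consistent. Bounded $h(Y)$ then gives $\|\widetilde{\mu}^\ast_{t,r}(h(Y);X)-\widetilde{\mu}_{t,r}(h(Y);X)\|_{L_2}=o_P(1)$, using the Appendix~\ref{nuisance} rates. By Cauchy--Schwarz, each summand is bounded by a constant times this $L_2$ error, so $Rem_t=o_P(1)$.

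\textbf{Step 4: truncation.} The Huberization thresholds solving \eqref{trunc1}--\eqref{trunc2} grow like $\sqrt{n_k/\log n_k}$ times a scale. So $|\upsilon-\upsilon^\dagger|$ is nonzero only where $|\upsilon|>\tau$. Its mean is then bounded by $\E[\upsilon^2]/\tau\to0$, using the bounded second moment from Step 1. I expect justifying the growth of the data-dependent $\widehat\tau$ to be the main technical obstacle; I would use the monotonicity of the left side of \eqref{trunc1} in $\tau$ and compare to its population version.

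Averaging over the $K$ splits gives the result.
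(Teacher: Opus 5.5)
Your proof is correct under the same boundedness and positivity conditions the paper uses elsewhere, but it is organized differently from the paper's proof.

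\textbf{The paper's route.} It argues at the limit. It lets $\widehat{\pi}_{t,r}$ and $\widehat{\eta}_1$ converge to arbitrary $\pi^\ast_{t,r},\eta^\ast_1$ and forms $\widetilde{P}^\star$ (true outcome laws, wrong $\pi,\eta$). It then invokes the WLLN and continuous mapping to get $\widehat{\psi}_t(\gamma_t)\to\E[\upsilon_t(\widetilde{P}^\star;\gamma_t)(\widetilde{O})]$, and notes that $Rem_t(\widetilde{P}^\star,\widetilde{P})=0$ \emph{exactly}, since every summand of Lemma~\ref{remainder} carries an outcome-regression difference. Truncation is handled by citation.

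\textbf{Your route.} You use the cross-fitting decomposition that the paper reserves for Theorem~\ref{asymptotic_theorem}. From the same structural fact, you show $Rem_t(\widehat{\widetilde{P}}^{(-k)},\widetilde{P})=o_P(1)$ by Cauchy--Schwarz. This buys three things:
\begin{itemize}
\item It needs only $L_2$-consistency of the outcome regressions, and does not require $\widehat{\pi},\widehat{\eta}$ to converge to anything.
\item It makes rigorous the step the paper glosses over. A plain WLLN does not apply to averages of data-dependent functions, whereas conditioning on the training folds does.
\item It proves negligibility of truncation directly instead of by citation.
\end{itemize}
The paper's version is shorter and displays the robustness as an exact identity at the limit.

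\textbf{Two small points.} First, the bounded-away-from-zero conditions in your Steps~1 and~3 must hold, with probability tending to one, for the fold-specific estimates $\widehat{\pi}^{(-k)}_{t,r}$, $\widehat{\eta}^{(-k)}_1$ and $\widehat{\widetilde{\mu}}^{(-k)}_{t,0}(\exp\{\gamma_t s_t(Y)\};X)$ themselves. Imposing them only on the limits is not enough, because $L_2$-convergence to a limit bounded away from zero does not control $\E[1/\widehat{\pi}^2]$.

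Second, the growth of $\widehat\tau$ in Step~4 is easier than you expect. The left side of \eqref{trunc1} is nonincreasing in $\tau$. For any fixed $C$ it stays bounded away from zero at $\tau=C$ (for nondegenerate $\upsilon$), while the right side tends to zero. Hence $\widehat\tau^{(k)}_{t,1}\to\infty$ in probability. With bounded $\upsilon$, truncation is eventually inactive altogether.
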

\textbf{Proof of Theorem \ref{theorem_robustness}}
Let $\widehat{\pi}_{t, r}(X)\xrightarrow{P}\pi_{t, r}^\ast(X)$ and $\widehat{\eta}_1(X, r, t)\xrightarrow{P}\eta^\ast_1(X, r, t)$, $r=0, 1$, for any $\pi_{1, r}^\ast(X)$ and $\eta^\ast_1(X, r, t)$. Let $\widehat{\widetilde{F}}_{t, r}(y|X)\xrightarrow{P}\widetilde{F}_{t, r}(y|X), r=0, 1$ where $\widetilde{F}_{t, r}(y|X)$ is the true cumulative conditional distribution of outcome $Y$ given $T=t$, $R=r$ and $X$. Since $\widehat{\widetilde{\mu}}_{t, 0}\big(Y\exp \{ \gamma_t s_t( Y)\} ; X\big) = \int y\exp\{\gamma_t s_t(y)\} d \widehat{\widetilde{F}}_{t, 0}(y|X)$, $\int \exp\{\gamma_t s_t(y)\} d \widehat{\widetilde{F}}_{t, 0}(y|X)= \int \exp\{\gamma_t s_t(y)\} d \widehat{\widetilde{F}}_{t, 0}(y|X)$ and $\widehat{\widetilde{\mu}}_{t, r}\big(Y; X\big) = \int y d \widehat{\widetilde{F}}_{t, r}(y|X)$, we also have $\widehat{\widetilde{\mu}}_{t, 0}\big(Y\exp \{ \gamma_t s_t( Y)\} ; X\big)\xrightarrow{P}\widetilde{\mu}_{t, 0}\big(Y\exp \{ \gamma_t s_t( Y)\} ; X\big)$, $\widehat{\widetilde{\mu}}_{t, 0}\big(\exp \{ \gamma_t s_t( Y)\} ; X\big)\xrightarrow{P}\widetilde{\mu}_{t, 0}\big(\exp \{ \gamma_t s_t( Y)\} ; X\big)$ and $\widehat{\widetilde{\mu}}_{t, r}\big(Y; X\big)\xrightarrow{P}\widetilde{\mu}_{t, r}\big(Y; X\big)$. Let $\widetilde{P}^\star=\widetilde{P}\backslash\{\pi_{t, r}(X), \eta_1(X, r, t): t, r\in\{0, 1\}\}\cup\{\pi_{t, r}^\ast(X), \eta^\ast_1(X, r, t): t, r\in\{0, 1\}\}$. 

\citet{nabi2024semiparametric} has proven consistency of the truncation procedure: $\widehat{\psi}_t^\dagger(\gamma_t)\xrightarrow{P}\widehat{\psi}_t(\gamma_t)$. By the weak law of large numbers and the continuous mapping theorem, 
$$\widehat{\psi}_t(\gamma_t)\xrightarrow{P}\E\left[\upsilon_{t}(\widetilde{P}^\star; \gamma_t)(\widetilde{O})\right]$$
By simply plugging in $\widetilde{P}^\ast=\widetilde{P}^\star$ to the remainder term in Lemma~\ref{remainder} we have
\[
    \E\left[\upsilon_{t}(\widetilde{P}^\star; \gamma_t)(\widetilde{O})\right]-\psi_t(\widetilde{P}; \gamma_t)=Rem_t(\widetilde{P}^\star, \widetilde{P})=0
\]
Thus, 
$$\widehat{\psi}_t(\gamma_t)\xrightarrow{P}\psi_t(\widetilde{P}; \gamma_t)$$
Now we have proven $\widehat{\psi}_t^\dagger(\gamma_t)\xrightarrow{P}\psi_t(\widetilde{P}; \gamma_t)$ if $\widetilde{F}_{t, 0}(y|X)$ and $\widetilde{F}_{t, 1}(y|X)$ are correctly specified. 

\subsection{Modeling and estimation of $\widetilde{P}$}\label{nuisance}

There are different implementations and constraints that we can use to fit the single index models \citep{redd2025sensiatrpackageconducting}. In this paper, we apply the Gaussian kernel, and estimate $\beta_t$ and $h$ under the restriction that $\beta_t$ has norm 1. By default, initial values for $\beta_t$ are estimated using the Minimum Average Variance Estimation (MAVE) method \citep{mave_xia_2002, mave_wang_2008}. If returned NA, then we estimate the initial values of $\beta_t$ using the cumulative sliced inverse regression method \citep{slice_zhu_2010}. 

\citet{nabi2024semiparametric} has proven that the regression model for fixed bounded outcome $\rho(Y; X; \gamma_t)$, estimated using single index models, satisfy the following rate conditions:
$$\bigg|\bigg|\widehat{\widetilde{\mu}}(\rho(Y; X; \gamma_t); X)-\widetilde{\mu}(\rho(Y; X; \gamma_t); X)\bigg|\bigg|_{L_2}=O_p\left(\left(\frac{\log n}{n}\right)^{\frac{2}{2r+1}}\right)$$
when $\widetilde{F}_{t, r}(y, X\beta_{t, r}; \beta_{t, r})$ has Lipschitz (r + 1)th-order derivatives. With $r\geq1$, we have
$$\bigg|\bigg| \widehat{\widetilde{\mu}}_{t, 0}\big(\exp(\gamma_t s_t(Y)); X\big) - \widetilde{\mu}_{t, 0}\big(\exp(\gamma_t s_t(Y)); X\big) \bigg|\bigg|_{L_2}=O_P(n^{-2/5})$$ 
and 
$$\bigg|\bigg| \widehat{\widetilde{\mu}}_{t, 0}\big(Y\exp(\gamma_t s_t(Y)); X\big) - \widetilde{\mu}_{t, 0}\big(Y\exp(\gamma_t s_t(Y)); X\big) \bigg|\bigg|_{L_2}=O_P(n^{-2/5})$$
and 
$$\bigg|\bigg| \widehat{\widetilde{\mu}}_{t, r}\big(Y; X\big) - \widetilde{\mu}_{t, r}\big(Y; X\big) \bigg|\bigg|_{L_2}=O_P(n^{-2/5})$$

\subsection{Asymptotic properties of $\widehat{\psi}_t^\dagger(\gamma_t)$}\label{asymptotic}

\textbf{Proof of Theorem \ref{asymptotic_theorem}}
Since $\E[\phi_t(\widetilde{P}; \gamma_t)]=0$, we can decompose $\sqrt{n}(\widehat{\psi}_t^\dagger(\gamma_t)-\psi_t(\widetilde{P}; \gamma_t))$ into
\begin{align*}
    \sqrt{n}(\widehat{\psi}_t^\dagger(\gamma_t)-\psi_t(\widetilde{P}; \gamma_t)) &= \sqrt{n}(\widehat{\psi}_t^\dagger(\gamma_t)-\widehat{\psi}_t(\gamma_t)+\widehat{\psi}_t(\gamma_t)-\psi_t(\widetilde{P}; \gamma_t))\\
    &= \sqrt{n}(\widehat{\psi}_t^\dagger(\gamma_t)-\widehat{\psi}_t(\gamma_t))+\sqrt{n}\times\frac{1}{n}\sum_i\phi_t(\widetilde{P}; \gamma_t)(\widetilde{O}_i)+R_{n, 1}+R_{n, 2}
\end{align*}
where 
\begin{align*}
R_{n,1} &= \sqrt{n}\sum_{k=1}^K\left(\frac{n_k}{n}\right)(\frac{1}{n_k}\sum_{i: S_i=k}-\E) \left[ \phi_t(\widehat{\widetilde{P}}^{(-k)}; \gamma_t)(\widetilde{O})-\phi_t(\widetilde{P}; \gamma_t)(\widetilde{O})\right]\\
R_{n,2} &=\sqrt{n}\sum_{k=1}^K\left(\frac{n_k}{n}\right) Rem_t(\widehat{\widetilde{P}}^{(-k)}, \widetilde{P})
\end{align*}

\citet{nabi2024semiparametric} has proven the asymptotic negligibility of truncation procedure \citep{tuning_wang_2021}, that is, $\sqrt{n}(\widehat{\psi}_t^\dagger(\gamma_t)-\widehat{\psi}_t(\gamma_t))=o_p(1)$. By the sample splitting proposition in~\citet{kennedy2023semiparametricdoublyrobusttargeted}, to prove $R_{n,1}=o_P(1)$, it suffices to show $\left\lVert \phi_t(\widehat{\widetilde{P}}^{(-k)}; \gamma_t)(\widetilde{O})-\phi_t(\widetilde{P}; \gamma_t)(\widetilde{O})\right\rVert_{L_2}=o_p(1)$. Since we assume $|Y|$ and $|\exp(\gamma_ts_t(Y))|$ are bounded in probability, by the triangle and Cauchy-Schwarz inequalities, it is easy to show $\left\lVert \phi_t(\widehat{\widetilde{P}}^{(-k)}; \gamma_t)(\widetilde{O})-\phi_t(\widetilde{P}; \gamma_t)(\widetilde{O})\right\rVert_{L_2}=o_p(1)$. Thus, $R_{n,1}=o_P(1)$. 

Applying Jensen's inequality and Cauchy-Schwarz inequality to the results in Lemma~\ref{remainder}, we can have a upper bound for $\left|Rem_t(\widehat{\widetilde{P}}^{(-k)}, \widetilde{P})\right|$ as
{\small
    \begin{align*}
        &\lVert \widehat{\widetilde{\mu}}_{t, 0}^{(-k)}(Y; X) - \widetilde{\mu}_{t, 0}(Y; X) \rVert_{L_2}\lVert \widehat{\eta}_{1}^{(-k)}(X, R=0, T=t) - \eta_{1}(X, R=0, T=t) \rVert_{L_2} \\
        &\hspace{0.5cm}+ \bigg(\lVert \widehat{\widetilde{\mu}}^{(-k)}_{t, 0}(Y\exp(\gamma_t s_t(Y)); X) - \widetilde{\mu}_{t, 0}(Y\exp(\gamma_t s_t(Y)); X)\rVert_{L_2}+\lVert \widehat{\widetilde{\mu}}^{(-k)}_{t, 0}(\exp(\gamma_t s_t(Y)); X) - \widetilde{\mu}_{t, 0}(\exp(\gamma_t s_t(Y)); X)\rVert_{L_2}\bigg)\\
        &\hspace{1.5cm}\times\bigg(\lVert \widehat{\eta}_1^{(-k)}(X, R=0, T=t) - \eta_1(X, R=0, T=t) \rVert_{L_2}+\lVert \widehat{\pi}^{(-k)}_{t, 0}(X) - \pi_{t, 0}(X) \rVert_{L_2}\\
        &\hspace{3cm}+\lVert \widehat{\widetilde{\mu}}^{(-k)}_{t, 0}(\exp(\gamma_t s_t(Y)); X) - \widetilde{\mu}_{t, 0}(\exp(\gamma_t s_t(Y)); X)\rVert_{L_2}\bigg)\\
        &\hspace{0.5cm}+ \lVert \widehat{\widetilde{\mu}}^{(-k)}_{t, 1}(Y; X) - \widetilde{\mu}_{t, 1}(Y; X) \rVert_{L_2}\times\bigg(\lVert \widehat{\pi}^{(-k)}_{t, 1}(X) - \pi_{t, 1}(X) \rVert_{L_2}+\lVert \widehat{\eta}_{1}^{(-k)}(X, R=1, T=t) - \eta_{1}(X, R=1, T=t) \rVert_{L_2}\bigg)
    \end{align*}
    }
Since we apply the same nuisance estimation methods for each fold $k$, the upper bound for $\left|Rem_t(\widehat{\widetilde{P}}^{(-k)}, \widetilde{P})\right|$ is $o_P(n^{-1/2})$ for $k=1, ..., K$ under the estimation methods for $\widetilde{P}$ in Section~\ref{estimateP} and Section~\ref{sec:data}. Since $n_k=O(n)$, we have $R_{n,2}=o_P(1)$. 

Thus, 
\begin{eqnarray*}  
\sqrt{n}(\widehat{\psi}_t^\dagger(\gamma_t)-\psi_t(\widetilde{P}; \gamma_t))=\sqrt{n}\times\frac{1}{n}\sum_i\phi_t(\widetilde{P}; \gamma_t)(\widetilde{O}_i)+o_p(1)
\end{eqnarray*}
By the central limit theorem and $\E[\phi_t(\widetilde{P}; \gamma_t)(\widetilde{O})]=0$, 
\[
\sqrt{n}(\widehat{\psi}_t^\dagger(\gamma_t)-\psi_t(\widetilde{P}; \gamma_t))\stackrel{D(\widetilde{P})}{\rightarrow} \mathcal{N}(0, \E[\phi_t(\widetilde{P}; \gamma_t)(\widetilde{O})^2])
\]

\begin{lemma}\label{mis_asymp}
Assume $\widehat{\pi}_{t, r}(X)\xrightarrow{P}\pi_{t, r}^\ddagger(X)$, $\widehat{\eta}_1(X, r, t)\xrightarrow{P}\eta^\ddagger_1(X, r, t)$, and $\widehat{\widetilde{F}}_{t, r}(y|X)\xrightarrow{P}\widetilde{F}_{t, r}(y|X), r=0, 1$ where $\pi_{t, r}^\ddagger(X)\neq\pi_{t, r}(X)$ and $\eta^\ddagger_1(X, r, t)\neq\eta_1(X, r, t)$. Additionally assume (1) $|Y|$ and $|\exp(\gamma_ts_t(Y))|$ are bounded in probability, (2) $\pi_{t, r}^\ddagger(X)$, $\eta^\ddagger_1(X, r, t)$, $\mu_{t, 0}(\exp(\gamma_t s_t(Y);X)$ are bounded away from zero with probability one, (3) for any bounded functions of outcome $h(Y)$, $\lVert \widehat{\widetilde{\mu}}_{t, r}(h(Y); X) - \widetilde{\mu}_{t, r}(h(Y); X) \rVert_{L_2}=o_p(n^{-1/2})$. 

Let $\widetilde{P}^\ddagger=\widetilde{P}\backslash\{\pi_{t, r}(X), \eta_1(X, r, t): t, r\in\{0, 1\}\}\cup\{\pi_{t, r}^\ddagger(X), \eta^\ddagger_1(X, r, t): t, r\in\{0, 1\}\}$. Under Assumptions (A1-A5),
\[
\sqrt{n}(\widehat{\psi}_t^\dagger(\gamma_t)-\psi_t(\widetilde{P}; \gamma_t))\stackrel{D(\widetilde{P}^\ddagger)}{\rightarrow} \mathcal{N}(0, \E[\phi_t(\widetilde{P}^\ddagger; \gamma_t)(\widetilde{O})^2])
\]
\end{lemma}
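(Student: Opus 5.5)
I would follow the decomposition used in the proof of Theorem~\ref{asymptotic_theorem}, but centre everything at $\widetilde{P}^\ddagger$ instead of $\widetilde{P}$. Write
\begin{align*}
\sqrt{n}\{\widehat{\psi}_t^\dagger(\gamma_t)-\psi_t(\widetilde{P};\gamma_t)\}
&=\sqrt{n}\{\widehat{\psi}_t^\dagger(\gamma_t)-\widehat{\psi}_t(\gamma_t)\}
+\frac{1}{\sqrt{n}}\sum_i \phi_t(\widetilde{P}^\ddagger;\gamma_t)(\widetilde{O}_i)\\
&\quad+R^\ddagger_{n,1}+R^\ddagger_{n,2}+\sqrt{n}\,Rem_t(\widetilde{P}^\ddagger,\widetilde{P}).
\end{align*}
Here $R^\ddagger_{n,1}$ is the split-sample empirical process term applied to $\phi_t(\widehat{\widetilde{P}}^{(-k)};\gamma_t)-\phi_t(\widetilde{P}^\ddagger;\gamma_t)$. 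The term $R^\ddagger_{n,2}=\sqrt{n}\sum_k (n_k/n)\{Rem_t(\widehat{\widetilde{P}}^{(-k)},\widetilde{P})-Rem_t(\widetilde{P}^\ddagger,\widetilde{P})\}$.

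The first piece is $o_p(1)$ by the truncation result of \citet{nabi2024semiparametric}, exactly as before. For the last piece, note that $\widetilde{P}^\ddagger$ has the true outcome regressions $\widetilde{\mu}_{t,r}$. Every summand of Lemma~\ref{remainder} contains a factor $\widetilde{\mu}_{t,r}(\cdot)-\widetilde{\mu}^\ast_{t,r}(\cdot)$, so plugging in gives $Rem_t(\widetilde{P}^\ddagger,\widetilde{P})=0$, as in the proof of Theorem~\ref{theorem_robustness}. Consequently $\E[\phi_t(\widetilde{P}^\ddagger;\gamma_t)(\widetilde{O})]=0$. The middle sum is then an i.i.d.\ mean-zero average with finite variance, since $Y$ and $\exp(\gamma_t s_t(Y))$ are bounded and the denominators $\pi^\ddagger$, $\eta^\ddagger$, $\mu_{t,0}(\exp(\gamma_t s_t(Y));X)$ are bounded away from zero. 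The CLT therefore gives the limit $\mathcal{N}(0,\E[\phi_t(\widetilde{P}^\ddagger;\gamma_t)^2])$.

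For $R^\ddagger_{n,1}$, I would invoke the sample-splitting proposition of \citet{kennedy2023semiparametricdoublyrobusttargeted}. It suffices that $\|\phi_t(\widehat{\widetilde{P}}^{(-k)};\gamma_t)-\phi_t(\widetilde{P}^\ddagger;\gamma_t)\|_{L_2}=o_p(1)$. This follows from the assumed convergence $\widehat\pi\to\pi^\ddagger$, $\widehat\eta\to\eta^\ddagger$, $\widehat{\widetilde{\mu}}\to\widetilde{\mu}$, together with the boundedness conditions, via the triangle inequality on each ratio.

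The main obstacle is $R^\ddagger_{n,2}$, because the propensity errors $\widehat\pi-\pi$ and $\widehat\eta-\eta$ no longer vanish and the product structure alone gives nothing. Instead I would use the explicit form in Lemma~\ref{remainder}. Each term is a product of an outcome-regression error (for $h(Y)\in\{Y,\exp(\gamma_t s_t(Y)),Y\exp(\gamma_t s_t(Y))\}$, all bounded) and a factor built from $\widehat\pi$, $\widehat\eta$, and the true and estimated $\widetilde{\mu}$'s. The boundedness assumptions make that factor uniformly $O_p(1)$. Cauchy--Schwarz, used as in the proof of Theorem~\ref{asymptotic_theorem}, then bounds $|Rem_t(\widehat{\widetilde{P}}^{(-k)},\widetilde{P})|$ by a constant times $\sum\|\widehat{\widetilde{\mu}}_{t,r}(h(Y);X)-\widetilde{\mu}_{t,r}(h(Y);X)\|_{L_2}$. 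By assumption (3) this is $o_p(n^{-1/2})$. Since $Rem_t(\widetilde{P}^\ddagger,\widetilde{P})=0$ and $n_k=O(n)$, we get $R^\ddagger_{n,2}=o_p(1)$. Slutsky's theorem then completes the argument.
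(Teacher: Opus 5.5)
Your outline follows the paper's proof step for step. You recentre the split-sample expansion at $\widetilde{P}^\ddagger$, dismiss truncation by the Huberization result, and control the empirical-process term with the sample-splitting lemma. You then kill the remainder because each term of Lemma~\ref{remainder} carries an outcome-regression error that is $o_p(n^{-1/2})$ by assumption (3), multiplied by an $O_p(1)$ co-factor. The paper gets that $O_p(1)$ bound via $\|\widehat{\pi}-\pi\|\le\|\widehat{\pi}-\pi^\ddagger\|+\|\pi^\ddagger-\pi\|$, while you use boundedness directly. Subtracting $Rem_t(\widetilde{P}^\ddagger,\widetilde{P})=0$ inside $R^\ddagger_{n,2}$ changes nothing.

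The step that does not hold as written is ``Consequently $\E[\phi_t(\widetilde{P}^\ddagger;\gamma_t)(\widetilde{O})]=0$''. Since $Rem_t(\widetilde{P}^\ast,\widetilde{P})=\E[\upsilon_t(\widetilde{P}^\ast;\gamma_t)(\widetilde{O})]-\psi_t(\widetilde{P};\gamma_t)$, the fact $Rem_t(\widetilde{P}^\ddagger,\widetilde{P})=0$ only gives $\E[\upsilon_t(\widetilde{P}^\ddagger;\gamma_t)(\widetilde{O})]=\psi_t(\widetilde{P};\gamma_t)$. Hence $\E[\phi_t(\widetilde{P}^\ddagger;\gamma_t)(\widetilde{O})]=\psi_t(\widetilde{P};\gamma_t)-\psi_t(\widetilde{P}^\ddagger;\gamma_t)$. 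The functional itself depends on $\pi_{t,0}$:
\[
\psi_t(\widetilde{P}^\ddagger;\gamma_t)-\psi_t(\widetilde{P};\gamma_t)=\E\Big[g_0(X)\{\pi^\ddagger_{t,0}(X)-\pi_{t,0}(X)\}\Big\{\widetilde{\mu}_{t,0}(Y;X)-\frac{\widetilde{\mu}_{t,0}(Y\exp\{\gamma_t s_t(Y)\};X)}{\widetilde{\mu}_{t,0}(\exp\{\gamma_t s_t(Y)\};X)}\Big\}\Big].
\]
This is nonzero in general when $\gamma_t\neq 0$ and $\pi^\ddagger_{t,0}\neq\pi_{t,0}$. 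Correspondingly, your displayed decomposition is an identity only after adding $-\sqrt{n}\,\E[\phi_t(\widetilde{P}^\ddagger;\gamma_t)(\widetilde{O})]$; the paper's display keeps this term.

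The fix is to centre the leading term at the true parameter:
\[
\sqrt{n}\{\widehat{\psi}_t^\dagger(\gamma_t)-\psi_t(\widetilde{P};\gamma_t)\}=n^{-1/2}\sum_i\{\upsilon_t(\widetilde{P}^\ddagger;\gamma_t)(\widetilde{O}_i)-\psi_t(\widetilde{P};\gamma_t)\}+o_p(1).
\]
Its summands are mean zero exactly because $Rem_t(\widetilde{P}^\ddagger,\widetilde{P})=0$, so the limiting variance is $\mathrm{Var}\{\upsilon_t(\widetilde{P}^\ddagger;\gamma_t)(\widetilde{O})\}$. This agrees with $\E[\phi_t(\widetilde{P}^\ddagger;\gamma_t)(\widetilde{O})^2]$ only if $\phi_t(\widetilde{P}^\ddagger;\gamma_t)$ is read as centred at $\psi_t(\widetilde{P};\gamma_t)$. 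Taken literally, the stated variance is too large by $\{\psi_t(\widetilde{P};\gamma_t)-\psi_t(\widetilde{P}^\ddagger;\gamma_t)\}^2$.

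The paper's proof makes the same slip. It cites Theorem~\ref{theorem_robustness}, which establishes $\E[\upsilon_t(\widetilde{P}^\star;\gamma_t)]=\psi_t(\widetilde{P};\gamma_t)$, as if it gave $\E[\phi_t(\widetilde{P}^\ddagger;\gamma_t)]=0$. So this is an imprecision you share with the paper, not a departure from its route.

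A smaller point, also left tacit in the paper: for the Cauchy--Schwarz co-factors to be $O_p(1)$, you need the fitted $\widehat{\pi}^{(-k)}_{t,r}$ and $\widehat{\eta}^{(-k)}_1$ to be bounded away from zero. Assumption (2) only gives this for their limits.
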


\textbf{Proof of Lemma \ref{mis_asymp}}

\begin{align*}
    \sqrt{n}(\widehat{\psi}_t^\dagger(\gamma_t)-\psi_t(\widetilde{P}; \gamma_t)) &=\sqrt{n}(\widehat{\psi}_t^\dagger(\gamma_t)-\widehat{\psi}_t(\gamma_t)+\widehat{\psi}_t(\gamma_t)-\psi_t(\widetilde{P}; \gamma_t))\\
    &= \sqrt{n}(\widehat{\psi}_t^\dagger(\gamma_t)-\widehat{\psi}_t(\gamma_t))+\sqrt{n}\times\frac{1}{n}\sum_i\phi_t(\widetilde{P}; \gamma_t)(\widetilde{O}_i)\\
    &\hspace{1cm}+\sqrt{n}\sum_{k=1}^K\left(\frac{n_k}{n}\right)(\frac{1}{n_k}\sum_{i: S_i=k}-\E) \left[ \phi_t(\widehat{\widetilde{P}}^{(-k)}; \gamma_t)(\widetilde{O})-\phi_t(\widetilde{P}; \gamma_t)(\widetilde{O})\right]\\
    &\hspace{1cm}+\sqrt{n}\sum_{k=1}^K\left(\frac{n_k}{n}\right) Rem_t(\widehat{\widetilde{P}}^{(-k)}, \widetilde{P})\\
    &= \sqrt{n}(\widehat{\psi}_t^\dagger(\gamma_t)-\widehat{\psi}_t(\gamma_t))+\sqrt{n}\times\frac{1}{n}\sum_i\phi_t(\widetilde{P}^\ddagger; \gamma_t)(\widetilde{O}_i)\\
    &\hspace{1cm}-\sqrt{n}\E\left[\phi_t(\widetilde{P}^\ddagger; \gamma_t)(\widetilde{O})\right]+\sqrt{n}\E\left[\phi_t(\widetilde{P}; \gamma_t)(\widetilde{O})\right]\\
    &\hspace{1cm}+\sqrt{n}\sum_{k=1}^K\left(\frac{n_k}{n}\right)(\frac{1}{n_k}\sum_{i: S_i=k}-\E) \left[ \phi_t(\widehat{\widetilde{P}}^{(-k)}; \gamma_t)(\widetilde{O})-\phi_t(\widetilde{P}^\ddagger; \gamma_t)(\widetilde{O})\right]\\
    &\hspace{1cm}+\sqrt{n}\sum_{k=1}^K\left(\frac{n_k}{n}\right) Rem_t(\widehat{\widetilde{P}}^{(-k)}, \widetilde{P})
\end{align*}
Using Theorem~\ref{theorem_robustness}, we know $\E\left[\phi_t(\widetilde{P}^\ddagger; \gamma_t)(\widetilde{O})\right]=\E\left[\phi_t(\widetilde{P}; \gamma_t)(\widetilde{O})\right]=0$. As mentioned in Theorem~\ref{theorem_robustness}, we have $\sqrt{n}(\widehat{\psi}_t^\dagger(\gamma_t)-\widehat{\psi}_t(\gamma_t))=o_p(1)$. By sample splitting and $\left\lVert \phi_t(\widehat{\widetilde{P}}^{(-k)}; \gamma_t)(\widetilde{O})-\phi_t(\widetilde{P}^\ddagger; \gamma_t)(\widetilde{O})\right\rVert_{L_2}\allowbreak=o_p(1)$, we have $\sqrt{n}\sum_{k=1}^K\left(\frac{n_k}{n}\right)(\frac{1}{n_k}\sum_{i: S_i=k}-\E) \left[ \phi_t(\widehat{\widetilde{P}}^{(-k)}; \gamma_t)(\widetilde{O})-\phi_t(\widetilde{P}^\ddagger; \gamma_t)(\widetilde{O})\right]=o_p(1)$. \\
By the triangle inequality, for $r=0, 1$, 
\begin{align*}
    \lVert \widehat{\pi}^{(-k)}_{t, r}(X) - \pi_{t, r}(X) \rVert_{L_2}&=\lVert \widehat{\pi}^{(-k)}_{t, r}(X) - \pi_{t, r}^\ddagger(X)+ \pi_{t, r}^\ddagger(X)-\pi_{t, r}(X)\rVert_{L_2}\\
    &\leq \lVert \widehat{\pi}^{(-k)}_{t, r}(X) - \pi_{t, r}^\ddagger(X)\rVert_{L_2}+\lVert \pi_{t, r}^\ddagger(X)-\pi_{t, r}(X)\rVert_{L_2}
\end{align*}
and
\begin{align*}
    \lVert \widehat{\eta}_{1}^{(-k)}(X, r, t) - \eta_{1}(X, r, t) \rVert_{L_2}\leq\lVert \widehat{\eta}_{1}^{(-k)}(X, r, t) - \eta_{1}^\ddagger(X, r, t) \rVert_{L_2}+\lVert \eta_{1}^\ddagger(X, r, t)- \eta_{1}(X, r, t) \rVert_{L_2}.
\end{align*}
Under the assumption that $\lVert \widehat{\widetilde{\mu}}_{t, r}(h(Y); X) - \widetilde{\mu}_{t, r}(h(Y); X) \rVert_{L_2}=o_p(n^{-1/2})$, we have 
\begin{align*}
    &\lVert \widehat{\widetilde{\mu}}_{t, r}^{(-k)}(h(Y); X) - \widetilde{\mu}_{t, r}(h(Y); X) \rVert_{L_2}\lVert \widehat{\pi}^{(-k)}_{t, r}(X) - \pi_{t, r}(X) \rVert_{L_2}\\
    &\hspace{1cm}\leq \lVert \widehat{\widetilde{\mu}}_{t, r}^{(-k)}(h(Y); X) - \widetilde{\mu}_{t, r}(h(Y); X) \rVert_{L_2}\lVert \widehat{\pi}^{(-k)}_{t, r}(X) - \pi_{t, r}^\ddagger(X)\rVert_{L_2}\\
    &\hspace{1.5cm}+\lVert \widehat{\widetilde{\mu}}_{t, r}^{(-k)}(h(Y); X) - \widetilde{\mu}_{t, r}(h(Y); X) \rVert_{L_2}\lVert \pi_{t, r}^\ddagger(X)-\pi_{t, r}(X)\rVert_{L_2}\\
    &\hspace{1cm}=o_P(n^{-1/2})
\end{align*}
and 
\begin{align*}
    &\lVert \widehat{\widetilde{\mu}}_{t, r}^{(-k)}(h(Y); X) - \widetilde{\mu}_{t, r}(h(Y); X) \rVert_{L_2}\lVert \widehat{\eta}_{1}^{(-k)}(X, r, t) - \eta_{1}(X, r, t) \rVert_{L_2}\\
    &\hspace{1cm}\leq \lVert \widehat{\widetilde{\mu}}_{t, r}^{(-k)}(h(Y); X) - \widetilde{\mu}_{t, r}(h(Y); X) \rVert_{L_2}\lVert \widehat{\eta}_{1}^{(-k)}(X, r, t) - \eta_{1}^\ddagger(X, r, t) \rVert_{L_2}\\
    &\hspace{1.5cm}+\lVert \widehat{\widetilde{\mu}}_{t, r}^{(-k)}(h(Y); X) - \widetilde{\mu}_{t, r}(h(Y); X) \rVert_{L_2}\lVert \eta_{1}^\ddagger(X, r, t)- \eta_{1}(X, r, t) \rVert_{L_2}\\
    &\hspace{1cm}=o_P(n^{-1/2}).
\end{align*}
Since the upper bound for $\left|Rem_t(\widehat{\widetilde{P}}^{(-k)}, \widetilde{P})\right|$ is the summation of product terms between the outcome models and propensity scores/missing outcome probabilities, we know that the upper bound for $\left|Rem_t(\widehat{\widetilde{P}}^{(-k)}, \widetilde{P})\right|$ is $o_P(n^{-1/2})$. Thus, $\sqrt{n}\sum_{k=1}^K\left(\frac{n_k}{n}\right) Rem_t(\widehat{\widetilde{P}}^{(-k)}, \widetilde{P})=o_p(1)$. And we have
\begin{eqnarray*}  
\sqrt{n}(\widehat{\psi}_t^\dagger(\gamma_t)-\psi_t(\widetilde{P}; \gamma_t))=\sqrt{n}\times\frac{1}{n}\sum_i\phi_t(\widetilde{P}^\ddagger; \gamma_t)(\widetilde{O}_i)+o_p(1).
\end{eqnarray*}
By the central limit theorem and $\E[\phi_t(\widetilde{P}^\ddagger; \gamma_t)(\widetilde{O})]=0$, 
\[
\sqrt{n}(\widehat{\psi}_t^\dagger(\gamma_t)-\psi_t(\widetilde{P}; \gamma_t))\stackrel{D(\widetilde{P}^\ddagger)}{\rightarrow} \mathcal{N}(0, \E[\phi_t(\widetilde{P}^\ddagger; \gamma_t)(\widetilde{O})^2]).
\]

In conclusion, when $\pi_{t, r}(X)$ and $\eta_1(X, r, t), r=0, 1$ are mis-specified, we can still obtain a $\sqrt{n}$-consistent estimator of $\psi(\widetilde{P};\gamma_t)$ when $\lVert \widehat{\widetilde{\mu}}_{t, r}^{(-k)}(h(Y); X) - \widetilde{\mu}_{t, r}(h(Y); X) \rVert_{L_2}=o_P(n^{-1/2})$.

\section{Proofs for $\E[Y(t)|R=r]$ under Assumptions (A1-A5)}
\label{app:proofs_R0R1}

We can similarly derive the influence function for $\psi_{t, 1}(\widetilde{P})$ and $\psi_{t, 0}(\widetilde{P}; \gamma_t)$. 
\begin{theorem}
    An influence function for $\psi_{t, 1}(\widetilde{P})$ in \eqref{eq:parameter_id1}, denoted by $\phi_{t, 1}(\widetilde{P})(\widetilde{O})$, is:
    \begin{equation*}
        \phi_{t, 1}(\widetilde{P})(\widetilde{O})=\underbrace{\frac{M}{\eta_1(X, R, T)}\upsilon^c_{t, 1}(\widetilde{P})(O)+\Big\{ 1 - \frac{M}{\eta_{1}(X, R, T)}  \Big\} a_{t, 1}(\widetilde{P})(X, R, T)}_{\upsilon_{t, 1}(\widetilde{P})(\widetilde{O})} -\frac{\I(R=1)}{P(R=1)}\psi_{t, 1}(\widetilde{P})
    \end{equation*}
    where $\upsilon^c_{t, 1}(\widetilde{P})(O)$ is given by
    \begin{equation*}
        \begin{aligned}
            \upsilon^c_{t, 1}(\widetilde{P})(O)= \frac{\I(R=1)}{P(R=1)}\left\{\frac{\I(T=t)}{\pi_{t, 1}(X)}\left(Y-\widetilde{\mu}_{t, 1}(Y; X)\right)+\widetilde{\mu}_{t, 1}(Y; X)\right\}
        \end{aligned}
    \end{equation*}
    and $a_{t, 1}(\widetilde{P})(X, R, T)$ is given by
    \begin{equation*}
        \begin{aligned}
            a_{t, 1}(\widetilde{P})(X, R, T)= \frac{\I(R=1)}{P(R=1)}\widetilde{\mu}_{t, 1}(Y; X)
        \end{aligned}
    \end{equation*}
\end{theorem}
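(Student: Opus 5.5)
The plan is to mirror the two-stage derivation used for Theorem~\ref{EIF_miss}: first find the full-data influence function for $\psi_{t,1}(P)$ when there is no missingness, then map it to the observed-data model under (A5) using the augmented inverse-probability-weighting representation.

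\textbf{Full-data step.} Write the estimand as
\[
\psi_{t,1}(P)=\frac{\E[g_1(X)\mu_{t,1}(Y;X)]}{\E[g_1(X)]},
\]
a ratio of two smooth functionals. Differentiate along the same parametric submodels used in Appendix~\ref{app:proofs_EIF}, perturbing $F$ by $h$, $g_r$ by $l$, $\pi_{t,1}$ by $m$, and $F_{t,1}$ by $k_{t,1}$.

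\begin{itemize}
\item The denominator $P(R=1)$ has influence function $\I(R=1)-P(R=1)$.
\item For the numerator $\E[\I(R=1)\mu_{t,1}(Y;X)]$, the score calculations match those for the $R=1$ component in Theorem~\ref{EIF_miss}. The $\beta$-derivative is $0$. The $\eta_{t,1}$-derivative is matched by $d_{t,1}=(Y-\mu_{t,1})/\pi_{t,1}(X)$. The $X$ and $R$ parts together are matched by $\I(R=1)\mu_{t,1}(Y;X)$.
\end{itemize}

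This gives the numerator influence function
\[
\I(R=1)\Big\{\tfrac{\I(T=t)}{\pi_{t,1}(X)}(Y-\mu_{t,1})+\mu_{t,1}\Big\}-\E[\I(R=1)\mu_{t,1}].
\]
By the quotient (delta) rule, the full-data influence function is the numerator influence function divided by $P(R=1)$, minus $\psi_{t,1}$ times $\{\I(R=1)-P(R=1)\}/P(R=1)$. The constants cancel, leaving
\[
\phi^c_{t,1}=\upsilon^c_{t,1}(P)(O)-\frac{\I(R=1)}{P(R=1)}\psi_{t,1}(P).
\]

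\textbf{Missing-data step.} Follow Steps 1 and 2 of the proof of Theorem~\ref{EIF_miss} verbatim.

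\begin{itemize}
\item By Theorem 7.2 of \citet{tsiatis2006semiparametric}, observed-data influence functions take the form $\frac{M}{\eta_1}\phi^c_{t,1}+a(X,R,T)\{1-\frac{M}{\eta_1}\}$.
\item Choose $a$ so that the expression is orthogonal to $\Lambda_{M|O}=\{(M-\eta_1)h(X,R,T)\}$. This forces $a=\E[\phi^c_{t,1}\mid X,R,T]$, which under (A5) we may compute using $\widetilde{\mu}$.
\item Evaluate the conditional expectation. The residual term $\frac{\I(T=t)}{\pi_{t,1}}(Y-\widetilde{\mu}_{t,1})$ has conditional mean zero given $(X,R,T)$ by MAR, so
\[
a=\frac{\I(R=1)}{P(R=1)}\widetilde{\mu}_{t,1}(Y;X)-\frac{\I(R=1)}{P(R=1)}\psi_{t,1}=a_{t,1}-\frac{\I(R=1)}{P(R=1)}\psi_{t,1}.
\]
\end{itemize}

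Substituting $a$ back in, the $\psi_{t,1}$ terms combine as
\[
\frac{M}{\eta_1}(-c)+\Big\{1-\frac{M}{\eta_1}\Big\}(-c)=-c,\qquad c=\frac{\I(R=1)}{P(R=1)}\psi_{t,1}.
\]
This yields exactly the displayed $\phi_{t,1}$.

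\textbf{Main obstacle.} The delicate step is the ratio structure. Unlike $\psi_t$, the functional $\psi_{t,1}$ conditions on $R=1$, so the centering term is $\I(R=1)\psi_{t,1}/P(R=1)$ rather than a constant. I must check that the $\delta$-score (selection into $R$) is matched correctly after dividing by $P(R=1)$. I will verify this directly by confirming that $\E[\phi^c_{t,1}\{R-g_1(X)\}l(X)]$ equals the derivative of the ratio. The final projection onto the observed tangent space is trivial, as in Theorem~\ref{EIF_miss}.
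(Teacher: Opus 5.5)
Your proposal is correct and follows the same route the paper takes. The paper says only that the result is derived ``similarly'' to Theorem~\ref{EIF_miss}, i.e., a full-data nonparametric influence function followed by the Tsiatis Theorem~7.2 representation, with the augmentation term $a=\E[\phi^c_{t,1}\mid X,R,T]$ forced by orthogonality to $\Lambda_{M|O}$. Your quotient-rule treatment of the conditioning on $R=1$ is just a shortcut for the paper's direct pathwise-derivative matching, and the $h$- and $\delta$-score checks you plan do go through.
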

\begin{theorem}
    The non-parametric influence function for $\psi_{t, 0}(\widetilde{P}; \gamma_t)$ in \eqref{eq:parameter_id1}, denoted by $\phi_{t, 0}(\widetilde{P}; \gamma_t)(\widetilde{O})$, is:
    \begin{align*}
    \phi_{t, 0}(\widetilde{P}; \gamma_t)(\widetilde{O})
    &=  \underbrace{\frac{M}{\eta_{1}(X, R, T)}  \upsilon^c_{t, 0}(\widetilde{P}; \gamma_t)(O) +  \Big\{ 1 - \frac{M}{\eta_{1}(X, R, T)}  \Big\} a_{t, 0}(\widetilde{P}; \gamma_t)(X, R, T)}_{\upsilon_{t, 0}(\widetilde{P}; \gamma_t)(\widetilde{O})}-\frac{\mathbb{I}(R=0)}{P(R=0)}\psi_{t, 0}(\widetilde{P}; \gamma_t)  \ , 
\end{align*}
where $\upsilon^c_{t, 0}(P; \gamma_t)(O)$ is given by
{\small
\begin{equation*} 
\begin{aligned}
    \upsilon^c_{t, 0}(P; \gamma_t)(O) 
    =& \frac{\mathbb{I}(R = 0, T = t)}{P(R=0)}\times \Bigg\{ Y +  \frac{\pi_{1-t,0}(X)}{\pi_{t,0}(X)} \times \frac{\exp(\gamma_t s_t(Y))}{\mu_{t,0}(\exp(\gamma_t s_t(Y)); X)}  \left\{Y  - \frac{\mu_{t,0}(Y\exp(\gamma_t s_t(Y)); X)}{\mu_{t,0}(\exp(\gamma_t s_t(Y)); X)} \right\} \Bigg\} \\
    &\hspace{0.25cm} + \frac{\mathbb{I}(R=0, T=1-t)}{P(R=0)} \times \frac{\mu_{t,0}(Y\exp(\gamma_t s_t(Y)); X)}{\mu_{t,0}(\exp(\gamma_t s_t(Y)); X)}, 
\end{aligned}
\end{equation*}
and $a_{t, 0}(\widetilde{P}; \gamma_t)(X, R, T)$ is given by
\begin{equation*}
\begin{aligned}
    a_{t, 0}(\widetilde{P}; \gamma_t)(X, R, T)
    =& \frac{\I(R=0, T=t)}{P(R=0)} \ \mu_{t,0}(Y; X) + \frac{\I(R=0, T=1-t)}{P(R=0)}\ \frac{\mu_{t,0}(Y\exp(\gamma_t s_t(Y)); X)}{\mu_{t,0}(\exp(\gamma_t s_t(Y)); X)}
\end{aligned}
\end{equation*}}
\end{theorem}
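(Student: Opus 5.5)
The statement concerns $\psi_{t,0}(\widetilde{P};\gamma_t)=\E[\,\cdot\mid R=0]$, so the natural plan is to repeat the two-stage argument used for Theorem~\ref{EIF_miss}. First we derive the full-data nonparametric influence function with no missing outcomes. Then we map it to the observed-data model under (A5) using the augmentation/projection characterization from \citet{tsiatis2006semiparametric}.

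\textbf{Full-data step.} Write
\[
\psi_{t,0}(P;\gamma_t)=\frac{\E\big[\I(R=0)\,\kappa_t(X)\big]}{P(R=0)},\qquad
\kappa_t(X)=\mu_{t,0}(Y;X)\pi_{t,0}(X)+\frac{\mu_{t,0}(Ye^{\gamma_t s_t(Y)};X)}{\mu_{t,0}(e^{\gamma_t s_t(Y)};X)}\,\pi_{1-t,0}(X).
\]
This is a ratio of two means, so its influence function is
\[
\frac{\I(R=0)}{P(R=0)}\big\{\kappa_t(X)-\psi_{t,0}\big\}
\]
plus the contributions from perturbing $\pi_{t,0}$ and $F_{t,0}$. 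Those contributions are exactly the $g_0(X)$-weighted pieces already computed in the proof of Theorem~\ref{EIF_miss}, namely the $c(X)$ and $d_{t,0}(Y,X)$ components, each divided by $P(R=0)$. We use the same parametric submodels and score decomposition as there. The $\epsilon$- and $\delta$-derivatives now combine into the single term above, since $g_0(X)dF(x)$ is just the conditional law of $X$ given $R=0$.

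Summing these pieces, the $\pi$-score term $(1-R)\{T-\pi_{1,0}\}c(X)$ cancels the $\pi_{t,0}$ and $\pi_{1-t,0}$ weights in $\kappa_t$. The result is
\[
\upsilon^c_{t,0}(P;\gamma_t)(O)-\frac{\I(R=0)}{P(R=0)}\psi_{t,0}.
\]
On $\{R=0,T=t\}$ the $Y$-regression residual term restores $Y$. On $\{R=0,T=1-t\}$ only the tilted mean remains. This algebraic collapse is the step we would check most carefully.

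\textbf{Observed-data step.} By Theorem 7.2 of \citet{tsiatis2006semiparametric}, observed-data influence functions take the form
\[
\frac{M}{\eta_1}\,\phi^c+\Big(1-\frac{M}{\eta_1}\Big)a(X,R,T).
\]
The efficient choice of $a$ is the one making the function orthogonal to $\Lambda_{M\mid O}=\{(M-\eta_1)h(X,R,T)\}$, that is, $a=\E[\phi^c\mid X,R,T]$. Because $\I(R=0)/P(R=0)$ is a function of $(X,R,T)$, it can be taken outside the weighting. Conditioning on $(X,R,T)$ then proceeds term by term, exactly as in STEP~2 of the proof of Theorem~\ref{EIF_miss}:
\begin{itemize}
\item The tilted residual term has conditional mean zero given $X$, $R=0$, $T=t$, because $\E[e^{\gamma s}(Y-\mu(Ye^{\gamma s})/\mu(e^{\gamma s}))\mid\cdot]=0$.
\item The $Y$ term conditions to $\mu_{t,0}(Y;X)$.
\item The $T=1-t$ term is already a function of $X$.
\end{itemize}
This yields the stated $a_{t,0}$. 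Finally, (A5) gives $\mu=\widetilde\mu$ and $F=\widetilde F$, so the observed tangent space coincides with the full-data one, as in the earlier proof.

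\textbf{Main obstacle.} The hardest part is the bookkeeping for the normalization by $P(R=0)$. We must confirm that the centering term is $\I(R=0)\psi_{t,0}/P(R=0)$ rather than $\psi_{t,0}$, and that no extra score in the direction of $R$ survives. Our first check is that the candidate has mean zero and reproduces every pathwise derivative $\partial_\theta\psi_{t,0}$ along the submodels, in the same manner as the matching step of Theorem~\ref{EIF_miss}.
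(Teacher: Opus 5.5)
Your proposal is correct and follows the paper's route. The paper only says this influence function is derived ``similarly'' to Theorem~\ref{EIF_miss}, i.e.\ by the same submodel matching for the full-data influence function followed by the augmentation $\frac{M}{\eta_1}\phi^c+(1-\frac{M}{\eta_1})a$ with $a=\E[\phi^c\mid X,R,T]$; your ratio-of-means reuse of $c(X)$ and $d_{t,0}$, rescaled by $1/P(R=0)$, is a compact way of doing that matching, and the centering you flag resolves at once because $\I(R=0)\psi_{t,0}/P(R=0)$ depends only on the always-observed $(X,R,T)$ and so passes through the augmentation unchanged.
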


We note that the influence function for $\psi_{t}(\widetilde{P}; \gamma_t)$ is a weighted combination of the influence functions for $\psi_{t, 1}(\widetilde{P})$ and $\psi_{t, 0}(\widetilde{P})$:
\begin{align*}
    \phi_t(\widetilde{P}; \gamma_t)(\widetilde{O})=&\phi_{t, 1}(\widetilde{P})(\widetilde{O})P(R=1)+\psi_{t, 1}(\widetilde{P})\left(\I(R=1)-P(R=1)\right)\\
    &\hspace{0.25cm}+\phi_{t, 0}(\widetilde{P}; \gamma_t)(\widetilde{O})P(R=0)+\psi_{t, 0}(\widetilde{P}; \gamma_t)\left(\I(R=0)-P(R=0)\right),
\end{align*}
where $\phi_{t, 1}(\widetilde{P})(\widetilde{O})$ and $\phi_{t, 0}(\widetilde{P}; \gamma_t)(\widetilde{O})$ are influence functions of $\psi_{t, 1}(\widetilde{P})$ and $\psi_{t, 0}(\widetilde{P}; \gamma_t)$, respectively. This complies with the product rule of influence functions in \citet{kennedy2023semiparametricdoublyrobusttargeted}.

Our one-step, split sample estimators for $\psi_{t, 1}(\widetilde{P})$ and $\psi_{t, 0}(\widetilde{P}; \gamma_t)$ are:
\begin{equation*}
    	\widehat{\psi}_{t, 1} = \frac{1}{K} \sum_{k=1}^K \underbrace{\left\{ \psi_{t, 1} \left( \widehat{\widetilde{P}}^{(-k)}\right) + \frac{1}{n_k} \sum_{i: S_i=k} \phi_{t, 1} \left(\widehat{\widetilde{P}}^{(-k)}\right) (\widetilde{O}_i) \right\}}_{\mbox{$k$th Split One-Step Estimator } \left(\widehat{\psi}_{t, 1}^{(k)}\right)}\,
\end{equation*}
and 
\begin{equation*}
    	\widehat{\psi}_{t, 0}(\gamma_t) = \frac{1}{K} \sum_{k=1}^K \underbrace{\left\{ \psi_{t, 0} \left( \widehat{\widetilde{P}}^{(-k)}; \gamma_t \right) + \frac{1}{n_k} \sum_{i: S_i=k} \phi_{t, 0} \left(\widehat{\widetilde{P}}^{(-k)} ; \gamma_t\right) (\widetilde{O}_i) \right\}}_{\mbox{$k$th Split One-Step Estimator } \left(\widehat{\psi}_{t, 0}^{(k)}(\gamma_t)\right)}. 
\end{equation*}

We apply the same truncation procedure as in (\ref{truncate_estimator}), (\ref{trunc1}) and ($\ref{trunc2}$). We additionally truncate the plug-in estimators $\psi_{t, 1} \left( \widehat{\widetilde{P}}^{(-k)}\right)$ and $\psi_{t, 0} \left( \widehat{\widetilde{P}}^{(-k)}; \gamma_t\right)$ to $\psi_{t, 1}^\dagger \left( \widehat{\widetilde{P}}^{(-k)}\right)$ and $\psi_{t, 0}^\dagger \left( \widehat{\widetilde{P}}^{(-k)}; \gamma_t\right)$ in the same fashion. Finally, we obtain the $k$th split, one-step, truncated estimators $\widehat{\psi}_{t, 1}^{(k)\dagger}$ and $\widehat{\psi}_{t, 0}^{(k)\dagger}(\gamma_t)$:
\begin{equation*}
    	\widehat{\psi}_{t, 1}^{(k)\dagger} = \psi_{t, 1}^\dagger \left( \widehat{\widetilde{P}}^{(-k)}\right)+\frac{1}{n_k} \sum_{i: S_i=k}\upsilon_{t, 1}^\dagger\left(\widehat{\widetilde{P}}^{(-k)}\right)(\widetilde{O}_i)-\frac{\I(R_i=1)}{P^{(-k)}_{n/n_k}(R=1)}\psi_{t, 1}^\dagger \left( \widehat{\widetilde{P}}^{(-k)}\right)
\end{equation*}
\begin{equation*}
    	\widehat{\psi}_{t, 0}^{(k)\dagger}(\gamma_t) = \psi_{t, 0}^\dagger \left( \widehat{\widetilde{P}}^{(-k)}; \gamma_t\right)+\frac{1}{n_k} \sum_{i: S_i=k}\upsilon_{t, 0}^\dagger\left(\widehat{\widetilde{P}}^{(-k)}; \gamma_t\right)(\widetilde{O}_i)-\frac{\I(R_i=0)}{P^{(-k)}_{n/n_k}(R=0)}\psi_{t, 0}^\dagger \left( \widehat{\widetilde{P}}^{(-k)}; \gamma_t\right).
\end{equation*}

We estimate the variance of $\widehat{\psi}^\dagger_{t, 1}$ by
$$\frac{1}{nK}\sum_{k=1}^K\left\{\frac{1}{n_k-1}\sum_{i: S_i=k}\left\{\upsilon_{t, 1}^\dagger \left(\widehat{\widetilde{P}}^{(-k)}\right) (\widetilde{O}_i)-\frac{\I(R_i=1)}{P^{(-k)}_{n/n_k}(R=1)}\widehat{\psi}_{t, 1}^{(k)\dagger}\right\}^2\right\},$$
and the variance of $\widehat{\psi}^\dagger_{t, 0}(\gamma_t)$ by
$$\frac{1}{nK}\sum_{k=1}^K\left\{\frac{1}{n_k-1}\sum_{i: S_i=k}\left\{\upsilon^\dagger_{t, 0} \left(\widehat{\widetilde{P}}^{(-k)} ; \gamma_t\right) (\widetilde{O}_i)-\frac{\I(R_i=0)}{P^{(-k)}_{n/n_k}(R=0)}\widehat{\psi}_{t, 0}^{(k)\dagger}(\gamma_t)\right\}^2\right\}.$$

\section{Estimation of $\psi_t'(\widetilde{P}; \gamma_t')$ \& $\psi_{t, 0}'(\widetilde{P}; \gamma_t')$ under Assumptions (A1-A3), (A5), (\ref{exchangeability})}
\label{app:model_2}

\begin{theorem}\label{EIF_miss_ex}
An influence function for $\psi_t'(\widetilde{P}; \gamma_t')$ in \eqref{eq:parameter_id_ex}, denoted by $\phi_t'(\widetilde{P}; \gamma_t')(\widetilde{O})$, is:
    \begin{align*}
    \phi_t'(\widetilde{P}; \gamma_t')(\widetilde{O})
    &=  \underbrace{\upsilon_{t, 1}(\widetilde{P})(\widetilde{O})\times P(R=1)+\upsilon_{t, 0}'(\widetilde{P}; \gamma_t')(\widetilde{O})\times P(R=0)}_{\upsilon_{t}'(\widetilde{P}; \gamma_t')(\widetilde{O})} - \psi_t'(\widetilde{P}; \gamma_t') \ , 
\end{align*}
where $\upsilon_{t, 1}(\widetilde{P})(\widetilde{O})$ is the same as in Theorem \ref{EIF_miss}, and $\upsilon_{t, 0}'(\widetilde{P}; \gamma_t')(\widetilde{O})$ is given by
\begin{equation*}
    \begin{aligned}
   \upsilon_{t, 0}'(\widetilde{P}; \gamma_t')(\widetilde{O})
    =& \frac{M}{\eta_1(X, R, T)}\Bigg[ \frac{\mathbb{I}(T=t, R=1)}{\pi_{t, 1}(X)P(R=0)} \times \frac{g_{0}(X)}{g_{1}(X)} \times \frac{\exp(\gamma_t' s_t(Y))}{\mu_{t,1}(\exp(\gamma_t' s_t(Y)); X)}  \\
    &\hspace{6cm}\times\left\{Y  - \frac{\widetilde{\mu}_{t,1}(Y\exp(\gamma_t' s_t(Y)); X)}{\widetilde{\mu}_{t,1}(\exp(\gamma_t' s_t(Y)); X)} \right\}\\
    &\hspace{2.5cm}+\frac{\I(R=0)}{P(R=0)} \times \frac{\widetilde{\mu}_{t,1}(Y\exp(\gamma_t' s_t(Y)); X)}{\widetilde{\mu}_{t,1}(\exp(\gamma_t' s_t(Y)); X)}\Bigg]\\
    &\hspace{0.25cm}+\Big\{ 1 - \frac{M}{\eta_{1}(X, R, T)}  \Big\}\times \frac{\I(R=0)}{P(R=0)} \times\frac{\widetilde{\mu}_{t,1}(Y\exp(\gamma_t' s_t(Y)); X)}{\widetilde{\mu}_{t,1}(\exp(\gamma_t' s_t(Y)); X)} \ .
\end{aligned}
\end{equation*}
\label{thm:EIF_miss_ex}
\end{theorem}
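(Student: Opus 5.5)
We follow the same two-stage route used in the proof of Theorem~\ref{EIF_miss}. First we derive the full-data nonparametric influence function for $\psi_t'(P;\gamma_t')$ with no missing outcomes, working under (A1)--(A2), (A3') and (\ref{exchangeability}). Then we lift it to the observed-data model under (A5) by inverse weighting by $\eta_1$ and adding the augmentation term.

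\textbf{Full-data functional and pathwise derivatives.} Conditioning on $X$ and using (\ref{exchangeability}), the full-data functional is
\[
\psi_t'(P;\gamma_t')=\int \Big\{ g_1(x)\mu_{t,1}(Y;x)+g_0(x)\,\frac{\mu_{t,1}(Y e^{\gamma_t' s_t(Y)};x)}{\mu_{t,1}(e^{\gamma_t' s_t(Y)};x)}\Big\}\,dF(x).
\]
We use the same parametric submodels for $F$, $g_r$, $\pi_{t,r}$ and $F_{t,r}$ as before. Differentiating in each direction gives the following.

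The $X$-score yields $a(X)=g_1\mu_{t,1}+g_0\,\mu_{t,1}(Ye^{\gamma' s})/\mu_{t,1}(e^{\gamma' s})-\psi_t'$.

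The $R$-score yields $b(X)=\mu_{t,1}(Y;X)-\mu_{t,1}(Ye^{\gamma' s};X)/\mu_{t,1}(e^{\gamma' s};X)$.

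The treatment scores contribute nothing, so $c=w=0$, and only $F_{t,1}$ is perturbed. Its derivative has the form
\[
\int g_1\!\int y k\,dF_{t,1} + \int g_0 \Big[\frac{\int y e^{\gamma' s}k\,dF_{t,1}}{\mu_{t,1}(e^{\gamma' s})}-\frac{\mu_{t,1}(Ye^{\gamma' s})\int e^{\gamma' s}k\,dF_{t,1}}{\mu_{t,1}(e^{\gamma' s})^2}\Big].
\]
Matching this through $\E[\I(T=t,R=1)d_{t,1}k_{t,1}]=E[g_1\pi_{t,1}\E(d_{t,1}k_{t,1}\mid\cdot)]$ gives
\[
d_{t,1}=\frac{Y-\mu_{t,1}}{\pi_{t,1}}+\frac{g_0}{g_1\pi_{t,1}}\,\frac{e^{\gamma' s_t(Y)}}{\mu_{t,1}(e^{\gamma' s})}\Big\{Y-\frac{\mu_{t,1}(Ye^{\gamma' s})}{\mu_{t,1}(e^{\gamma' s})}\Big\}.
\]
This function is mean zero given $(T=t,R=1,X)$ by the same cancellation shown in Step~2 of Theorem~\ref{EIF_miss}.

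\textbf{Collecting terms.} Summing $a+(R-g_1)b+\I(T=t,R=1)d_{t,1}$ produces
\[
\I(R=1)\Big\{\tfrac{\I(T=t)}{\pi_{t,1}}(Y-\mu_{t,1})+\mu_{t,1}\Big\}+\I(R=0)\,\frac{\mu_{t,1}(Ye^{\gamma' s})}{\mu_{t,1}(e^{\gamma' s})}+\I(T=t,R=1)\frac{g_0}{g_1\pi_{t,1}}(\cdots)-\psi_t'.
\]
Rewriting each piece as $P(R=r)\times\{\,\cdot\,/P(R=r)\}$ reproduces the $R=1$ part $\upsilon_{t,1}$ and the complete-case version of $\upsilon_{t,0}'$. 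The same tangent-space argument as before, with $\partial\psi/\partial\beta=0$, shows that replacing $\pi_{t,1}(X)$ by $\pi_{t,1}$ under (A3) gives the efficient version.

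\textbf{Missing outcomes.} We apply Theorem 7.2 of \citet{tsiatis2006semiparametric} exactly as in Steps~1--2 of the proof of Theorem~\ref{EIF_miss}. The observed-data influence function is $\frac{M}{\eta_1}\phi^c+(1-\frac{M}{\eta_1})a_t(X,R,T)$. Orthogonality to $\Lambda_{M|O}$ forces $a_t=\E[\phi^c\mid X,R,T]$. Under (A5), $\mu=\widetilde\mu$, so this equals $\I(R=1)\widetilde\mu_{t,1}+\I(R=0)\widetilde\mu_{t,1}(Ye^{\gamma' s})/\widetilde\mu_{t,1}(e^{\gamma' s})-\psi_t'$. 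Here the tilted residual term has conditional mean zero, and the $\I(R=1,T=1-t)$ cell gets only $\widetilde\mu_{t,1}$, which matches $\upsilon_{t,1}$. Substituting gives the displayed $\upsilon_{t,0}'$.

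\textbf{Main obstacle.} The main obstacle is the $\eta_{t,1}$ derivative. Its second, tilted part must be re-expressed as an inner product over the $R=1,T=t$ stratum. This introduces the density-ratio weight $g_0/(g_1\pi_{t,1})$, and we must check that the centering by $\mu_{t,1}(Ye^{\gamma' s})/\mu_{t,1}(e^{\gamma' s})$ makes the resulting function a valid element of the $Y$-tangent space.
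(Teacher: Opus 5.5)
Your proposal is correct and follows essentially the route the paper intends; the paper gives no separate proof here and says only that the result is derived ``similarly'' to Theorem~\ref{EIF_miss}. That route is pathwise differentiation along the same submodels (only the $X$-score, the $R$-score and the $F_{t,1}$ direction contribute), followed by the Tsiatis Theorem~7.2 augmentation with $a_t=\E[\phi^c\mid X,R,T]$ under (A5), and your expressions for $b(X)$, for $d_{t,1}$ (including the $g_0/(g_1\pi_{t,1})$ weight) and for the augmentation term reproduce the displayed $\upsilon_{t,0}'$.
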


\begin{theorem}\label{EIF_miss2_ex}
The non-parametric influence function for $\psi_{t, 0}'(\widetilde{P}; \gamma_t')$ in \eqref{eq:parameter_id_ex}, denoted by $\phi_{t, 0}'(\widetilde{P}; \gamma_t')(\widetilde{O})$, is:
\begin{align*}
    \phi_{t, 0}'(\widetilde{P}; \gamma_t')(\widetilde{O})
    &=  \underbrace{\frac{M}{\eta_{1}(X, R, T)}  \phi^{c'}_{t, 0}(\widetilde{P}; \gamma_t')(O) +  \Big\{ 1 - \frac{M}{\eta_{1}(X, R, T)}  \Big\} a_{t, 0}'(\widetilde{P}; \gamma_t')(X, R, T)}_{\upsilon_{t, 0}'(\widetilde{P}; \gamma_t')(\widetilde{O})}- \frac{\I(R=0)}{P(R=0)}\psi_{t, 0}'(\widetilde{P}; \gamma_t')  \ , 
\end{align*}
where $\phi^{c'}_{t, 0}(\widetilde{P}; \gamma_t')(O)$ is given by 
{\small 
\begin{equation*}
\begin{aligned}
    \phi^{c'}_{t, 0}(P; \gamma_t')(O) 
    =& \frac{\mathbb{I}(T=t, R=1)}{\pi_{t, 1}(X)P(R=0)} \times \frac{g_{0}(X)}{g_{1}(X)} \times \frac{\exp(\gamma_t' s_t(Y))}{\mu_{t,1}(\exp(\gamma_t' s_t(Y)); X)}  \left\{Y  - \frac{\mu_{t,1}(Y\exp(\gamma_t' s_t(Y)); X)}{\mu_{t,1}(\exp(\gamma_t' s_t(Y)); X)} \right\} \\
    &\hspace{0.25cm} + \frac{\I(R=0)}{P(R=0)} \times \frac{\mu_{t,1}(Y\exp(\gamma_t' s_t(Y)); X)}{\mu_{t,1}(\exp(\gamma_t' s_t(Y)); X)} ,
\end{aligned}
\end{equation*}
}
and $a_{t, 0}'(\widetilde{P}; \gamma_t')(X, R, T)$ is given by 
{\small 
\begin{equation*} 
\begin{aligned}
   a_{t, 0}'(\widetilde{P}; \gamma_t')(X, R, T) &= \frac{\I(R=0)}{P(R=0)}\times\frac{\mu_{t,1}(Y\exp(\gamma_t' s_t(Y)); X)}{\mu_{t,1}(\exp(\gamma_t' s_t(Y)); X)} . 
\end{aligned}
\end{equation*}
}
\end{theorem}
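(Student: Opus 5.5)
The statement to prove is Theorem~\ref{EIF_miss2_ex}: the nonparametric influence function for $\psi_{t,0}'(\widetilde{P};\gamma_t')$. I would mirror the two-stage argument used for Theorem~\ref{EIF_miss}. First derive the full-data (no missingness) influence function $\phi^{c'}_{t,0}$ by pathwise differentiation. Then lift it to the observed-data influence function via the augmented IPW construction under (A5).

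\textbf{Full-data stage.} Write the target as
\[
\psi_{t,0}'(P;\gamma_t') = \frac{1}{P(R=0)}\int g_0(x)\,\frac{\mu_{t,1}(Y e^{\gamma_t' s_t(Y)};x)}{\mu_{t,1}(e^{\gamma_t' s_t(Y)};x)}\,dF(x).
\]
I would use the same parametric submodels as in Appendix~\ref{app:proofs_EIF} for $F$, $g_r$, $\pi_{t,r}$ and $F_{t,r}$. Differentiating gives the following pieces.

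\begin{itemize}
\item The $F$ and $g$ perturbations, together with the derivative of the normalizing constant $P(R=0)$, combine into the term $\frac{\I(R=0)}{P(R=0)}\{\rho(X)-\psi_{t,0}'\}$. Here $\rho(X)$ denotes the tilted ratio.
\item The $\pi_{t,1}$ perturbation contributes nothing.
\item The perturbation of $F_{t,1}$ contributes $\int g_0(x)\,\frac{\partial}{\partial\eta}\rho\,dF(x)/P(R=0)$. The derivative of the ratio is handled by the quotient rule, exactly as in the $\eta_{t,0}$ derivative in the earlier proof.
\end{itemize}

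To represent the $F_{t,1}$ term as $\E[\I(T=t,R=1)\,d(Y,X)\,k(Y,X)]$, I would divide by the density $g_1(X)\pi_{t,1}(X)$ of the conditioning event. This yields the weight $\frac{g_0(X)}{g_1(X)\pi_{t,1}(X)P(R=0)}$ multiplying $\frac{e^{\gamma_t' s_t(Y)}}{\mu_{t,1}(e^{\gamma_t' s_t};X)}\{Y-\rho(X)\}$. This function has conditional mean zero given $(T=t,R=1,X)$, so it lies in the correct score space. Summing the pieces reproduces $\phi^{c'}_{t,0}-\frac{\I(R=0)}{P(R=0)}\psi_{t,0}'$.

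\textbf{Missing-data stage.} I would follow STEP 1 and STEP 2 of Appendix~\ref{app:proofs_EIF} essentially verbatim. Theorem 7.2 of \citet{tsiatis2006semiparametric} gives the observed-data orthogonal complement as $\frac{M}{\eta_1}\phi^{c'}+a(X,R,T)(1-M/\eta_1)$. Orthogonality to $\Lambda_{M|O}=\{(M-\eta_1)h(X,R,T)\}$ forces $a=\E[\phi^{c'}\mid X,R,T]$.

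Computing this conditional expectation is the one step needing care. The RCT tilted-residual term has conditional mean zero given $(X,R=1,T=t)$, because MAR lets us replace $\mu$ by $\widetilde\mu$. It therefore drops out, just as the analogous term did before. The $R=0$ term is a function of $X$ only. This gives $a'_{t,0}=\frac{\I(R=0)}{P(R=0)}\rho(X)$, with the centering constant carried along. Finally, (A5) identifies each $\mu_{t,1}(\cdot;X)$ with $\widetilde\mu_{t,1}(\cdot;X)$.

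\textbf{Main obstacle.} I expect the quotient-rule bookkeeping in the $F_{t,1}$ derivative and the correct density reweighting $g_0/(g_1\pi_{t,1})$ to be the main obstacles. Both are routine, however, given the template already worked out for $\eta_{t,0}$.
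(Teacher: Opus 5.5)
Your proposal is correct and follows the template the paper relies on; the paper gives no separate proof of this theorem and mirrors the proof of Theorem~\ref{EIF_miss}. That template is pathwise differentiation along the same submodels, with the $F_{t,1}$ score reweighted by $g_0(X)/\{g_1(X)\pi_{t,1}(X)P(R=0)\}$, followed by the Tsiatis Theorem 7.2 augmentation with $a=\E[\phi^{c'}\mid X,R,T]$. Your two checks complete the argument and give the stated form: the tilted RCT residual has conditional mean zero, and the centering term $\I(R=0)\psi_{t,0}'/P(R=0)$ passes through the augmentation unchanged.
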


Our one-step, split sample estimators are:
\begin{equation*}
    	\widehat{\psi}_t'(\gamma_t') = \frac{1}{K} \sum_{k=1}^K \underbrace{\left\{\frac{1}{n_k} \sum_{i: S_i=k}\Bigg[ \upsilon_{t, 1} \left(\widehat{\widetilde{P}}^{(-k)}\right) (\widetilde{O}_i)\times P_{n/n_k}^{(-k)}(R=1) +\upsilon_{t, 0}' \left(\widehat{\widetilde{P}}^{(-k)} ; \gamma_t'\right) (\widetilde{O}_i)\times P_{n/n_k}^{(-k)}(R=0)\Bigg]\right\}}_{\mbox{$k$th Split One-Step Estimator } \left(\widehat{\psi}_t^{(k)'}(\gamma_t')\right)}\, 
\end{equation*}
and 
\begin{equation*}
    	\widehat{\psi}_{t, 0}'(\gamma_t') = \frac{1}{K} \sum_{k=1}^K \underbrace{\left\{ \psi_{t, 0}' \left( \widehat{\widetilde{P}}^{(-k)}; \gamma_t' \right) + \frac{1}{n_k} \sum_{i: S_i=k} \phi_{t, 0}' \left(\widehat{\widetilde{P}}^{(-k)} ; \gamma_t'\right) (\widetilde{O}_i) \right\}}_{\mbox{$k$th Split One-Step Estimator } \left(\widehat{\psi}_{t, 0}^{(k)'}(\gamma_t')\right)}. 
\end{equation*}

The $k$th split, one-step, truncated estimator $\widehat{\psi}_t^{(k)'\dagger}(\gamma_t')$ is 
\begin{equation*}
\begin{aligned}
    	\widehat{\psi}_t^{(k)'\dagger}(\gamma_t') = \frac{1}{n_k} \sum_{i: S_i=k}&\Bigg[\upsilon_{t, 1}^\dagger\left(\widehat{\widetilde{P}}^{(-k)}\right)(\widetilde{O}_i)\times P_{n/n_k}^{(-k)}(R=1)\\
        &+\underbrace{\min\left\{\left|\upsilon_{t, 0}'\left(\widehat{\widetilde{P}}^{(-k)} ; \gamma_t'\right)(\widetilde{O}_i)\right|, \widehat\tau_{t, 0}^{'(k)}\right\}\text{sign}\left\{\upsilon_{t, 0}'\left(\widehat{\widetilde{P}}^{(-k)} ; \gamma_t'\right)(\widetilde{O}_i)\right\}}_{\upsilon_{t, 0}^{'\dagger}\left(\widehat{\widetilde{P}}^{(-k)} ; \gamma_t'\right)(\widetilde{O}_i)}\times P_{n/n_k}^{(-k)}(R=0)\Bigg]\ ,
\end{aligned}
\end{equation*}
where $\upsilon_{t, 1}^\dagger\left(\widehat{\widetilde{P}}^{(-k)}\right)(\widetilde{O}_i)$ is the same as in (\ref{truncate_estimator}) and $\widehat\tau_{t, 0}^{'(k)}$ is the solution to 
\begin{equation*}
    	\frac{1}{n_k} \sum_{i: S_i=k}\frac{\min\left\{\upsilon_{t, 0}'\left(\widehat{\widetilde{P}}^{(-k)} ; \gamma_t'\right)(\widetilde{O}_i)^2, \left(\widehat\tau_{t, 0}^{'(k)}\right)^2\right\}}{\left(\widehat\tau_{t, 0}^{'(k)}\right)^2}=\frac{\log(n_k)}{n_k}.
\end{equation*}
And the $k$th split, one-step, truncated estimator $\widehat{\psi}_{t, 0}^{'(k)\dagger}$ is
\begin{equation*}
    	\widehat{\psi}_{t, 0}^{'(k)\dagger} = \psi_{t, 0}' \left( \widehat{\widetilde{P}}^{(-k)}; \gamma_t'\right)+\frac{1}{n_k} \sum_{i: S_i=k}\upsilon_{t, 0}^{'\dagger}\left(\widehat{\widetilde{P}}^{(-k)}; \gamma_t'\right)(\widetilde{O}_i)-\frac{\I(R_i=0)}{P^{(-k)}_{n/n_k}(R=0)}\psi_{t, 0}' \left( \widehat{\widetilde{P}}^{(-k)}; \gamma_t'\right).
\end{equation*}
We estimate the variance of $\widehat{\psi}_t^{'\dagger}(\gamma_t')$ by
$$\frac{1}{nK}\sum_{k=1}^K\left\{\frac{1}{n_k-1}\sum_{i: S_i=k}\left\{\upsilon_t^\dagger\left(\widehat{\widetilde{P}}^{(-k)} ; \gamma_t\right)(\widetilde{O}_i)-\widehat{\psi}_t^{(k)\dagger}(\gamma_t)\right\}^2\right\},$$
where 
$$\upsilon_t^\dagger\left(\widehat{\widetilde{P}}^{(-k)} ; \gamma_t\right)(\widetilde{O}_i)=\upsilon_{t, 1}^\dagger\left(\widehat{\widetilde{P}}^{(-k)}\right)(\widetilde{O}_i)\times P_{n/n_k}^{(-k)}(R=1)+\upsilon_{t, 0}^\dagger\left(\widehat{\widetilde{P}}^{(-k)} ; \gamma_t\right)(\widetilde{O}_i)\times P_{n/n_k}^{(-k)}(R=0)\ .$$

\begin{lemma}\label{onetoone}
Given $s_t(Y)$ is a monotone function of $Y$ (either increasing or decreasing) and $|Y|$ is bounded, there is a unique mapping between $\gamma_t'$ and $\gamma_t$ through their influence on $\E[Y(t)]$. That is, for each $\gamma_t$, there exist one and only one $\gamma_t'$ that minimize $\{\psi_t(\widetilde{P}; \gamma_t)-\psi_t'(\widetilde{P}; \gamma_t')\}^2$. Similarly, for each $\gamma_t'$, there exist one and only one $\gamma_t$ that minimize $\{\psi_t(\widetilde{P}; \gamma_t)-\psi_t'(\widetilde{P}; \gamma_t')\}^2$. 
\end{lemma}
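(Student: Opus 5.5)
The approach is to show that $\gamma_t\mapsto\psi_t(\widetilde{P};\gamma_t)$ and $\gamma_t'\mapsto\psi_t'(\widetilde{P};\gamma_t')$ are continuous and strictly monotone in the same direction, and then to read the unique minimizer off these two monotone curves. The key tool is the exponential tilt. For a fixed distribution $G$ of $Y$ (here $\widetilde{F}_{t,0}(\cdot\mid x)$ or $\widetilde{F}_{t,1}(\cdot\mid x)$) and a fixed monotone $s_t$, define
\[
m_G(\gamma)=\frac{\int y\exp\{\gamma s_t(y)\}dG(y)}{\int \exp\{\gamma s_t(y)\}dG(y)} .
\]
Because $|Y|$ is bounded, we may differentiate under the integral. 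This gives $m_G'(\gamma)=\mathrm{Cov}_{G_\gamma}(Y,s_t(Y))$, where $G_\gamma$ denotes the tilted law. If $s_t$ is increasing, this covariance is nonnegative, since $Y$ and $s_t(Y)$ are comonotone. It is strictly positive unless $Y$ is degenerate under $G$, and we can use Chebyshev's association inequality to make that precise. If $s_t$ is decreasing, the sign flips uniformly.

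Next we transfer this to the two functionals. From \eqref{eq:parameter_id1}, $\psi_t(\widetilde{P};\gamma_t)$ equals a constant in $\gamma_t$ plus $P(R=0)\,\E[\pi_{1-t,0}(X)\,m_{\widetilde{F}_{t,0}(\cdot|X)}(\gamma_t)\mid R=0]$. The weights $\pi_{1-t,0}(X)>0$ by (A2). Differentiating under the expectation, which is again justified by boundedness, shows that $\psi_t$ is continuous and strictly monotone in $\gamma_t$, provided $Y$ is nondegenerate given $(T=t,R=0,X)$ on a set of positive measure. The same argument applied to \eqref{eq:parameter_id_ex}, with $G=\widetilde{F}_{t,1}(\cdot\mid X)$ (recall $\mu_{t,1}=\widetilde{\mu}_{t,1}$ under (A5)), shows that $\psi_t'$ is continuous and strictly monotone in $\gamma_t'$ in the same direction.

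With both monotonicity facts in hand, the uniqueness argument runs as follows. Fix $\gamma_t$ and let $c=\psi_t(\widetilde{P};\gamma_t)$. The map $\gamma_t'\mapsto\psi_t'(\gamma_t')$ is strictly monotone and continuous, with limits $L_-<L_+$ as $\gamma_t'\to\mp\infty$. These limits are the conditional essential infimum and supremum of $Y$, averaged over $X$. If $c\in(L_-,L_+)$, the intermediate value theorem together with injectivity gives exactly one $\gamma_t'$ with $\psi_t'=c$, and this point is the unique minimizer of the squared difference. The symmetric statement, fixing $\gamma_t'$, follows by the same reasoning with the roles of the two maps swapped.

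The main obstacle is the boundary case $c\notin(L_-,L_+)$. There the infimum of the squared difference is approached only as $\gamma_t'\to\pm\infty$, so no finite minimizer exists; the text's remark that the matching $\gamma_t'$ is ``possibly infinite'' reflects exactly this. I would first try to show that $c$ always lies strictly inside the range. The difficulty is that $\psi_t$ mixes the OBS outcome law for $T=t$ with tilts of it, while $L_\pm$ are built from the RCT outcome law, so the ranges need not nest. I therefore expect to state the result on the extended real line: uniqueness holds in $[-\infty,\infty]$ by strict monotonicity, and a finite minimizer exists when the ranges overlap. A second, minor obstacle is the strict positivity of the covariance, which requires assuming nondegeneracy of $Y$ given $X$ in the relevant strata.
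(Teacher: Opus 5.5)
Your proposal is correct and follows essentially the paper's route. The paper also proves that $\psi_t(\widetilde{P};\gamma_t)$ and $\psi_t'(\widetilde{P};\gamma_t')$ are strictly monotone, by showing that the numerator of the derivative equals, up to a factor of two, $\E[\,e^{\gamma_t s_t(Y)+\gamma_t s_t(Y^\dagger)}(Y-Y^\dagger)(s_t(Y)-s_t(Y^\dagger))\,]$ for an independent copy $Y^\dagger$. That is your tilted-covariance/Chebyshev step. It then reads off the minimizer by a case split on whether the ranges coincide, overlap, or are disjoint, with $\pm\infty$ allowed.

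Your version is, if anything, more careful: the paper ignores ties $Y=Y^\dagger$ and the nondegeneracy needed for strictness, and leaves the extended-real-line reading implicit in its overlapping case. One wording fix: a finite minimizer exists exactly when $\psi_t(\widetilde{P};\gamma_t)\in(L_-,L_+)$, not whenever the two ranges merely overlap.
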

\textbf{Proof of Lemma \ref{onetoone}} 
Under Assumptions (A1)-(A5), we can write (\ref{eq:parameter_id1}) as
$$\psi_t(\widetilde{P}; \gamma_t) = \E \bigg[g_1(X) \, \widetilde{\mu}_{t,1}(Y;X)+g_0(X) \Big\{ \widetilde{\mu}_{t,0}(Y;X) \pi_{t,0}(X) + \frac{ \widetilde{\mu}_{t,0}(Y \! \exp\{  \gamma_t s_t(Y)\} ;X) }{\widetilde{\mu}_{t,0}( \exp\{  \gamma_t s_t(Y)\} ;X) } \pi_{1-t,0}(X)  \Big\} \bigg].$$
Under Assumptions (A1-A3), (A5) and (\ref{exchangeability}), we re-write (\ref{eq:parameter_id_ex}) as
$$\psi_t'(\widetilde{P}; \gamma_t')=\E\left[ g_1(X) \, \widetilde{\mu}_{t,1}(Y;X)+g_0(X) \frac{\mu_{t, 1}(Y \exp(\gamma_t' s_t(Y)); X)}{\mu_{t, 1}(\exp(\gamma_t' s_t(Y)); X)}\right].$$

To prove this lemma, we will first prove $\psi_t(\widetilde{P}; \gamma_t)$ and $\psi_t'(\widetilde{P}; \gamma_t')$ are strictly monotone with respect to $\gamma_t$ and $\gamma_t'$, respectively. 
\begin{align*}
    \frac{\partial\psi_t(\widetilde{P}; \gamma_t)}{\partial\gamma_t}=\E \bigg[\frac{g_0(X)\pi_{1-t,0}(X)}{\widetilde{\mu}_{t,0}( \exp\{  \gamma_t s_t(Y)\} ;X)}\bigg\{&\widetilde{\mu}_{t,0}(Ys_t(Y) \! \exp\{  \gamma_t s_t(Y)\} ;X)\widetilde{\mu}_{t,0}(\exp\{  \gamma_t s_t(Y)\} ;X)\\
    &-\widetilde{\mu}_{t,0}(Y\! \exp\{  \gamma_t s_t(Y)\} ;X)\widetilde{\mu}_{t,0}(s_t(Y)\! \exp\{  \gamma_t s_t(Y)\} ;X)\bigg\}\bigg]
\end{align*}
Take $Y^\dagger$ to be an independent observation of $Y$, and since $\widetilde{\mu}_{t, 0}(\cdot)$ are conditional expected values of functions of $Y$, we can show that
\begin{align*}
    &\E\left[\exp(s_t(Y)+s_t(Y^\dagger))(Y-Y^\dagger)(s_t(Y)-s_t(Y^\dagger))\mid T=t, R=0, X, M=1\right]\\
    =&\E\left[Ys_t(Y)\exp(s_t(Y)+s_t(Y^\dagger))\mid T=t, R=0, X, M=1\right]\\
    &-\E\left[Ys_t(Y^\dagger)\exp(s_t(Y)+s_t(Y^\dagger))\mid T=t, R=0, X, M=1\right]\\
    &-\E\left[Y^\dagger s_t(Y)\exp(s_t(Y)+s_t(Y^\dagger))\mid T=t, R=0, X, M=1\right]\\
    &+\E\left[Y^\dagger s_t(Y^\dagger)\exp(s_t(Y)+s_t(Y^\dagger))\mid T=t, R=0, X, M=1\right]\\
    =& \E\left[Ys_t(Y)\exp(s_t(Y))\mid T=t, R=0, X, M=1\right]\E\left[\exp(s_t(Y^\dagger))\mid T=t, R=0, X, M=1\right]\\
    &-\E\left[Y\exp(s_t(Y))\mid T=t, R=0, X, M=1\right]\E\left[s_t(Y^\dagger)\exp(s_t(Y^\dagger))\mid T=t, R=0, X, M=1\right]\\
    &-\E\left[s_t(Y)\exp(s_t(Y))\mid T=t, R=0, X, M=1\right]\E\left[Y^\dagger \exp(s_t(Y^\dagger))\mid T=t, R=0, X, M=1\right]\\
    &+\E\left[\exp(s_t(Y))\mid T=t, R=0, X, M=1\right]\E\left[Y^\dagger s_t(Y^\dagger)\exp(s_t(Y^\dagger))\mid T=t, R=0, X, M=1\right]\\
    =&2\E\left[Ys_t(Y)\exp(s_t(Y))\mid T=t, R=0, X, M=1\right]\E\left[\exp(s_t(Y))\mid T=t, R=0, X, M=1\right]\\
    &-2\E\left[Y\exp(s_t(Y))\mid T=t, R=0, X, M=1\right]\E\left[s_t(Y)\exp(s_t(Y))\mid T=t, R=0, X, M=1\right]\\
    =&2\bigg\{\widetilde{\mu}_{t,0}(Ys_t(Y) \! \exp\{  \gamma_t s_t(Y)\} ;X)\widetilde{\mu}_{t,0}(\exp\{  \gamma_t s_t(Y)\} ;X)-\widetilde{\mu}_{t,0}(Y\! \exp\{  \gamma_t s_t(Y)\} ;X)\widetilde{\mu}_{t,0}(s_t(Y)\! \exp\{  \gamma_t s_t(Y)\} ;X)\bigg\}
\end{align*}
Since $s_t(Y)$ is a monotone function of $Y$ (either increasing or decreasing), $(Y-Y^\dagger)(s_t(Y)-s_t(Y^\dagger)$ is either strictly positive or strictly negative. And since $\exp(\cdot)>0$, $\widetilde{\mu}_{t,0}( \exp\{  \gamma_t s_t(Y)\} ;X)>0$. Together with $g_0(X)>0$ and $\pi_{1-t, 0}(X)>0$, we have $\frac{\partial\psi_t(\widetilde{P}; \gamma_t)}{\partial\gamma_t}>0$ or $<0$. We can similarly prove $\frac{\partial\psi_t'(\widetilde{P}; \gamma_t')}{\partial\gamma_t'}>0$ or $<0$. Thus, $\psi_t(\widetilde{P}; \gamma_t)$ and $\psi_t'(\widetilde{P}; \gamma_t')$ are strictly monotone with respect to $\gamma_t$ and $\gamma_t'$, respectively. 

Since $|Y|$ is bounded, $\psi_t(\widetilde{P}; \gamma_t)$ and $\psi_t'(\widetilde{P}; \gamma_t')$ are bounded as well. Under monotonicity of $\psi_t(\widetilde{P}; \gamma_t)$ and $\psi_t'(\widetilde{P}; \gamma_t')$, we will consider the following three scenario: 
\begin{itemize}
    \item Suppose $\psi_t(\widetilde{P}; \gamma_t)$ and $\psi_t'(\widetilde{P}; \gamma_t')$ have the same range. For each $\gamma_t$, there exist a unique $\gamma_t'$ where $\psi_t'(\widetilde{P}; \gamma_t')=\psi_t(\widetilde{P}; \gamma_t)$ and $\min\{(\psi_t(\widetilde{P}; \gamma_t)-\psi_t'(\widetilde{P}; \gamma_t'))^2\}=0$; and for each $\gamma_t'$, there exist a unique $\gamma_t$ where $\psi_t(\widetilde{P}; \gamma_t)=\psi_t'(\widetilde{P}; \gamma_t')$ and $\min\{(\psi_t(\widetilde{P}; \gamma_t)-\psi_t'(\widetilde{P}; \gamma_t'))^2\}=0$. 
    \item Suppose $\psi_t(\widetilde{P}; \gamma_t)$ and $\psi_t'(\widetilde{P}; \gamma_t')$ have overlapping range. Then, there is no guarantee $\min\{(\psi_t(\widetilde{P}; \gamma_t)-\psi_t'(\widetilde{P}; \gamma_t'))^2\}$ will reach 0. 

    If $\min\{(\psi_t(\widetilde{P}; \gamma_t)-\psi_t'(\widetilde{P}; \gamma_t'))^2\}=0$, we can use the above proof to show the unique mapping between $\gamma_t'$ and $\gamma_t$. 
    
    When $\min\{(\psi_t(\widetilde{P}; \gamma_t)-\psi_t'(\widetilde{P}; \gamma_t'))^2\}>0$, for each $\gamma_t$, $\{\psi_t(\widetilde{P}; \gamma_t)-\psi_t'(\widetilde{P}; \gamma_t')\}^2$ is monotone with respect to $\gamma_t'$. So for each $\gamma_t$, there must exist a unique $\gamma_t'$ that achieves $\min_{\gamma_t'}\{(\psi_t(\widetilde{P}; \gamma_t)-\psi_t'(\widetilde{P}; \gamma_t'))^2\}$. Similarly, for each $\gamma_t'$, there exist a $\gamma_t$ that achieve $\min_{\gamma_t}\{(\psi_t(\widetilde{P}; \gamma_t)-\psi_t'(\widetilde{P}; \gamma_t'))^2\}$.
    \item Suppose the range for $\psi_t(\widetilde{P}; \gamma_t)$ and $\psi_t'(\widetilde{P}; \gamma_t')$ do not overlap. For all $\gamma_t$, either $\gamma_t'=-\infty$ or $\infty$ would minimize $\{\psi_t(\widetilde{P}; \gamma_t)-\psi_t'(\widetilde{P}; \gamma_t')\}^2$, since $\psi(\widetilde{P}; \gamma_t')$ is monotone. Similarly, for all $\gamma_t'$, $\gamma_t=-\infty$ or $\infty$ would minimize $\{\psi_t(\widetilde{P}; \gamma_t)-\psi_t'(\widetilde{P}; \gamma_t')\}^2$. 
\end{itemize}
This conclude Lemma~\ref{onetoone}. 

\section{Induced estimates of $\E[Y(t)|1-t]$,  $\E[Y(t)|1-t, R=0]$ and $\E[Y(t)|1-t, R=1]$}
\label{app:data_analysis}

Since 
\begin{align*}
    \E[Y(t)|T=1-t] = \frac{\E[\I(T=1-t)Y(t)]}{P(T=1-t)}= \frac{\E[Y(t)]-\E[\I(T=t)Y(t)]}{P(T=1-t)},
\end{align*}
and 
\begin{align*}
    \E[\I(T=t)Y(t)] &= \E[\E(I(T=t)Y|R=1, X)g_1(X)+\E(\I(T=t)Y|R=0, X)g_0(X)]\\
    &= \E[\mu_{t, 1}(X)g_1(X)\pi_{t, 1}(X)+\mu_{t, 0}(X)g_0(X)\pi_{t, 0}(X)],
\end{align*}
we can estimate $\E[Y(t)|T=1-t]$ by 
\begin{align*}
\frac{\widehat{\psi}_t(\gamma_t)-\E_n[\widehat{\mu}_{t, 1}(X)\widehat{g}_1(X)\widehat{\pi}_{t, 1}(X)+\widehat{\mu}_{t, 0}(X)\widehat{g}_0(X)\widehat{\pi}_{t, 0}(X)]}{P_n(T=1-t)}
\end{align*}
where $P_n(T=t')$ is the observed proportion of individuals with $T=t'$ and $\E_n[\cdot]$ is the mean of a function of $\widehat{\widetilde{P}}$, taken across all $n$ individuals. 

Since 
\begin{align*}
    \E[Y(t)|T=t]=\E[Y(t)|T=t]=\frac{\E[I(T=t)Y(t)]}{P(T=t)},
\end{align*}
we can estimate $\E[Y(t)|T=t]$ by 
\begin{align*}
    \frac{\E_n[\widehat{\mu}_{t, 1}(X)\widehat{g}_1(X)\widehat{\pi}_{t, 1}(X)+\widehat{\mu}_{t, 0}(X)\widehat{g}_0(X)\widehat{\pi}_{t, 0}(X)]}{P_n(T=t)}.
\end{align*}

Similarly, we can estimate $\E[Y(t)|R=0, T=1-t]$ by
\begin{align*}
\frac{\widehat{\psi}_{t, 0}(\gamma_t)-\E_n[\widehat{\mu}_{t, 0}(X)\widehat{g}_0(X)\widehat{\pi}_{t, 0}(X)]/P_n(R=0)}{P_n(T=1-t|R=0)},
\end{align*}
and $\E[Y(t)|R=0, T=t]$ by
\begin{align*}
\frac{\E_n[\widehat{\mu}_{t, 0}(X)\widehat{g}_0(X)\widehat{\pi}_{t, 0}(X)]/P_n(R=0)}{P_n(T=t|R=0)}
\end{align*}
where $P_n(T=t'|R=0)$ is the observed proportion of individuals with $T=t'$ in the OBS arm. We can estimate $\E[Y(t)|R=1, T=1-t]$ by
\begin{align*}
\frac{\widehat{\psi}_{t, 1}-\E_n[\widehat{\mu}_{t, 1}(X)\widehat{g}_1(X)\widehat{\pi}_{t, 1}(X)]/P_n(R=1)}{P_n(T=1-t|R=1)}, 
\end{align*}
and $\E[Y(t)|R=1, T=t]$ by
\begin{align*}
\frac{\E_n[\widehat{\mu}_{t, 1}(X)\widehat{g}_1(X)\widehat{\pi}_{t, 1}(X)]/P_n(R=1)}{P_n(T=t|R=1)}
\end{align*}
where $P_n(T=t'|R=1)$ is the observed proportion of individuals with $T=t'$ in the RCT arm. 

\end{document}